\newif\iffinal
\finaltrue

\documentclass[acmsmall,nonacm]{acmart}
\usepackage[T1]{fontenc}
\usepackage[utf8]{inputenc}
\usepackage[english]{babel}

\def\RATIOP{205} \def\RATIOG{34}  \def\RATIO{616}   

\def\MetaI{\hyperref[th:metatheorem_nonpoly]{Meta-Theorem~I}\xspace}
\def\MetaII{\hyperref[th:metatheorem_nonpaddable]{Meta-Theorem~II}\xspace}
\def\MetaIII{\hyperref[th:metatheorem_paddable]{Meta-Theorem~III}\xspace}

\usepackage{graphicx,framed,calc,amsmath,amsthm,algorithm,algpseudocode}
\usepackage{wrapfig,etoolbox,anyfontsize,bm,textcase}
\usepackage{thmtools}
\usepackage{enumitem}
\usepackage{balance}
\usepackage[dvipsnames]{xcolor}
\usepackage{float,csquotes,xspace}

\usepackage{multicol,multirow,varwidth,outlines}

\usepackage{tikz}
\usetikzlibrary{math,patterns,positioning,arrows,arrows.meta,graphs,graphs.standard,decorations.pathmorphing,tikzmark,calc,hobby}
\tikzset{
    vertex/.style={circle, fill=white, line width=2pt, draw=tertiary}
}
\tikzset{
    fvertex/.style={circle, fill=primary!50!black, line width=2pt, draw=white}
}
\tikzset{
    edge/.style={very thick, draw=primary!50!black}
}

\iffinal
\long\def\cyril#1{}
\long\def\tim#1{}
\long\def\marthe#1{}
\long\def\alex#1{}
\else
\long\def\cyril#1{\textcolor{blue}{[Cyril: {#1}]}}
\long\def\tim#1{\textcolor{teal}{[Timothé: {#1}]}}
\long\def\marthe#1{\todo[read]{Marthe: {#1}}}
\long\def\alex#1{\textcolor{red}{[Alexandra: {#1}]}}
\fi

\definecolor{primary}{HTML}{255077}
\colorlet{secondary}{Dandelion}
\definecolor{tertiary}{HTML}{fe4e63}

\colorlet{mygreen}{YellowGreen}
\definecolor{myblack}{HTML}{255077}
\colorlet{myorange}{Dandelion}
\colorlet{myblue}{CornflowerBlue}
\colorlet{mypink}{Lavender}
\colorlet{mydarkblue}{RoyalBlue}
\definecolor{myred}{HTML}{fe4e63}

\newcommand{\eps}{\varepsilon}
\def\emo{{\eps}^{-1}}
\newcommand{\LOCAL}{\textsf{LOCAL}\xspace}

\DeclareMathOperator{\MDS}{MDS}
\DeclareMathOperator{\diam}{diam}
\DeclareMathOperator{\asdim}{asdim}
\DeclareMathOperator{\poly}{poly}
\DeclareMathOperator{\OPT}{OPT}

\NewDocumentCommand\set{sm}{\IfBooleanTF#1{\{{#2}\}}{\left\{{#2}\right\}}}
\NewDocumentCommand\ceil{sm}{\IfBooleanTF#1{\lceil{#2}\rceil}{\left\lceil{#2}\right\rceil}}
\NewDocumentCommand\floor{sm}{\IfBooleanTF#1{\lfloor{#2}\rfloor}{\left\lfloor{#2}\right\rfloor}}
\NewDocumentCommand\pare{sm}{\IfBooleanTF#1{({#2})}{\left({#2}\right)}}
\NewDocumentCommand\range{smm}{\IfBooleanTF#1{\set*{{#2},\dots,{#3}}}{\set{{#2},\dots,{#3}}}}
\newcommand{\card}[1]{\left|{#1}\right|}

\def\poly{\mathrm{poly}}

\def\sfA{\mathsf{A}}
\def\sfB{\mathsf{B}}

\def\sC{\mathscr{C}}
\def\sD{\mathscr{D}}
\def\sG{\mathscr{G}}

\def\bbN{\mathbb{N}}
\def\minDS{\textsc{Minimum Dominating Set}\xspace}

\let\le\leqslant
\let\ge\geqslant
\let\leq\leqslant
\let\geq\geqslant

\let\emptyset\varnothing

\def\CC{\mathfrak{C}}

\makeatletter

\renewenvironment{thmt@restatable}[3][]{\thmt@toks{}

  \stepcounter{thmt@dummyctr}

  \long\def\thmrst@store##1{\expandafter\gdef\csname #3\endcsname{\@ifstar{\thmt@thisistheonefalse\csname thmt@stored@#3\endcsname
      }{\thmt@thisistheonetrue\csname thmt@stored@#3\endcsname
      }}\expandafter\long\expandafter\gdef
      \csname thmt@stored@#3\expandafter\endcsname\expandafter{\begingroup
      \ifthmt@thisistheone
\thmt@rst@storecounters{#3}\else

\@ifpackageloaded{hyperref}{\expandafter\protected@edef\csname the#2\endcsname{\noexpand\hyperref[thmt@@#3]{\thmt@trivialref{thmt@@#3}{??}}}}{\expandafter\protected@edef\csname the#2\endcsname{\thmt@trivialref{thmt@@#3}{??}}}

\ifcsname r@thmt@@#3\endcsname\else
          \G@refundefinedtrue
        \fi

\expandafter\let\csname c@#2\endcsname=\c@thmt@dummyctr
        \expandafter\let\csname theH#2\endcsname=\theHthmt@dummyctr

\let\label=\thmt@gobble@label
        \let\ltx@label=\@gobble 

\def\thmt@restorecounters{}\@for\thmt@ctr:=\thmt@innercounters\do{\protected@edef\thmt@restorecounters{\thmt@restorecounters
            \protect\setcounter{\thmt@ctr}{\arabic{\thmt@ctr}}}}

\thmt@trivialref{thmt@@#3@data}{}\fi

\ifthmt@restatethis
        \thmt@restatethisfalse
      \else
        \csname #2\expandafter\endcsname\ifx\@nx#1\@nx\else[{#1}]\fi
      \fi

\ifthmt@thisistheone
        \label{thmt@@#3}\fi

##1\csname end#2\endcsname

\ifthmt@thisistheone\else\thmt@restorecounters\fi
      \endgroup
    }

\csname #3\expandafter\endcsname\ifthmt@thisistheone\else*\fi

\expandafter\end\expandafter{\@currenvir}}

  \thmt@collect@body\thmrst@store
}{}

\makeatother 
\acmYear{2026}\copyrightyear{2026}
\setcopyright{acmlicensed}
\acmConference[PODC '26]{x}{y}

\ccsdesc[500]{Theory of computation~Distributed computing models}
\ccsdesc[500]{Theory of computation~Graph algorithms analysis}

\def\myfootnotemark#1#2{{\let\thefootnote\relax\footnotemark\addtocounter{footnote}{-1}\hspace{-.5ex}}\textsuperscript{\hyperref[#1:#2]{\the\numexpr\thefootnote+#2}}}

\definecolor{ourresults}{RGB}{200,0,0}

\usepackage[nameinlink,capitalize]{cleveref}
\usepackage[norefs,nocites]{refcheck}
\makeatletter
\newcommand{\refcheckize}[1]{\expandafter\let\csname @@\string#1\endcsname#1\expandafter\DeclareRobustCommand\csname relax\string#1\endcsname[1]{\csname @@\string#1\endcsname{##1}\wrtusdrf{##1}}\expandafter\let\expandafter#1\csname relax\string#1\endcsname
}
\makeatother

\refcheckize{\cref}
\refcheckize{\Cref}

\newtheorem{theorem}{Theorem}[section]

\newtheorem{proposition}[theorem]{Proposition}

\newtheorem{claim}[theorem]{Claim}

\colorlet{grun}{mygreen}
\tikzstyle{snode}=[shape= circle, draw=myred, ultra thick, fill=myred!30]
\tikzstyle{special node} = [shape= circle, draw=blue, ultra thick, fill=blue!10]
\tikzstyle{blue node} = [special node]
\tikzstyle{yellow node} = [shape= circle, draw=gelb, ultra thick, fill=gelb!30]
\tikzstyle{red node} = [snode]
\tikzstyle{green node} =  [shape= circle, draw=mygreen, ultra thick, fill=mygreen!20]
\tikzstyle{maybe node} = [shape = circle, draw=black, dotted, thick]
\tikzstyle{maybe new node} =[shape = circle, draw=grunn, dotted, ultra thick, fill=blue!10]
\tikzstyle{maybe equal} = [dotted] \tikzstyle{normal node} =[shape=circle,draw=primary, text=primary]

\newenvironment{proofofclaim}[1][Proof of claim]{\begin{proof}[#1]}{\qedhere
  \end{proof}}

\let\le\leqslant
\let\ge\geqslant
\let\leq\leqslant
\let\geq\geqslant

\title{Meta-Theorems for Cuttable Distributed Problems}

\author{Marthe Bonamy}
\orcid{0000-0001-7905-8018}
\affiliation{\institution{LaBRI, University of Bordeaux, CNRS}
  \city{Bordeaux}
  \country{France}
}
\email{marthe.bonamy@u-bordeaux.fr}

\author{Avinandan Das}
\orcid{0000-0002-7471-7499}
\affiliation{\institution{Aalto University}
  \city{Espoo}
  \country{Finland}
}
\email{avinandan.das@aalto.fi}

\author{Cyril Gavoille}
\orcid{0000-0003-3671-8607}
\affiliation{\institution{LaBRI, University of Bordeaux, CNRS}
  \city{Bordeaux}
  \country{France}
}
\email{gavoille@labri.fr}

\author{Timothé Picavet}
\orcid{0000-0002-7129-0127}
\affiliation{\institution{LaBRI, University of Bordeaux, CNRS}
  \city{Bordeaux}
  \country{France}
}
\email{timothe.picavet@u-bordeaux.fr}

\author{Jukka Suomela}
\orcid{0000-0001-6117-8089} 
\affiliation{\institution{Aalto University}
  \city{Espoo}
  \country{Finland}
}
\email{jukka.suomela@aalto.fi}

\author{Alexandra Wesolek}
\orcid{0000-0003-4841-5937}
\affiliation{\institution{LaBRI, University of Bordeaux, CNRS}
  \city{Bordeaux}
  \country{France}
}
\email{alexandra.wesolek@u-bordeaux.fr}

\date{}
\authorsaddresses{}

\def\minDS{\textsc{Minimum Dominating Set}\xspace}
\def\minCDS{\textsc{Minimum Connected Dominating Set}\xspace}

\def\minkTDS{\textsc{Minimum $k$-Tuple Dominating Set}\xspace}

\begin{abstract}
     We prove that given any $\alpha$-approximation \LOCAL algorithm for \minDS (MDS) on planar graphs, we can construct an $f(g)$-round $(3\alpha+1)$-approximation \LOCAL algorithm for MDS on graphs embeddable in a given Euler genus-$g$ surface. Heydt et al. [\textit{European Journal of Combinatorics} (2025)] gave an algorithm with $\alpha=11+\eps$, from which we derive a $(\RATIOG +\eps)$-approximation algorithm for graphs of genus $g$, therefore improving upon the current state of the art of~$24g+O(1)$ due to Amiri et al. [\textit{ACM Transactions on Algorithms} (2019)]. It also improves the approximation ratio of~$91+\eps$ due to Czygrinow et al. [\textit{Theoretical Computer Science} (2019)] in the particular case of orientable surfaces.

     We generalize this result into two directions: (1) by considering other graph problems studied in Distributed Computing such as \minkTDS, for which constant-round approximation algorithms were known for planar graphs, but not for graphs of bounded genus; and (2) by considering graph classes beyond bounded genus graphs, called \emph{locally nice}, and relying on the asymptotic dimension of the class. We prove these results by a series of meta-theorems about \emph{cuttable} minimization problems with constant-round approximation \LOCAL algorithms. Roughly speaking, in cuttable problems, one can systematically extract small subgraphs whose solutions are in proportion to the global solution restricted to the neighbourhood of the subgraph.
\end{abstract}

\begin{document}

\keywords{distributed algorithm, meta-theorem, LOCAL model, dominating set}

\maketitle

\section{Introduction}
\label{sec:intro}

\minDS (MDS) is a famous minimization problem on graphs, known to be NP-complete even in cubic planar graphs~\cite{GJ79,KYK80}. The goal is to find a smallest subset of vertices of the input graph that intersects all radius-1 balls of the graph.

In this paper, we focus our attention on distributed algorithms that can approximate MDS on the graph underlying the topology of the network. Due to the covering property of a dominating set, the problem and its variants (like \textsc{Connected Dominating Set}~\cite{WAF02}) get increasingly more attention in Networking, in particular for mobile and ad-hoc networks. Not only are mobile networks important for point-to-point communications, but specialized ad-hoc networks, such as sensor networks, are important for environmental monitoring tasks. We refer to \cite{KW05,KWZ09} for extended discussions about the motivations of MDS for routing purposes of such networks.

In the \LOCAL model, approximating MDS up to a constant factor in general $n$-vertex graphs is known to require $\Omega(\sqrt{\log{n}/\log\log{n}}\,)$ rounds~\cite{KMW16,CL21b}. 
On the positive side, it is possible to $(1+\eps)$-approximate MDS for any graph in $\poly(\emo\log{n})$ rounds by combining the techniques of~\cite{GKM17} and of~\cite[Corollary~3.11]{RG20}.

For specific graphs, better round complexities can be achieved; cf.~\cite{LPW13,Suomela13} for a large collection of results (including unit-disk graphs, and planar graphs as a basis of the Gabriel graph model widely used in ad-hoc networks~\cite{WY07}). For instance, $O(\log^*{n})$ rounds suffice for planar graphs~\cite{CHW08a}, or more generally for $K_t$-minor-free graphs~\cite{CHW18} and for sub-logarithmic expansion graphs\footnote{Roughly speaking, the expansion of a graph $G$ is a function $f$ that, for any $r$, bounds the maximum edge density of a depth-$r$ minor of $G$ by $f(r)$, where a depth-$r$ minor is a minor of $G$ where each branch set has radius at most $r$.}~\cite{ASS19}, and $o(\log^*{n})$ rounds are not sufficient to get a $(1+\eps)$-approximation of MDS on cycles~\cite{CHW08a} or unit-disk graphs~\cite{LW08}. So, achieving constant-round algorithms must be at the price of relaxing the $(1+\eps)$ approximation ratio.

For planar graphs, the quest for constant-approximation and constant-round algorithms seems to start with \cite{LOW08}\cite[Chp.~13]{Lenzen11}\cite{LPW13}, with a $130$-approximation. Since then, a long line of research has been aimed at improving this ratio. In short, the best approximation ratio to date is $11 + \eps$, due to~\cite{HKOSV25}. We refer to the nice survey of~\cite{HKOSV25} for planar graphs and sub-families, including lower bounds.

In the meantime, the quest for constant-approximation ratio and constant-round distributed algorithms has been proposed for larger classes of graphs. However, very few examples are known of graph classes $\sC(p)$, depending on some fixed parameter $p$, where MDS can be solved by a \LOCAL algorithm with round complexity $f(p)$ and truly-constant approximation ratio (independent of $p$). 

Larger graph classes that make good candidates for extending planar graphs are $H$-minor-free graphs, with some graph $H$ not limited to $K_5$ or $K_{3,3}$. For $K_p$-minor-free graphs, we only know of an exponential approximation ratio in $p$~\cite{KSV21,HKOSV25}, whereas a linear approximation ratio could be possible. For $K_{3,p}$-minor-free graphs~\cite{HKOSV25}, the dependency is much better (linear in~$p$), but still not truly constant. Very recently, it was shown~\cite{BGPW25} that, for $K_{2,p}$-minor-free graphs, the approximation ratio is~$50$ regardless of the value of $p$. 

\def\rmM{\mathrm{M}}

On the other end of the spectrum, the narrowest way to meaningfully extend planar graphs is through embeddable graphs\footnote{I.e., that can be drawn on the surface without edge crossing, like planar graphs for the sphere.} on surfaces\footnote{That are compact, connected $2$-dimensional manifold without boundary.} of Euler genus $g$. Such surfaces can be obtained from a sphere by adding $g/2 \ge 0$ handles\footnote{By adding a handle, we mean that two disjoint disks of the sphere are replaced by a cylinder.} (if they are orientable) or by adding $g\ge 1$ cross-caps\footnote{By adding a cross-cap, we mean that a disk of the sphere is replaced by a M{\"o}bius strip.} (if they are nonorientable). The \emph{Euler genus} of a graph $G$ is the minimum number $g$ such that $G$ embeds on a surface of Euler genus~$g$. In this context, \cite[Theorem~3.11]{ASS19} proposed an $(24g+O(1))$-approximation algorithm with constant-round complexity (actually linear in $g$). For graphs of orientable genus $g$, i.e., embeddable on an orientable surface of genus $g$, \cite{CHWW19} designed a constant-round algorithm that returns a dominating set of size at most $91\rmM + 76g - 66$, where $\rmM$ is the optimal size. By adding a simple extra brute-force step in their algorithm, we can convert it into a $(91+\eps)$-approximation (cf. \cref{prop:aMDS+b}), so with a ratio independent of $g$.
However, the result of \cite{CHWW19} does not transfer to graphs of bounded Euler genus, as for any $g$, there are graphs embeddable on the projective plane\footnote{\label{foot:g=1}For instance, a cycle with $n = 2g+6$ vertices where opposite vertices are connected by an extra edge as Euler genus~1 but orientable genus at least $g+1 = n/2 - 2$, cf. \cite{ABY63}.} (so of Euler genus~1) that cannot be embedded on any orientable surface of genus~$g$ (so of orientable genus at least $g+1$).
So, for these graphs, the approximation ratio will be $91+\eps$, but with the number of rounds possibly depending on the number of vertices\textsuperscript{\ref{foot:g=1}}.

Graphs of Euler genus $g$ are included in $K_{3,2g+3}$-minor-free graphs, because the Euler genus of $K_{3,2g+3}$ is at least $g+1$ (cf. \cite[Theorem~4.4.7]{MT01}) and the class of Euler genus-$g$ graphs is closed under taking minors. We observe that the approximation ratio for $K_{3,p}$-minor-free graphs, for $p = 2g+3$, due to \cite[Theorem~2.2]{HKOSV25} provides an approximation ratio of $O(\sqrt{g})$. Indeed, the ratio of their algorithm is precisely $(2 + \eps)(2\nabla_1 + 1)$, where $\nabla_1$ is the maximal edge density of a depth-$1$ minor of any graph of the class. As we can check (cf. \cref{prop:nabla1})
that $\nabla_1 \le \sqrt{3g/2} + 3$, the approximation ratio is at most $4\sqrt{3g/2} + 14 + \eps$. As it may occur that $\nabla_1 \ge \sqrt{3g/2} - O(1)$ for some Euler genus-$g$ graph (cf. \cref{prop:nabla1}), the best known approximation ratio for MDS in Euler genus-$g$ graphs is $\Theta(\sqrt{g})$, and $91+\eps$ for orientable genus-$g$ graphs. 

It is important to note that no \LOCAL algorithms for Euler genus-$g$ graphs can achieve simultaneously constant round complexity and constant approximation ratio. Indeed, by taking $n = \Omega(\sqrt{g})$, since there exists $n$-vertex graphs of Euler genus $\Theta(n^2)$, the $\Omega(\sqrt{\log{n}/\log\log{n}}\,)$ lower bound of~\cite{KMW16,CL21b} yields a lower bound of $\Omega(\sqrt{\log{g}/\log\log{g}}\,)$. 

\paragraph{Our Contributions.}

In the remainder of the paper, by ``\LOCAL algorithm'', we mean deterministic distributed algorithm in the classical \LOCAL model.

\begin{itemize}
    \item We propose a $(34+\eps)$-approximation \LOCAL algorithm for MDS in Euler genus-$g$ graphs whose round complexity depends only on $g$ (cf. \cref{th:bounded_genus}). This improves approximation ratios in both the orientable (previously $91+\eps$) and non-orientable (previously $24g+O(1)$) genus case. The analysis of the algorithm uses the fact that such graphs have asymptotic dimension at most two.
    
    \item This result follows directly from more general meta-theorems. More generally, we prove that any constant-round $\alpha$-approximation for MDS on planar graphs can be turned into a $f(g)$-round $(3\alpha+1)$-approximation algorithm for MDS on Euler genus-$g$ graphs, for some function $f$.
    
    \item We extend this result in two ways. First, we show that our meta-theorems hold for all local minimization problems that are well-behaved (including MDS and its variants). We provide a simple graph-theoretic property on the problem that is sufficient for our meta-algorithms to apply. This significantly simplifies proofs and circumvents having to reason about various algorithms. 
    
    \item We apply our results on problems such as \minkTDS (cf. \cref{th:otherproblems}), obtaining constant-ratio approximations on problems where none were previously known. We also obtain a ratio of $O(\sqrt{g})$ for \minCDS in Euler genus-$g$ graphs.
    
    \item Second, we prove that our meta-theorems hold on much broader ``locally nice'' graph classes, such as locally bounded-genus graphs, or locally bounded-genus graphs with a bounded number of apex vertices (whose unconstrained neighborhoods can break the surface embeddability of the graph). 
    
    \item We prove three versions of our meta-theorem (see \MetaI, \MetaII and \MetaIII), making it easily applicable and adapted to different situations. 
\end{itemize}

Unlike similar notions, such as the classical network decomposition~\cite{AGLP89}, no preliminary construction is required for any of our algorithms. Obviously, they depend on the parameters of the graph classes on which they are applied, like the Euler genus $g$ and/or the asymptotic dimension of the class with its control function. 

\paragraph{Organization.} Basic notations and asymptotic dimension are introduced in \cref{sec:prelim}. \Cref{sec:bounded_genus} derives a $\RATIO$-approximation for MDS on bounded-genus graphs, gently introducing the ideas and motivations of the later meta-theorems. Finally, \cref{sec:metatheorems} presents our different meta-theorems, and their applications to different problems of the literature.

\section{Preliminaries}\label{sec:prelim}

\paragraph{Graphs and Domination.}

A subset $X\subseteq V(G)$ \emph{dominates} $S$ if $S\subseteq N[X]$. We denote by $\MDS(G,S)$ the minimum size of a subset of $V(G)$ dominating $S$. Note that $\MDS(G,V(G))$, denoted by $\MDS(G)$ for short, is the size of a minimum dominating set for $G$.

\paragraph{Asymptotic Dimension.}

In this section, we focus on defining asymptotic dimension\footnote{Asymptotic dimension was originally defined on metric spaces by Gromov~\cite{gromov1993geometric} in the field of geometric group theory.} (a non-trivial task, as it happens) and explaining how to exploit it, in the hope that others may be able to exploit it in turn.

A graph $G$ with a spanning subgraph $H$ is \emph{$k$-colorable $\delta$-bounded in $H$} if each vertex of $G$ can be assigned a color from $\range{0}{k-1}$ so that the distance in $H$ between any two vertices taken in a monochromatic path of $G$ is at most $\delta$. In other words, all monochromatic connected components of $G$ must have a weak diameter in $H$ at most $\delta$. 
A proper $k$-coloring of $G$ is nothing else than a $k$-coloring $0$-bounded in $H$. If $H$ has diameter $D$, then it is $1$-colorable $D$-bounded in $H$.

The asymptotic dimension of a graph class $\sC$ is a non-negative integer denoted by $\asdim(\sC)$, and related to the $k$-coloring boundedness of the $r$-th power of $G$. It is a notion closely related to \emph{sparse partitions}\footnote{However, those notions follow a different philosophy. In the study of asymptotic dimension, we want to minimize the dimension. In the study of sparse partitions and covers, the main goal is to minimize a function of the dimension and the diameter of the sets.} and \emph{weak sparse covers} (see~\cite[§1.8]{BBEGx24} for precise connection between these notions). Recall that the $r$-th power of $G$, denoted by $G^r$, is the graph obtained from $V(G)$ by adding edges between pairs of vertices at distance at most $r$ in $G$. 

Rather than giving its standard definition, we prefer the following equivalent one\footnote{In the original Proposition~1.17 of \cite{BBEGx24}, $G^r$ is $(d+1)$-colorable $f(r)$-bounded in $G^r$, instead of in $G$. By following this original definition, the function $f$ is not a control function, whereas in \cref{prop:asdim_def} it is.}:

\begin{proposition}[{\cite[Proposition~1.17]{BBEGx24}}]\label{prop:asdim_def}
For every graph class $\sC$, $\asdim(\sC) \le d$ if and only if there exists a function $f: \bbN \to \bbN$, called \emph{control function}, such that for every $G \in \sC$ and integer $r\ge 1$, $G^r$ is $(d+1)$-colorable $f(r)$-bounded in $G$. 
\end{proposition}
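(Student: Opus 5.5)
The plan is to derive the statement directly from the standard definition of asymptotic dimension: both directions amount to translating between covers and colourings. Recall the standard definition. A family $\mathcal{U}$ of subsets of $V(G)$ is \emph{$r$-disjoint} if any two distinct members $U,U'\in\mathcal{U}$ satisfy $d_G(U,U')>r$. It is \emph{$D$-bounded} if every member has weak diameter (measured in $G$) at most $D$. Then $\asdim(\sC)\le d$ if and only if there is a function $D:\bbN\to\bbN$ such that, for every $G\in\sC$ and every $r\ge 1$, $V(G)$ is covered by a family $\mathcal{U}=\mathcal{U}_0\cup\dots\cup\mathcal{U}_d$ in which each $\mathcal{U}_i$ is $r$-disjoint and $D(r)$-bounded. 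I will show that the control function $f$ of the statement and the function $D$ can be taken equal, up to the convention issue discussed at the end.

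\emph{From covers to colourings.} Fix $G\in\sC$ and $r\ge1$, and take the cover $\mathcal{U}_0,\dots,\mathcal{U}_d$. First turn it into a partition: give each vertex $v$ one colour $i$ such that $v$ lies in some $U\in\mathcal{U}_i$, and remember that set $U$. The vertices of colour $i$ are then partitioned into subsets of members of $\mathcal{U}_i$. Shrinking sets preserves both $r$-disjointness and $D(r)$-boundedness.

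Now take a monochromatic path $v_1\cdots v_m$ of colour $i$ in $G^r$. Consecutive vertices satisfy $d_G(v_j,v_{j+1})\le r$. Suppose they were assigned to different members of $\mathcal{U}_i$. Those two members would then be at distance at most $r$, contradicting $r$-disjointness. Hence the whole path lies in a single member of $\mathcal{U}_i$, so any two of its vertices are at $G$-distance at most $D(r)$. This shows $G^r$ is $(d+1)$-colourable $D(r)$-bounded in $G$, so $f=D$ works.

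\emph{From colourings to covers.} Given a $(d+1)$-colouring of $G^r$ that is $f(r)$-bounded in $G$, let $\mathcal{U}_i$ be the vertex sets of the connected components of the subgraph of $G^r$ induced by colour $i$. Together these sets cover $V(G)$. Each has weak diameter at most $f(r)$ in $G$, because any two of its vertices are joined by a monochromatic path. Two distinct components $U,U'$ of the same colour satisfy $d_G(U,U')>r$. Otherwise some $u\in U$ and $u'\in U'$ would be adjacent in $G^r$, and the two components would merge. So each $\mathcal{U}_i$ is $r$-disjoint and $f(r)$-bounded, and $D=f$ witnesses $\asdim(\sC)\le d$.

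The only delicate point I expect is bookkeeping rather than mathematics. Some references define $r$-disjointness as distance at least $r$ instead of greater than $r$, or quantify over all $r\ge 0$. In that case I would apply the cover at parameter $r+1$ and set $f(r)=D(r+1)$, or conversely $D(r)=f(r)$ for $r\ge1$, with a trivial choice at $r=0$. The statement's own footnote stresses that boundedness is measured in $G$ rather than in $G^r$. The argument above uses weak diameter in $G$ throughout, so this is exactly what makes $f$ coincide with the usual control function.
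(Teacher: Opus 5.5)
Your argument is correct. The paper gives no proof of its own: it cites \cite[Proposition~1.17]{BBEGx24}, whose original formulation measures boundedness in $G^r$ rather than in $G$ (see the footnote attached to the statement).

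Your route is a self-contained translation between the standard cover definition and colourings of $G^r$. It delivers exactly what the footnote is after. The weak diameter is measured in $G$, and $f$ coincides with the usual control function $D$, up to the $r$ versus $r+1$ convention you discuss.

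Going through the cited statement instead requires a rescaling, using $d_{G^r}(u,v)=\lceil d_G(u,v)/r\rceil$. Monochromatic components of weak diameter at most $f'(r)$ in $G^r$ have weak diameter at most $r\,f'(r)$ in $G$. Conversely, weak diameter at most $f(r)$ in $G$ gives at most $\lceil f(r)/r\rceil\le f(r)$ in $G^r$. So the two formulations are equivalent, but their control functions can differ by a factor of $r$, which is why the original $f$ is not a control function.

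You already flagged the remaining bookkeeping: the class-level definition is usually quantified over real $r>0$. Since graph distances are integers, setting $D(r):=f(\lceil r\rceil)$ handles that.
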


Any associated control function $f$ that makes $\asdim(\sC) = d$ is also called a $d$-dimensional control function of $\sC$. The Assouad-Nagata dimension of $\sC$ has a similar definition, except that the control function must be linear. A crude example is that paths have Assouad-Nagata dimension~$1$, because, for each $r\ge 1$, it suffices to alternatingly color subsets of $r$ consecutive vertices  with two colors, see \cref{fig:path-dim1}.

\begin{figure}[htpb!]
    \centering \begin{tikzpicture}[scale=0.6]
        \clip (3.3,-3) rectangle (16,2);
        \foreach \x in {0,...,4}
        {
            \draw[fill=mygreen,rounded corners,draw=none] (6*\x-0.4,-0.5) rectangle ++(2.8,1);
            \draw[fill=myred,rounded corners,draw=none] (6*\x+3-0.4,-0.5) rectangle ++(2.8,1);
        }

        \foreach \x in {0,...,16}
            \node[fvertex, draw=none] (\x) at (\x,0) {};
    
        \foreach \x [count=\xi] in {0,...,15}  
            \draw[edge] (\x)--(\xi);

        \draw[very thick,{Latex[round,length=3mm, width=3mm]}-{Latex[round,length=3mm, width=3mm]},draw=mygreen] (12-0.3,1) -- (14+0.3,1);
        \draw[very thick,{Latex[round,length=3mm, width=3mm]}-{Latex[round,length=3mm, width=3mm]},draw=myred] (5,1) --(9,1);
        \node[fill=none] at (13,1.5) {diameter $2 \le f(r) = r-1$};
        \node[fill=none] at (7,1.5) {distance $4 > r$}; 
    \end{tikzpicture} \hspace{2em}\begin{tikzpicture}[scale=0.6]
       \clip (0.5,0.5) rectangle (7.5,6.5);
       
       \foreach \x in {0,...,5}
       \foreach \y in {0,...,2}  
       \foreach \o in {0,1}
        {
            \draw[fill=mygreen,rounded corners,draw=none] (-3*\o+6*\x-0.4,2*\o+4*\y-0.4) rectangle ++(1.8,1.8);
            \draw[fill=myblue,rounded corners,draw=none] (-3*\o+6*\x+2-0.4,2*\o+4*\y-0.4) rectangle ++(1.8,1.8);
            \draw[fill=mypink,rounded corners,draw=none] (-3*\o+6*\x+4-0.4,2*\o+4*\y-0.4) rectangle ++(1.8,1.8);
            }
           
            \foreach \x in {0,...,15}
            \foreach \y in {0,...,7} 
            \node[fvertex,draw=none] (\x_\y) at (\x,\y) {};
           
            \foreach \x [count=\xi] in {0,...,14}
            \foreach \y [count=\yi] in {0,...,6}  
            \draw[edge] (\x_\y)--(\x_\yi) (\x_\y)--(\xi_\y) ;
        \end{tikzpicture}
    \caption{An example that shows that $P^r$ is $2$-colorable $(r-1)$-bounded in $P$ for $r = 3$ (on the left), and an example of a $3$-coloring of grids for $r = 2$ and $f(r) = 2(r-1) = 2$ (on the right).}\Description{~}
    \label{fig:path-dim1}
\end{figure}
More generally, $d$-dimensional grids have a $d$-dimensional control function $f(r) = d\cdot (r-1)$.

Trees, and more generally graph classes of bounded treewidth (resp. layered treewidth) have asymptotic dimension~$1$ (resp.~$2$). Planar graphs, and more generally the classes of $H$-minor-free graphs (for any fixed $H$) have asymptotic dimension~$2$ as shown by \cite{BBEGx24}: the dependency in $H$ only shows in the control function $f$. Dense graph classes may also have small asymptotic dimension. For example, it is sufficient that there is a quasi-isometry into a class having small asymptotic dimension\footnote{It is conjectured in \cite{BBEGx24} that, even more generally, any graph class forbidding some graph as a \emph{fat} minor should also have asymptotic dimension at most $2$.}. The dense class of chordal graphs, being quasi-isometric to the class of trees, has asymptotic dimension $1$.

Intuitively, each connected component of a color class in $G^r$ can be viewed as a ``simple'' component of the graph, that we already understand well.  
In the setting of \LOCAL algorithms with constant round complexity $r/2$, each vertex can observe only its distance-$(r/2)$ neighborhood.  
If the underlying graph class has asymptotic dimension $d$, this neighborhood intersects at most $d+1$ such simple parts.
As a result, we can derive strong approximation guarantees, since each vertex's view of the graph remains constrained and conceptually simple. We observe that all our algorithms make use of the control function $f$, but not for any particular pre-computing coloring or partitioning, so that the asymptotic dimension $d$ is actually used only in analysis of the approximation ratio.

\section{MDS on Bounded Euler Genus Graphs}\label{sec:bounded_genus}
This section explains how asymptotic dimension helps us devise approximation algorithms for MDS on Euler-genus-$g$ graphs. It also serves as an introduction to the method of cutting and motivates the later meta-theorems. 
The following algorithm for planar graphs will serve as a black-box building block for the more involved algorithm of \cref{th:bounded_genus_simple}.

\begin{restatable}{proposition}{ApproxPlanar}\label{prop:algo_planar}
    There is a $\RATIOP$-approximation \LOCAL algorithm $\sfA$ for \minDS in planar graphs with round complexity~$6$.
\end{restatable}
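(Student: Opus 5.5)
The plan is to adapt the classical two-phase scheme of Lenzen, Pignolet and Wattenhofer for planar graphs and redo its analysis with explicit constants. The constant $\RATIOP$ does not need to be optimal, only some fixed constant achievable in $6$ rounds.

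The algorithm $\sfA$ works as follows. In the first phase, every vertex $v$ collects its radius-$3$ ball. It joins a set $D_1$ if $N(v)$ cannot be dominated by at most $c$ vertices of $V(G)\setminus\set{v}$, for a small fixed constant $c$; the choice $c=6$ is the classical one, and $c$ is to be tuned with the final ratio. Deciding this only requires knowing $N[N(v)]$, which lies within distance $2$ of $v$, so it is local. In the second phase, every vertex not dominated by $D_1$ picks a neighbour in $N[u]$ maximizing the number of vertices it dominates that are still undominated by $D_1$ (ties broken by identifiers), and adds it to a set $D_2$. Computing these residual degrees and the choices takes a few more rounds, so the total should stay within $6$ rounds. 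The output is $D_1\cup D_2$, which is a dominating set by construction.

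For the analysis, fix a minimum dominating set $M$. The first step is to bound $|D_1\setminus M|$. Assign each $v\in D_1\setminus M$ to the vertices of $M$ dominating $N(v)$. Because $N(v)$ is not dominated by $c$ vertices other than $v$, $v$ has neighbours in many distinct ``stars'' around elements of $M$. Contracting each star $N[m]$, $m\in M$, onto $m$ then yields a depth-$1$ minor of $G$ in which $v$ has degree above $c$. By planarity this minor has fewer than $3|M|$ edges (it contains no $K_{3,3}$), which gives $|D_1|\le c_1|M|$.

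The second step bounds $|D_2|$. For $m\in M$ not in $D_1$, the residual neighbourhood $N(m)$ is covered by at most $c$ vertices other than $m$. Every vertex $u\in N[m]$ that is undominated after phase one chooses a vertex whose residual degree is at least that of $m$. A charging argument, again using planarity (bounded edge density of depth-$1$ minors and $K_{3,3}$-freeness), should show that the vertices of $N[m]$ choose only $O(c)$ distinct elements, giving $|D_2|\le c_2|M|$. The final constant is $c_1+c_2$, and I expect that pinning it to $\RATIOP$ is just bookkeeping.

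The main obstacle is the second step. Bounding how many distinct vertices the members of one star $N[m]$ can select is where the earlier planar analyses are delicate. I would try to first bound the number of vertices of high residual degree near $m$ using the fact that planar graphs exclude $K_{3,3}$, so two such vertices share few common neighbours, and then sum over $m\in M$.
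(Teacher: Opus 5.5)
The gap is in your second step. It is not merely unproved: the intermediate claim you aim for is false. It is not true that the undominated vertices of a single star $N[m]$, $m\in M$, select only $O(c)$ distinct vertices, even when $m$ is the only element of $M$ dominating them.

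Here is a counterexample. Take vertices $m,a,\ell,u_1,\dots,u_n$ with edges $ma$, $m\ell$, $mu_i$, $au_i$. Attach to each $u_i$ a new vertex $w_i$. Give $w_i$ further neighbours $x_{i,1},\dots,x_{i,n+1}$, all adjacent to one more vertex $z_i$, and a neighbour $y_i$ carrying two pendant vertices $p_i,q_i$.
\begin{itemize}
\item The graph is planar.
\item Every open neighbourhood is dominated by at most three vertices other than its centre, so $D_1=\emptyset$ for any threshold $c\ge 3$.
\item The $2n+1$ pairwise disjoint sets $\{m,\ell\}$, $\{y_i,p_i\}$ and $\{z_i,x_{i,1},\dots,x_{i,n+1}\}$ must each meet every dominating set. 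A short check then shows that $M=\{m\}\cup\{y_i,z_i : 1\le i\le n\}$ is the unique minimum dominating set, so $N[u_i]\cap M=\{m\}$.
\item Yet $\Delta_{w_i}=n+4>n+3=\Delta_m$, so every $u_i$ picks $w_i$.
\end{itemize}
Hence the star of $m$ alone contributes $n$ distinct elements of $D_2$. These must be charged to the $y_i,z_i$, which dominate the residual neighbourhoods of the \emph{chosen} vertices and are unrelated to the star of $m$. So the charging cannot be organised star by star around the chooser. You need a global planarity argument that bounds vertices of large residual degree $w$ against the elements of $M$ covering $N[w]\setminus N[D_1]$. That is the substantial part of the published analysis of this algorithm, and your proposal has no substitute for it; it is not bookkeeping.

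There are also three smaller issues.

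First, in step one, contracting every star $N[m]$ onto $m$ also contracts $v$ itself, since the stars overlap and cover $v$. So $v$ is not a vertex of the resulting minor on $M$, and the ``fewer than $3|M|$ edges'' count is about the wrong graph. The fix is as follows:
\begin{itemize}
\item Keep $A=D_1\setminus M$ uncontracted, and contract each vertex of $V\setminus(A\cup M)$ into one adjacent vertex of $M$; call the resulting minor $H$.
\item For $v\in A$, the $M$-neighbours of $v$ in $H$, together with $N(v)\cap A$, dominate $N(v)$ and avoid $v$. Hence $\deg_H(v)\ge c+1$.
\item Summing, $\sum_{v\in A}\deg_H(v)\le |E(H[A])|+|E(H)|<6|A|+3|M|$, which gives $|A|<3|M|$ for $c=6$.
\end{itemize}

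Second, the round count needs care. Membership in $D_1$ needs only radius-$2$ information. Learning domination by $D_1$, computing residual degrees, making the choices, and notifying the chosen vertices then cost one round each, for exactly $6$. Starting from radius-$3$ balls, as you propose, would give $7$.

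Third, your algorithm is genuinely different from the paper's. There, each $u$ selects from a canonical minimum dominating set of $N^4[u]$, and choices are bounded via $6$-vertex separators around $2$-faces of a contracted plane multigraph. Your route would become a complete proof if you simply invoked the published $130$-approximation analysis of this two-phase algorithm (mentioned in the introduction), since $130\le 205$. It does not go through with the argument you sketch.
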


To present the algorithm (cf. \cref{algo:planar}) we first need to define the ``best'' minimum dominating sets of a labelled graph $H$. Among all dominating sets, we first discard those which contain some $v$ for which there exists $w$ with $N[v]\subsetneq N[w]$, and among the remaining ones, we declare ``best'' the one lexicographically smallest considering the labels of the vertices.

\begin{algorithm}
\caption{The algorithm for dominating set on planar graphs.}\label{algo:MDSplanar}
\begin{algorithmic}[1]

\Require A planar graph $G$.

\Ensure A dominating set $D$ of $G$ such that $\card{D}\le \RATIOP \cdot \MDS(G)$.

\State $D \gets \emptyset$

\State Each vertex $u$ computes $G[N^{5}[u]]$ and all minimum dominating sets of $N^4[u]$ in $G$. Among those sets, $u$ selects the ``best'' one as $D_u$.

\State Each vertex $u$ picks the vertex $v_u$ of smallest label in $D_u\cap N[u]$, and adds it to $D$. 

\end{algorithmic}
\label{algo:planar}
\end{algorithm}

By construction, Algorithm~\ref{algo:planar}  outputs a dominating set $D$ within~$6$ rounds. In fact, Algorithm~\ref{algo:planar}  is very natural in the sense that in order to compute a global dominating set we use local dominating sets. The proof on the approximation factor is given in the Appendix~\ref{appendix:planar}. 
We extend Algorithm~\ref{algo:planar} to genus $g$ graphs by dealing separately with local subgraphs of $G$ which are planar and which are not planar. For a graph $G$, the \emph{$T$-error set} of $G$ w.r.t. planarity is the vertex set defined by $X = \set{u\in V(G) \mid G[N^T[u]] \text{ is not planar}}$. 
\begin{theorem}\label{th:bounded_genus_simple}
    There exists a function $C$ such that for any $g$, there is a $\RATIO$-approximation \LOCAL algorithm for \minDS in Euler genus-$g$ graphs with round complexity $C(g)$. 
\end{theorem}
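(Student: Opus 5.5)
The plan is to combine the planar algorithm $\sfA$ of \cref{prop:algo_planar} ($\alpha=\RATIOP$) with a brute-force step near the non-planar part of the graph. The target ratio $\RATIO = 3\cdot\RATIOP+1$ suggests the budget: $\alpha\cdot\MDS(G)$ for each of the $3 = \asdim+1$ color classes of an asymptotic-dimension-$2$ coloring, plus one extra $\MDS(G)$ for the brute-force region. I fix a large radius $T = T(g)$, let $X$ be the $T$-error set, and choose a second radius $R = R(g) \gg T$.

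\textbf{Step 1 (the non-planar part is small).} Take a maximal set $Z \subseteq X$ of vertices pairwise at distance more than $2T$. The balls $G[N^T[z]]$ for $z\in Z$ are vertex-disjoint and non-planar. Euler genus is additive over disjoint non-planar subgraphs, so $\card{Z}\le g$. By maximality, $X\subseteq N^{2T}[Z]$. Consequently every connected component of $G[N^{R}[X]]$ has diameter at most $g(2R+4T+1)$, a function of $g$ only.

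\textbf{Step 2 (the algorithm).} Each vertex $u$ gathers its ball of radius $C(g)$ and first decides whether it lies within distance $R$ of $X$.
\begin{itemize}
\item Near $X$: the whole component $K$ of $G[N^R[X]]$ containing $u$ has bounded diameter, so all its vertices can agree on the same optimal set $D_K$ dominating $N^{R/2}[X]\cap K$ in $G$. They compute it by brute force, with lexicographic tie-breaking. The union of the sets $D_K$ has size at most $\MDS(G)$, because distinct components $K$ are far apart and the parts of a global optimum near them are disjoint.
\item Far from $X$: every vertex $u$ at distance more than $R/2$ from $X$ simulates $\sfA$ and adds its choice $v_u$. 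Because $T \ge 6$, every $6$-ball seen by $\sfA$ is planar, so this is well defined.
\end{itemize}
Together these dominate everything.

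\textbf{Step 3 (charging the $\sfA$ part, the main obstacle).} $\sfA$ is only guaranteed on globally planar graphs, and the region far from $X$ may still carry long non-contractible cycles. This is where asymptotic dimension enters. Euler genus-$g$ graphs are $K_{3,2g+3}$-minor-free, so they have asymptotic dimension $2$. Fix a $2$-dimensional control function $f$ and $r = 20$, and take a $3$-coloring of $G^{r}$ that is $f(r)$-bounded in $G$. Choose $T \ge f(r)+20$, so that every monochromatic piece $P$ together with its $10$-neighborhood sits inside a planar ball $B_P$.

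Now fix a piece $P$ away from $X$. The outputs $v_u$ for $u\in P$ depend only on $6$-balls. Hence they coincide with the outputs of $\sfA$ run on the planar graph $B_P$, restricted to $P$. I would then need a localized version of the guarantee of $\sfA$: the number of distinct $v_u$ with $u\in P$ is at most $\alpha\cdot\MDS(G,N^{c}[P])$ for a constant $c$. I expect this to follow from the proof in Appendix~\ref{appendix:planar}, whose charging is local. If it does not, I would reprove the planar bound in this localized form, which is the hardest part.

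\textbf{Step 4 (summing).} Pieces of the same color are at distance more than $r > 2c$. So a fixed optimal dominating set is split disjointly among the sets $N^c[P]$ within one color class, giving at most $\alpha\cdot\MDS(G)$ per color. Summing over the $3$ colors and adding the brute-force term gives $(3\alpha+1)\MDS(G) = \RATIO\cdot\MDS(G)$. The round complexity is $C(g) = O(g(R+T))$.
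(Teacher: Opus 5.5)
Your architecture matches the paper's: a $3$-coloring from asymptotic dimension $2$, planar balls around monochromatic pieces that avoid the error set, a brute-force term near the non-planar region, and the count $3\alpha+1$. Steps~1, 2 and~4 are fine up to constants. Brute-forcing on the components of $G[N^R[X]]$ is a legitimate variant of the paper's $2$-hop components. The gap is Step~3, which is the heart of the proof and which you leave open. Knowing that the choices $v_u$, $u\in P$, coincide with those of $\sfA$ on the planar ball $B_P$ only gives $\card{\set{v_u : u\in P}}\le \alpha\cdot\MDS(B_P)$, and this does not sum. $B_P$ has radius $T\gg r$, so the balls of distinct pieces of one color can overlap with unbounded multiplicity. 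Moreover, $\MDS(B_P)$ is not controlled by the optimum near $P$. Hoping that the charging of Appendix~\ref{appendix:planar} localizes is not an argument, since that proof contracts onto a global dominating set and applies Euler's formula to the whole graph. Reproving the planar bound in localized form is exactly what one wants to avoid.

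The missing idea is the paper's \emph{cutting} step, which extracts the localized bound from the global guarantee of $\sfA$ used as a black box. Let $Y_G$ be a minimum dominating set of $G$, let $Y'=Y_G\cap N^{c+1}[P]$, and let $G'=G[N[Y']]$. This graph has the properties you need:
\begin{itemize}
\item $N^{c}[P]\subseteq V(G')$.
\item $Y'$ dominates $G'$, so $\MDS(G')\le\card{Y_G\cap N^{c+1}[P]}$.
\item $G'$ is an induced subgraph of the planar ball (after enlarging $T$ by a constant), hence planar.
\item If $c$ exceeds the view radius of $\sfA$, including the extra radius used by the ``best'' tie-breaking rule, every $u\in P$ has the same view in $G'$ as in $G$, hence the same choice $v_u$.
\end{itemize}
Therefore $\card{\set{v_u : u\in P}}\le\card{\sfA(G')}\le\alpha\cdot\MDS(G')\le\alpha\card{Y_G\cap N^{c+1}[P]}$. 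To sum this over a color class, you need the sets $N^{c+1}[P]$ to be pairwise disjoint. That requires pieces at distance more than $2c+2$, not merely $r>2c$ as you wrote. With $r$ chosen accordingly, Step~4 gives $\alpha\card{Y_G}$ per color, and the $(3\alpha+1)$ bound follows.
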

\begin{proof}
    Suppose $G$ has Euler genus $g$ and let $f$ be a $2$-dimensional control function of the class of Euler genus-$g$ graphs. A $2$-hop in $X\subset V(G)$ is a sequence of vertices $x_1,\dots,x_k$ in $X$ such that consecutive vertices are at distance at most two in $G$.
    
    \begin{algorithm}
    \caption{The algorithm for dominating set on bounded genus graphs.}\label{algo:MDS_bdd_genus}
    \begin{algorithmic}[1]
    
    \Require A graph $G$ of genus $\leq g$.
    
    \Ensure A dominating set $D$ of $G$ such that $\card{D}\le\RATIO \cdot \MDS(G)$.
    
    \State If $N^{f(10)+5}[u]$ is planar, $u$ runs \cref{algo:planar} on the corresponding locally planar subgraph.
    \Statex Let $X$ be the set of undominated vertices after step $1$.
    
    \State  If $u\in X$, $u$ computes lexicographically smallest minimum set $D_{X'}$ in $G$ which dominates the set $X'\subseteq X$ of vertices reachable from $u$ with a $2$-hop in $X$. It adds $N[u]\cap D_{X'}$ to $D$.
    
    \end{algorithmic}
    \end{algorithm}
    
    \begin{claim}\label{claim:bdd_genus_locally_nice}
        The round complexity is some constant $C(g)$ depending only on $g$. 
    \end{claim}
    
    \begin{proofofclaim}
        It suffices to show that there is no $ 2$-hop of weak diameter $\geq 2(f(10)+6)g$ of vertices in $X$. Suppose such a $2$-hop on vertices of $X$ exists. Take two vertices $x$ and $y$ on the $2$-hop which are at least $2(f(10)+6)g$ apart. Note that for any $i<2(f(10)+6)g$ there exists a vertex on the $2$-hop at distance $i$ or $i+1$ from $x$. Hence there exist $u_1,\dots u_{g+1}\in X$ where $u_i$ is at distance $2(f(10)+6)(i-1)$ or $2(f(10)+6)(i-1)+1$ from $x$. Therefore, the sets $N^{f(10)+5}[u_1],\dots,N^{f(10)+5}[u_{g+1}]$ are pairwise disjoint and they are non-planar, so the genus of the graph is at least $g+1$ by additivity (cf.~\cite[Theorem~4.4.3]{MT01}). As the genus is at most $g$, this is a contradiction.
    \end{proofofclaim}
    
    The coloring of the asymptotic dimension gives us color classes $C_1,C_2,C_3$, such that:
    \begin{itemize}[noitemsep]
    \item $C_i = B^i_{1} \cup \dots \cup B^i_{j_i}$ with\footnote{We denote by $\diam_G(X)$ the largest distance in $G$ between any two vertices of $X$.} $\diam_G(B^i_{j})\leq f(10)$ for all $1\leq j\leq j_i$, and
    \item for any $j\neq \ell$, we have $B^i_j\neq B^i_\ell$. More precisely, $B^i_j$ and $B^i_\ell$ are at distance at least $11$. 
    \end{itemize}
    
    \paragraph{The Cutting.}
    
    Let $Y_G$ be a minimum dominating set of $G$.
    Let $G'=G[N[Y']]$ where $Y'=Y_G\cap N^{5}[B^i_j]$. Note that $G'$ contains $N^4[B^i_j]$  and  $|\MDS(G')|\leq |Y_G\cap N^{5}[B^i_j]|$. 
    
    \paragraph{The Analysis Outside of the Error Set.}
    
    Note that $G'$ either contains only vertices from $X$ or contains at least one vertex which is not in $X$ and is therefore planar. Hence, vertices in $B^i_j\setminus X$ choose at most  $\RATIOP \cdot|\MDS(G')|\leq \RATIOP\cdot |Y_G\cap N^{5}[B^i_j]|$ vertices as their neighbourhood of distance $5$ in $G$ and $G'$ is the same. Vertices not in $X$ choose at most $\RATIOP |Y_G|$ vertices in each color class. Across all color classes, this adds up to at most $3 \cdot \RATIOP |Y_G|$.
    \paragraph{The Analysis With Non-Empty Error Set.} Let $X_1,\dots, X_k$ be the $2$-hop components.
    Note that for two distinct $2$-hop components $X_i$ and $X_j$, the neighbourhoods $N[X_i]$ and $N[X_j]$ do not intersect. As $|D_{X_i}| \leq |Y_G\cap N[X_i]|$, the vertices in $X$ choose at most $|Y_G|$ vertices.
    
    In total, at most $3\cdot 205 |Y_G|+|Y_G|=\RATIO\cdot |\MDS(G)|$ vertices are added to $D$: the algorithm is a $\RATIO$-approximation.
\end{proof}

In fact, if Algorithm~\ref{algo:planar} gives an approximation ratio $\alpha$, then the above algorithm for genus $g$ graphs is a $(3\alpha+1)$-approximation algorithm. One might now ask whether any algorithm for MDS on planar graphs can be turned into one for genus $g$ graphs. We show that this is indeed possible. In order to do so, we need to show that the cutting step and its analysis work for any MDS algorithm for planar graphs. We delve into this deeper in the next section.

\section{Meta-Theorems}\label{sec:metatheorems}

In this section, we present our meta-theorems, which take as input a \LOCAL approximation algorithm for a graph class $\sC$ and returns a \LOCAL approximation algorithm for a broader class $\sD$ that locally looks like graphs in $\sC$, possibly with some local exceptions.

Meta-theorems using black-box \LOCAL algorithms might be difficult to state as there are some subtle issues with vertex identifiers (see for instance~\cite{FKP13,FGKS13,GHS13}). The \LOCAL model assumes the input graph $G$ has unique $O(\log{n})$-bit vertex labels, i.e., with identifiers polynomial in $G$. In particular, running a \LOCAL algorithm $\sfA$ on only a small part $H$ of $G$ might fail because: (1) vertex labels in $H$ may not be polynomial in $H$ anymore, and (2) algorithm $\sfA$ may require polynomial-size identifiers to work properly. Thus, \MetaI assumes that the black-box algorithm $\sfA$ does not require polynomial identifiers. This captures all order-invariant algorithms\footnote{A model where vertices have unique labels, but the output of $\sfA$ is invariant under relabelings preserving the label order~\cite{GHS13}.}. In contrast, \MetaII and \MetaIII allow $\sfA$ to use polynomial identifiers, but impose restrictions on the optimization problem. Due to space constraints, Meta-Theorem~II {\&} III are presented in \cref{polynomial_ids_required}.

\paragraph{From Locally-$\sC$ to Locally Nice Graphs.}

\def\sT{\mathscr{T}}
\def\sP{\mathscr{P}}

Given a graph class $\sC$, we say that a graph class $\sD$ is \emph{$T$-locally-$\sC$} if for all $G \in\sD$ and $u\in V(G)$, $G[N^T[u]]\in\sC$.

This definition can already capture many graph classes, such as graphs of girth at least $2k+2$: 
those are just graphs that are $k$-locally-$\sT$, where $\sT$ is the class of trees. Similarly, graphs embedded on a surface with edge-width\footnote{The length of shortest noncontractible cycle in the embedded graph.} at least $2k+2$ are $k$-locally-$\sP$, where $\sP$ is the class of planar graphs.
However, our meta-theorem captures broader classes of graphs as it can support local exceptions.

Given a graph class $\sD$ and $G \in\sD$, the \emph{$T$-error set} of $G$ w.r.t. $\sC$ is the vertex set defined by $X = \set{u\in V(G) \mid G[N^T[u]] \notin \sC}$. In other words, this is the set of vertices that are witnesses for $G$ not being $T$-locally-$\sC$. Clearly, if $G$ has no $T$-errors, that is $X = \emptyset$, then $G$ is $T$-locally-$\sC$. Given a $\rho$-local\footnote{A problem whose solution can be verified by checking all neighborhoods of radius $\rho$. See next page for a formal definition.} problem $\Pi$, the class $\sD$ is $T$-locally $\delta$-nice w.r.t. $\sC$ and $\Pi$, if the maximum weak diameter in any $G\in\sD$ of any connected component of $G[N^{2\rho}[X]]$ is at most $\delta$, where $X$ is the $T$-errors of $G$ w.r.t. $\sC$. 
This means that the set of $T$-errors can be covered by far-away bounded-radius balls of $G$.
By \cref{prop:genus_locally_nice}, bounded Euler genus graphs form a locally nice class of graphs w.r.t. $\sP$.

\paragraph{Uniform Approximation.}

Our meta-theorems are inspired by the intriguing property~\cite[Proposition~3.1]{BGPW25} that any approximate algorithm $\sfA$ for MDS is going to perform well in a target graph class $\sD$ that is locally-$\sC$, as long as $\sD$ has small asymptotic dimension, and $\sfA$ makes "good local choices" in $\sC$. More precisely, a \LOCAL algorithm $\sfA$ is a \emph{$k$-uniform} $\alpha$-approximation in $\sC$, if for all $G\in\sC$ and $S\subseteq V(G)$, $\card{\sfA(G)\cap S} \le \alpha \cdot \MDS(G,N^k[S])$, where $\sfA(G)$ is the set returned by $\sfA$ when run on $G$, and $\MDS(G,X)$ is the minimum size of a subset of $V(G)$ dominating vertices of $X$ in $G$. 
Unless explicitly said, we will assume for convenience that $k\ge 1$, as otherwise only very restricted graph classes $\sC$ can support $0$-uniform constant-approximation algorithms for MDS (cf. \cref{prop:0uniform}).

We can now state properly the fact that any $k$-uniform approximate algorithm that performs well for $\sC$, performs well for a locally-$\sC$ class of bounded asymptotic dimension. It can be shown that the bounded asymptotic dimension assumption cannot be removed, even for MDS\footnote{This can be achieved by observing that a simple constant-round $3$-approximation \LOCAL algorithm exists for MDS in trees, but (high) girth-$g$ graphs do not admit a constant-round $o(g)$-factor approximation. The latter follows by combining Ramsey-theoretic arguments with the lower bound proof due to \cite{GHS13} based on $2k$-regular graphs of girth $g$, where vertices can be linearly ordered so that a positive fraction of them have pairwise isomorphic radius-$r$ neighbourhoods, for any $r,k,g$.}.

\begin{restatable}[{\cite[Proposition~3.1]{BGPW25}}]{proposition}{PropI}\label{prop:local_approx}
    Let $\sfA$ be a $k$-uniform $\alpha$-approximation \LOCAL algorithm for \minDS in a hereditary\footnote{A class of graphs closed under vertex deletion.} class of graphs $\sC$ with round complexity $r \ge 1$. Let $\sD$ be a graph class with $d$-dimensional control function $f$, and that is\footnote{The original statement claimed erroneously that $\sD$ being $t$-locally-$\sC$ for $t = f(2k+1)$ is enough, instead of the correct $t = f(2k+2) + r$. This has no impact, as in practice we use \cref{prop:local_approx} in \cref{th:locally_bounded_genus} only for $k \le r = O(1)$.} $(f(2k+2)+r)$-locally-$\sC$. Then $\sfA$ is also an $\alpha (d+1)$-approximation algorithm on $\sD$. 
\end{restatable}

We extend \cref{prop:local_approx} in multiple ways. First, we allow $\sD$ to be locally nice, and secondly, we prove that this also applies to problems other than MDS.
This means that, for example, any approximate algorithm $\sfA$ solving a well-behaved problem that performs well on planar graphs can be converted into an algorithm $\sfB$ that performs well on bounded genus graphs, as long as $\sfA$ is $k$-uniform on $\sC$, and $\sD$ has small asymptotic dimension. 
Before giving the full statement of \MetaI, we define the class of problems we are interested in.

\paragraph{\LOCAL Optimization Problems.}
\def\sfR{\mathsf{R}}

As defined in \cite{GHS13}, a \emph{simple graph problem} $\Pi$ is an optimization problem in which a feasible solution\footnote{Here, a feasible solution is any subset of vertices or edges that satisfies the constraints of the graph problem. Said differently, an admissible candidate solution.} is a subset of vertices or edges, and the goal is to optimize the size of a feasible solution.
A simple graph problem $\Pi$ is \textit{$\rho$-local} if there is a \LOCAL algorithm $\sfR$ with round complexity $\rho$ that can recognize a feasible solution\footnote{We do not require that the algorithm checks the optimality of the solution. If we require this, most interesting problems would require $\rho = \Omega(n)$ to be $\rho$-local.}. Algorithm $\sfR$ recognizes $S$ as a feasible solution if and only if, whenever applied on $G$ with $S$ given, $\sfR$ returns true for all vertices of $G$.
See \cref{sec:locally_verifiable_problems} for more formal definitions. 
We are mainly interested in small locality problems, that is when $\rho$ is constant, as clearly all simple graph problems are $\rho$-local for $\rho$ large enough.
A lot of well-known problems such as optimization variants of LCL problems~\cite{NS95a}.
This includes \textsc{Minimum Vertex Cover}, \minDS, \textsc{Maximum Independent Set}, \textsc{Maximal Matching}, and \minkTDS. The distance-$d$ versions of these problems\footnote{E.g. for \minDS, every vertex has a dominating vertex in its distance-$d$ neighborhood.} are also $d$-local problems. However, \minCDS is $\Omega(n)$-local\footnote{In a $n$-vertex cycle, no \LOCAL algorithm $\sfR$ of round complexity $< n/2$ can distinguish the situation where a feasible solution is $S = V(G)\setminus\set{u}$ from the situation where $S = V(G)\setminus\set{u,v}$ with two opposite vertices $u,v$.}.
Our framework includes problems such as \textsc{Red-Blue Dominating Set} that can take vertex or edge input labels on the graph, and problems defined on arbitrarily large degree graphs.

In this work, we focus on subsets of vertices, but the techniques presented here extend to subsets of edges.
In our setting, we restrict our attention to cases where the vertices do not have access to $n$, the size of the graph. For some problems, such as \minDS, this restriction can be relaxed\footnote{In full generality, this applies to all problems for which, for every $n$, there exists an $n$-vertex graph within the input graph class that has bounded diameter and admits a constant-size solution. For MDS on planar graphs, we can take a depth-1 tree.}. However, we omit this case for brevity, since we are not aware of any constant-round algorithm for a $\rho$-local problem that requires $n$. As, in the \LOCAL model with constant number of rounds, the literature mostly studies minimization problems, we also focus on those. Observe that many classically interesting problems like \textsc{Maximum Independent Set} and \textsc{Maximum Matching} do not admit constant-time constant-ratio approximations (see~\cite{BH25} for references), even on the restricted graph class of paths\footnote{The rough intuition for this is that on paths, any constant-time algorithm run on a vertex far away from the end of the path sees the same neighborhood, and therefore will output the same result on every vertex or edge. The only way to comply with the problem description, unlike minization problem like MDS, is to not take any vertex in the independent set (or any edge in the maximum matching), at least in the middle of the path. This leads to an arbitrarily bad approximation ratio.}.

Given a $\rho$-local problem $\Pi$, a graph $G$ and $X\subseteq V(G)$, 
we define $\OPT_\Pi(G,X)$ as the optimal size of a subset of vertices of $G$ that is feasible on $X$ (and not necessarily for vertices outside $X$).
We naturally extend the definition of $k$-uniform algorithm as follows: a \LOCAL algorithm $\sfA$ is a \emph{$k$-uniform} $\alpha$-approximation for $\Pi$ in a class of graphs $\sC$, if for all $G\in\sC$ and $S\subseteq V(G)$, $\card{\sfA(G)\cap S} \le \alpha \cdot \OPT_\Pi(G,N^k[S])$.

We can now state our strengthening of \cref{prop:local_approx}. Note that we further require (when the set of errors is not empty) that our problem is \emph{additive}, that is, any superset of a solution feasible on $X$ is still feasible on $X$.
For instance, MDS is additive because adding vertices to any dominating set maintains a dominating set.

\begin{restatable}[Meta-Theorem~I]{theorem}{MetaTheoremNonPoly}\label{th:metatheorem_nonpoly}
    Let $\sfA$ be a $k$-uniform $\alpha$-approximation \LOCAL algorithm that does not require polynomial identifiers, for an additive $\rho$-local problem $\Pi$ in a class of graphs $\sC$ with round complexity $r$, where $\sC$ is hereditary and stable by disjoint union.
    Let $\sD$ be a graph class of asymptotic dimension $d$ with $d$-dimensional control function $f$ that is $T$-locally $\delta$-nice w.r.t. $\sC$ and $\Pi$, where $T = f(2k+2\rho) + \max\set{k+\rho,r}$. Then, there exists a $\max\{k,\rho\}$-uniform $(\alpha (d+1)+1)$-approximation algorithm $\sfB$ on $\sD$ with round complexity $T+\delta+2$. Furthermore, if $\sD$ is $T$-locally $\sC$, the approximation ratio is $\alpha (d+1)$ and the additivity condition can be dropped.
\end{restatable}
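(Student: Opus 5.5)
We mirror the proof of \cref{th:bounded_genus_simple}, with \cref{algo:planar} replaced by the black box $\sfA$. The algorithm $\sfB$ has two phases. Let $X$ be the $T$-error set of $G\in\sD$ w.r.t.\ $\sC$.

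\textbf{Phase 1.} Each vertex $u$ collects $G[N^{T}[u]]$ and tests whether it lies in $\sC$ (i.e.\ whether $u\notin X$). If $u\notin X$, $u$ simulates $\sfA$ on $H_u=G[N^{T}[u]]$, which lies in $\sC$, and joins the solution iff $\sfA$ selects it there. Since $\sfA$ does not need polynomial identifiers, running it on $H_u$ with the inherited labels is legitimate. Since $T\ge r$, the output at $u$ equals the output of $\sfA$ on any graph of $\sC$ containing $N^{r}[u]$ as an induced ball.

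\textbf{Phase 2.} Each vertex of $N^{2\rho}[X]$ (detected within $T+\delta+2$ rounds) gathers its whole component $K$ of $G[N^{2\rho}[X]]$. This component has weak diameter at most $\delta$. The vertex computes, by a deterministic lexicographic rule, a minimum set $D_K$ that, together with the Phase~1 choices, makes the solution feasible on $N^{\rho}[X]\cap K$. It adds its own membership in $D_K$. Feasibility elsewhere holds because vertices at distance more than $\rho$ from $X$ are verified by $\sfR$ inside balls where Phase~1 agrees with a run of $\sfA$ on a graph of $\sC$. Additivity guarantees that the Phase~2 additions do not destroy feasibility obtained in Phase~1.

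\textbf{Approximation.} Fix an optimum solution $Y$. For each colour class $c\in\{0,\dots,d\}$ of $G^{2k+2\rho}$, coloured with clusters $B$ of weak diameter at most $f(2k+2\rho)$ (\cref{prop:asdim_def}), we perform the cutting. Let
\[
  G_B \;=\; G\bigl[N^{k+\rho+r}[B]\bigr] \ \text{(or a suitable sub-ball)}, \qquad G_c \;=\; \bigsqcup_{B\subseteq C_c,\ B\not\subseteq X} G_B .
\]
When $B$ contains a non-error vertex, the radius chosen for $T$ forces $G_B\subseteq G[N^T[u]]\in\sC$; heredity then gives $G_B\in\sC$, and stability under disjoint union gives $G_c\in\sC$. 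Moreover, $\sfA$ run on $G_c$ agrees with Phase~1 on each $B\setminus X$, since $r$-balls coincide. Applying $k$-uniformity with $S=\bigcup B\setminus X$ gives
\[
  \card{\sfB\cap S}\le\alpha\,\OPT_\Pi\bigl(G_c,N^k[S]\bigr)\le \alpha\sum_B \card{Y\cap N^{k+\rho}[B]} \le \alpha\card{Y}.
\]
The middle inequality holds because $Y\cap N^{k+\rho}[B]$ is feasible on $N^k[B]$ inside $G_B$, by $\rho$-locality. The last inequality uses that the sets $N^{k+\rho}[B]$ are disjoint, as clusters are at distance more than $2k+2\rho$. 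Summing over the $d+1$ colours gives $\alpha(d+1)\card{Y}$. The Phase~2 contribution is at most $\sum_K\card{Y\cap N^{\rho}[K]}\le\card{Y}$, since these neighbourhoods are disjoint for distinct components (they lie at distance more than $2\rho$ apart) and $Y$ restricted there is a valid completion. This yields the ratio $\alpha(d+1)+1$. If $X=\emptyset$, Phase~2 is absent, so the ratio is $\alpha(d+1)$ and additivity is not needed. The same computation restricted to an arbitrary $S$ gives $\max\{k,\rho\}$-uniformity.

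\textbf{Main obstacle.} The delicate step is the cutting lemma: we must check that $\OPT_\Pi(G_B,N^k[B])\le\card{Y\cap N^{k+\rho}[B]}$, and that $\sfA$'s behaviour on $G_c$ matches its behaviour on the true balls. This requires tuning the radii $k+\rho$ versus $r$ against $T=f(2k+2\rho)+\max\{k+\rho,r\}$. We would first try taking $G_B=G[N^{\max\{k+\rho,r\}}[B]]$ and verify both properties directly via $\rho$-locality of $\sfR$.
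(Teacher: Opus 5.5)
Your algorithm and analysis follow the paper's proof, but two steps need repair. Your Phase~1, simulating $\sfA$ on $G[N^T[u]]$, is equivalent to the paper's Step~3, which runs $\sfA$ for exactly $r$ rounds and discards $X$. Your Phase~2 is its brute-force completion on the components of $G[N^{2\rho}[X]]$. The colour-by-colour cutting with $k$-uniformity and disjointness of the sets $N^{k+\rho}[B]$ is its counting argument.

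\textbf{Phase~2 bound.} This rests on a false claim. Distinct components $K,K'$ of $G[N^{2\rho}[X]]$ are only guaranteed to be non-adjacent, not at distance more than $2\rho$. For instance, take $\rho=1$ and $X=\{x,x'\}$ the two ends of a path $x\,a\,u\,c\,u'\,a'\,x'$. The components $\{x,a,u\}$ and $\{x',a',u'\}$ are at distance $2$, and both $N^{1}[K]$ and $N^{1}[K']$ contain $c$. So $\sum_K\card{Y\cap N^{\rho}[K]}$ may count vertices of $Y$ twice. The repair is as follows. Only the constraints of $W_K=N^{\rho}[X]\cap K$ are at stake, and $N^{\rho}[W_K]\subseteq K$. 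This holds because a shortest path from a vertex within distance $\rho$ of $X$ to a vertex within distance $\rho$ of it stays inside $N^{2\rho}[X]$. By $\rho$-locality and additivity, $Y\cap N^{\rho}[W_K]$ is a valid completion. Hence $\card{D_K}\le\card{Y\cap K}$, and disjointness of the components gives $\sum_K\card{D_K}\le\card{Y}$. The paper instead just bounds the whole brute-forced set by $\OPT_\Pi(G)$.

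\textbf{Cutting step.} The ball $G[N^{k+\rho+r}[B]]$ is not contained in $G[N^T[u]]$. The radius has to be $\max\{k+\rho,r\}$, as you suspected and as the paper uses. Even then, your abstract disjoint union $G_c=\bigsqcup_B G_B$ is not a legal input when $r>k+\rho$. The balls $N^{r}[B]$ of two same-coloured clusters can overlap, so $G_c$ contains two copies of a vertex with the same identifier. The $k$-uniformity guarantee of $\sfA$ then does not apply, and relabelling one copy may change $\sfA$'s outputs on $B$.

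The fix is to drop the union and apply $k$-uniformity to each $G_B\in\sC$ separately, with $S=B$. Since $\sfA$ sees the same $r$-balls of $B$ in $G_B$ and in $G$, this yields $\card{\sfA(G)\cap B}\le\alpha\,\OPT_\Pi(G_B,N^k[B])\le\alpha\card{Y\cap N^{k+\rho}[B]}$. Summing over the clusters of one colour gives $\alpha\card{Y}$, and this does not use stability under disjoint union at all.

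The paper applies uniformity to the induced subgraph $G[\bigcup_C N^T[v_C]]$, which keeps identifiers unique. However, these $T$-balls may overlap or be adjacent, so its membership in $\sC$ does not follow from heredity and disjoint-union stability either. The per-cluster argument fixes both versions.
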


\paragraph{Cutability.}

It may be difficult to prove that some algorithm is $k$-uniform.
Instead, we provide a simple sufficient criterion for deciding if \emph{any} constant-round constant-ratio approximation algorithm for $\Pi$ is $k$-uniform.

A $\rho$-local problem $\Pi$ is \emph{cuttable} for a hereditary graph class $\sC$ with parameter $\beta>0$ if for all $G\in \sC$ and $X\subseteq V(G)$, there exists some $Y \supseteq X$ such that $\OPT_\Pi(G[Y]) \le \beta \cdot \OPT_\Pi(G,X)$.
Informally, this asks that $X$ can be augmented into a set $Y$ (if not the whole vertex set) so that the graph induced by this bigger set $Y$ has an optimal solution somewhat smaller than the optimal feasible solution for $X$.  

For instance, MDS is cuttable with parameter $\beta = 1$ (see \cref{claim:cutting_trick_size}), as is the distance-$d$ version of MDS (see \cref{prop:MDS_distance_d_cuttable}).
We prove in \cref{appendix:cutting_implies_uniformizable_nonpoly} the fundamental property that every constant-round constant-ratio approximation for a cuttable problem is also $k$-uniform (as long as the algorithm does not require polynomial identifiers). Therefore, MDS admits uniform approximation algorithms. This means we can actually use \hyperref[th:metatheorem_nonpoly]{Meta-Theorem~I} for any cuttable problem without the $k$-uniform assumption.
Actually, for MDS, we can show that this property holds even if the algorithm requires polynomial identifiers (cf. \cref{prop:mds_uniformizable}).

Recall that Euler genus-$g$ graphs have asymptotic dimension at most $2$, and are locally nice (cf. \cref{prop:genus_locally_nice}). Thus, \MetaI implies that any constant-round $\alpha$-approximation for MDS gives a $C(g)$-round $(3\alpha+1)$-approximation for graphs of Euler genus $g$. 

The best known ratio for MDS in planar graphs is $\alpha = 11 + \eps$ due to Heydt et al.~\cite{HKOSV25}. Therefore,

\begin{restatable}{corollary}{BoundedGenus}\label{th:bounded_genus}
    For any $\eps>0$, for any $g$, there is a $k(\eps)$-uniform $(\RATIOG +\eps)$-approximation \LOCAL algorithm for \minDS in Euler genus-$g$ graphs with round complexity $C(\eps,g)$ for some functions $C$ and $k$.
\end{restatable}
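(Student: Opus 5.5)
We obtain \cref{th:bounded_genus} by applying \MetaI to the class $\sC = \sP$ of planar graphs and $\Pi = $ \minDS, using the algorithm of Heydt et al.~\cite{HKOSV25} as the black box $\sfA$.

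\emph{Hypotheses on $\sC$ and $\Pi$.} The class $\sP$ is hereditary and closed under disjoint union. \minDS is $1$-local, since a vertex checks in one round whether it or a neighbour belongs to $S$. It is also additive, because any superset of a set dominating $X$ still dominates $X$.

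\emph{Uniformity of the black box.} For a fixed $\eps' > 0$, the algorithm of \cite{HKOSV25} is an $r(\eps')$-round $(11+\eps')$-approximation for \minDS on planar graphs. \MetaI needs it to be $k$-uniform and not to require polynomial identifiers. We would take the second property from inspecting the algorithm: it computes local optimal structures and breaks ties by comparing labels, so it is order-invariant. Given this, \minDS is cuttable with $\beta = 1$ by \cref{claim:cutting_trick_size}. The fundamental property proved in \cref{appendix:cutting_implies_uniformizable_nonpoly} then shows that this constant-round algorithm is $k(\eps')$-uniform, with ratio $(11+\eps')\beta = 11+\eps'$ and $k(\eps')$ depending only on the round complexity $r(\eps')$. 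Alternatively, \cref{prop:mds_uniformizable} gives uniformity for \minDS even when the algorithm uses polynomial identifiers; this covers us if order-invariance were in doubt, although \MetaI itself still requires non-polynomial identifiers.

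\emph{Hypotheses on $\sD$.} Let $\sD$ be the class of Euler genus-$g$ graphs. It has asymptotic dimension $d \le 2$, with some control function $f_g$. By \cref{prop:genus_locally_nice}, for every $T$ it is $T$-locally $\delta$-nice w.r.t.\ $\sP$ and \minDS, for some $\delta = \delta(g,T)$. The claim in the proof of \cref{th:bounded_genus_simple} already gives such a bound via additivity of genus: far-apart non-planar balls force extra genus, so error components have weak diameter $O(T g)$. We set $T = f_g(2k+2) + \max\{k+1, r\}$.

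\emph{Conclusion.} \MetaI yields a $\max\{k,1\}$-uniform algorithm with ratio $(11+\eps')\cdot 3 + 1 = 34 + 3\eps'$ and round complexity $T + \delta + 2$. This depends only on $\eps'$ and $g$. Choosing $\eps' = \eps/3$ gives the ratio $34+\eps$, with uniformity parameter $k(\eps)$ and round complexity $C(\eps,g)$.

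\emph{Main obstacle.} The delicate step is checking that the \cite{HKOSV25} algorithm does not require polynomial identifiers, so that the appendix result and \MetaI apply when it is run on local subgraphs with relabelled neighbourhoods. I would first try to show it is order-invariant. If that fails, I would fall back on \MetaII, whose extra conditions on the problem should be satisfied by \minDS.
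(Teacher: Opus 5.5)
Your proof is correct and is essentially the paper's argument: you apply \MetaI with $\sC$ the planar graphs and the Heydt et al.\ algorithm as black box (made uniform via cuttability of \minDS, or via \cref{prop:mds_uniformizable}), with $d=2$, local niceness from \cref{prop:genus_locally_nice}, and $\eps'=\eps/3$, so that $3(11+\eps')+1=34+\eps$. One correction, to your contingency plan only: \MetaII would not help, because it requires $\sD$ to be $T$-locally-$\sC$ and Euler genus-$g$ graphs are not locally planar for $g\ge 1$, so the right fallback is \MetaIII, whose paddability and density hypotheses \minDS satisfies (stars give paddability, and $\MDS(G)\ge(\diam(G)+1)/3$ for connected $G$ gives density).
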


We can even go one step further by applying \MetaI to the uniform approximation \LOCAL algorithm derived from~\cref{th:bounded_genus}, and the class $\sC$ of Euler genus-$g$ graphs. We immediately obtain the following, where $34.01$ can be thought of as $34+\varepsilon$.

\begin{restatable}{corollary}{LocallyBoundedGenus}\label{th:locally_bounded_genus}
    Let $\sC$ be the class of Euler genus-$g$ plus graphs, and $\sD$ be a graph class of asymptotic dimension $d$ that is $r$-locally-$\sC$ for some $r$ large enough. Then, there is a $34.01 (d+1)$-approximation \LOCAL algorithm for \minDS in $\sD$ with round complexity $C(d,g,r)$ for some function $C$.
\end{restatable}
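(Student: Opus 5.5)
The plan is to apply \MetaI once more, with $\sC$ the class of Euler genus-$g$ graphs and the base algorithm $\sfA$ taken from \cref{th:bounded_genus}. Fix $\eps$ small enough that $\RATIOG+\eps\le 34.01$. \Cref{th:bounded_genus} then supplies a $k(\eps)$-uniform $(\RATIOG+\eps)$-approximation $\sfA$ for \minDS on $\sC$, with round complexity $C(\eps,g)$. We choose "$r$ large enough" to mean $r\ge T:=f(2k+2)+\max\{k+1,C(\eps,g)\}$, where $f$ is a $d$-dimensional control function of $\sD$ and $\rho=1$ for MDS. Since $N^T[u]\subseteq N^r[u]$ and $\sC$ is hereditary, $\sD$ being $r$-locally-$\sC$ implies it is $T$-locally-$\sC$. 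The "furthermore" clause of \MetaI then gives ratio $\alpha(d+1)\le 34.01(d+1)$ with round complexity $T+\delta+2$. Here $\delta$ is irrelevant, because the error set is empty, so the complexity depends only on $d$ (through $f$), $g$ and $r$.

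The hypotheses of \MetaI have to be checked one by one.
\begin{enumerate}
\item MDS is $1$-local, since a vertex checks in one round whether $N[u]$ meets the solution. It is also additive, though additivity is not needed in the error-free case.
\item The class $\sC$ is hereditary, because deleting vertices preserves embeddability.
\item $\sC$ is stable under disjoint union if "Euler genus-$g$" is read as "Euler genus at most $g$" and the genus is computed component-wise. By additivity of Euler genus over blocks/components (\cite[Theorem~4.4.3]{MT01}), a disjoint union of genus-$\le g$ graphs can have total genus larger than $g$. I would sidestep this as the phrase "plus graphs" suggests: let $\sC$ be the class of graphs each of whose connected components has Euler genus at most $g$. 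This class is hereditary and closed under disjoint union. I would then re-verify \cref{th:bounded_genus} on it, which holds because a \LOCAL algorithm acts independently on components, and uniformity and approximation ratio are additive over components.
\end{enumerate}

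The main obstacle is that \MetaI requires $\sfA$ not to need polynomial identifiers. The algorithm from \cref{th:bounded_genus} is built from the Heydt et al. planar algorithm through \MetaI, so this property has to be inherited. I would first argue that \MetaI's output $\sfB$ only simulates $\sfA$ on local subgraphs and adds brute-force local choices with lexicographic tie-breaking. These steps use identifiers only for ordering, so $\sfB$ needs polynomial identifiers only if $\sfA$ does. If the planar base algorithm does use them, I would fall back on \cref{prop:mds_uniformizable} together with \MetaII/\MetaIII, which allow polynomial identifiers for MDS specifically.

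The last point is that the statement does not require uniformity of the output, so the uniformity part of \MetaI can be dropped.
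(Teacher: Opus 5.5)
Your proposal is correct and takes the paper's route: apply \MetaI, in its error-free form, to the uniform $(34+\eps)$-approximation of \cref{th:bounded_genus}, with $\sC$ the Euler genus-$g$ class, after noting that $r$-locally-$\sC$ implies $T$-locally-$\sC$ by heredity. Your extra point on closure under disjoint union is sound: it fails for Euler genus $g\ge 1$ by additivity, and your componentwise class repairs this because the base algorithm needs no polynomial identifiers, filling in a detail that the paper's one-line derivation skips; the polynomial-identifier fallback is unnecessary, since the paper records that the Heydt et al.\ algorithm does not require such identifiers.
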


Actually, \cref{th:bounded_genus,th:locally_bounded_genus} hold for the larger class of Euler genus-$g$ plus $a$-apex graphs, i.e. graphs obtained from a graph of Euler genus $g$ where at most $a$ vertices are allowed to have an arbitrary neighborhood. These graphs exclude $K_{3+a,2g+3+a}$ as minor (as, without apices, the graphs exclude $K_{3,2g+3}$), and thus have asymptotic dimension~2 as well. They are also locally nice, since $a$ apices cannot create short-cuts longer than $2a$. So, the approximation ratio is preserved, whereas the round complexity depends on $a$.

Moreover, our results can be applied to get approximation algorithms on graphs of bounded Euler genus for \emph{all} other minimization problems in the Distributed Computing literature that admit a constant-round, constant-ratio approximation on planar graphs (for a comprehensive list of such problems, see the survey of Feuilloley~\cite{Feuilloley23}). 

For instance, let us focus on the \minkTDS problem (studied extensively in~\cite{WAF02}), a generalization of \minDS. In this variant, the goal is to find the smallest subset of vertices such that every vertex outside the set is adjacent to at least $k$ vertices within it. The classical MDS problem corresponds to the special case $k = 1$.
For planar graphs, this problem admits a constant-round $k/(k-2)$-approximation algorithm~\cite{czygrinow2014distributed} for $k>2$, and a $6$-approximation~\cite{CZYGRINOW20171} for $k=2$. Both algorithms do not require polynomial identifiers.

Using the fact that \minkTDS is cuttable and additive (cf. \cref{prop:minkTDS_cuttable}), that Euler genus-$g$ graphs are locally nice (cf.  \cref{prop:genus_locally_nice}), and by applying \cref{prop:weak_meta_theorem} and \MetaI, we get the following:

\begin{restatable}{corollary}{OtherProblems}\label{th:otherproblems}    
    For every $k\ge 2$, there is a $C_1(k)$-uniform $\alpha$-approximation \LOCAL algorithm for \minkTDS in Euler genus-$g$ graphs with round complexity $C_2(g,k)$, for some functions $C_1,C_2$, where $\alpha = 19$ if $k=2$ and $\alpha \le 3k/(k-2) + 1$ otherwise.
\end{restatable}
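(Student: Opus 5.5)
The plan is to derive \cref{th:otherproblems} by feeding the known planar algorithms into \MetaI, with $\sC=\sP$ the class of planar graphs and $\sD$ the class of Euler genus-$g$ graphs. For $k>2$ the input is the constant-round $k/(k-2)$-approximation of~\cite{czygrinow2014distributed}; for $k=2$ it is the $6$-approximation of~\cite{CZYGRINOW20171}. Neither requires polynomial identifiers. The class $\sP$ is hereditary and closed under disjoint union, as \MetaI demands.

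\MetaI needs a $k'$-uniform algorithm, while the black boxes are only known to be approximations, so the first step is to make them uniform. By \cref{prop:minkTDS_cuttable}, \minkTDS is cuttable (with some parameter $\beta$) and additive. It is also $1$-local: a vertex checks that it is in $S$ or has at least $k$ neighbours in $S$, so $\rho=1$. I would then apply \cref{prop:weak_meta_theorem}, the appendix result that a constant-round constant-ratio approximation of a cuttable problem, without polynomial identifiers, is $k'$-uniform. This yields, for some $k'=C_1(k)$ depending on the round complexity of the planar algorithm (itself a function of $k$), a $k'$-uniform $\alpha_0$-approximation on planar graphs. The ratio should stay $\alpha_0=k/(k-2)$ (resp. $6$) when $\beta=1$; if $\beta>1$, it becomes $\beta\alpha_0$.

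Next I would verify the hypotheses on $\sD$. Euler genus-$g$ graphs have asymptotic dimension $d\le 2$ with some control function $f$ depending on $g$, since they are $K_{3,2g+3}$-minor-free and \cite{BBEGx24} applies. Set $T=f(2k'+2)+\max\{k'+1,r\}$. By \cref{prop:genus_locally_nice}, $\sD$ is $T$-locally $\delta$-nice w.r.t. $\sP$ and the problem, for some $\delta=\delta(g,T)$. The argument is the additivity-of-genus packing from \cref{claim:bdd_genus_locally_nice}: there are at most $g$ disjoint non-planar balls, so the components of $N^{2}[X]$ have bounded weak diameter. \MetaI then produces a $\max\{k',1\}$-uniform $(3\alpha_0+1)$-approximation with round complexity $T+\delta+2=C_2(g,k)$. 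This gives $3\cdot 6+1=19$ for $k=2$ and $3k/(k-2)+1$ for $k>2$. The ``$\le$'' in the statement absorbs the case where the exact constants differ slightly.

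The main obstacle is the uniformization step, specifically ensuring that cuttability of \minkTDS holds with $\beta=1$ so that no extra factor enters the ratio. For MDS, \cref{claim:cutting_trick_size} takes $Y=N[\text{optimal solution restricted near }X]$. For $k$-tuple domination the analogous choice would be $Y=X\cup S^*$, where $S^*$ is an optimal solution feasible on $X$. Every $x\in X\setminus S^*$ keeps its $k$ neighbours from $S^*$ inside $G[Y]$. However, vertices of $S^*$ are themselves in the solution, so they need no domination, and $S^*$ is feasible on all of $G[Y]$, giving $\beta=1$. I would confirm that this is exactly the content of \cref{prop:minkTDS_cuttable}, and that hereditarity of $\sP$ makes $G[Y]\in\sC$, before invoking the appendix uniformization.
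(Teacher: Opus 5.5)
Your proposal is correct and follows essentially the same route as the paper. You use cuttability with parameter $1$ and additivity (\cref{prop:minkTDS_cuttable}) together with \cref{prop:weak_meta_theorem} to make the planar black boxes uniform without changing their ratio, and then apply \MetaI with $d=2$, $\rho=1$ and local niceness from \cref{prop:genus_locally_nice}, which gives $3\cdot 6+1=19$ for $k=2$ and $3k/(k-2)+1$ for $k>2$.
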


For \minCDS~\cite{WAF02}, which is an example of non-local problem,
we can obtain a constant-round approximation with ratio $O(\sqrt{g})$ in Euler genus-$g$ graphs by applying the "connected" reduction from MDS in bounded expansion graphs due to~\cite[Theorem 5.8]{AOdMx18}. This gives an approximation ratio of $2\alpha \nabla_1 \le 68\sqrt{3g/2} + O(1)$, where $\alpha = 34+\eps$ is the best approximation ratio for MDS in Euler genus-$g$ graphs (\cref{th:bounded_genus}), and $\nabla_1 \le \sqrt{3g/2}+3$ (\cref{prop:nabla1}) is the expansion of Euler genus-$g$ graphs. We leave open the question of improving the ratio to a constant independent of the genus.

\newpage

\begin{acks}
    The first, third and fourth authors have been partially supported by the French ANR projects ENEDISC (ANR-24-CE48-7768) and TEMPOGRAL (ANR-22-CE48-0001). 
    The second author was supported in part by the Research Council of Finland, Grants 363558 and 359104.
    The sixth author was supported by the Deutsche Forschungsgemeinschaft (DFG, German Research Foundation) under Germany's Excellence Strategy - The Berlin Mathematics Research Center MATH+ (EXC-2046/1, project ID:390685689).
\end{acks}

\balance
\bibliographystyle{my_alpha_doi}
\newcommand{\etalchar}[1]{$^{#1}$}

\balance

\newpage
\appendix

\section{When the black-box algorithm requires polynomial identifiers}
\label{polynomial_ids_required}

As we have seen, \cref{prop:weak_meta_theorem} applies to lots of optimization problems. However, it does not fulfil the polynomial identifiers requirement of the \LOCAL model. We would like to have the same statement to apply in the \LOCAL model with polynomial identifiers with the same assumptions. However, assigning new identifiers to vertices while maintaining control of the output of the algorithm is very challenging. Ramsey-type techniques (for example, for finding lower bounds on the approximation ratio on particular graph classes~\cite{HLS14}) can be useful for this, but only on very particular graphs: ones where each neighborhood is similar. In general graphs, to the best of our knowledge, we do not know how to adapt this technique.
Thankfully, for locally-$\sC$ classes, where the set of errors is empty, we can weaken the statement of \cref{prop:weak_meta_theorem} in a way that we can still apply a variation of \MetaI afterwards.
The trick is to modify the coloring given by the asymptotic dimension so that all color classes are almost balanced in size (see \cref{lem:balance_size}).
With this technique, we just need to apply uniformity on sets that are large.
Formally, \LOCAL algorithm $\sfA$ is an $\eps$-weakly \emph{$k$-uniform} $\alpha$-approximation for $\Pi$ in a class of graphs $\sC$, if for all $G\in\sC$ and $S\subseteq V(G)$ such that $|S|\geq \big\lfloor\eps|V(G)|\big\rfloor$, we have $\card{\sfA(G)\cap S} \le \alpha \cdot \OPT_\Pi(G,N^k[S])$.
We say that the $d$-local problem $\Pi$ is \emph{$\eps$-weakly uniformizable} on $\sC$ if for every integer $d > 0$ and $r$ and any real $\alpha$, there exists some integers $k, r'$ and real $\beta$ such that for every $r$-round $\alpha$-approximation $\sfA$ of $\Pi$ on $\sC$ in constant time, there exists some $r'$-round \LOCAL algorithm $\sfB$ that is a $\eps$-weakly $k$-uniform $\beta$-approximation of $\Pi$ on $\sC$ in constant time.
The function $(r,\alpha)\mapsto (k,\beta)$ is called the (weak) \emph{binding} function.

It is a corollary of the proof of \cref{prop:weak_meta_theorem} that any cuttable problem is weakly uniformizable, in the \LOCAL model this time (see \cref{cor:cutting_trick_implies_weakly_uniformizable}).
Therefore, we can use \cref{lem:balance_size} to only work on color classes of linear size (which preserves the polynomial-ness of identifiers), and prove the following.

\begin{restatable}[Meta-Theorem~II]{theorem}{MetaTheoremNonPaddable}\label{th:metatheorem_nonpaddable}
    Let $d,k$ be integers $\alpha > 0$ a real, $\sfA$ be a $\frac{1}{d+1}$-weakly $k$-uniform $\alpha$-approximation \LOCAL algorithm for a $\rho$-local problem $\Pi$ in a class of graphs $\sC$ with round complexity $r$, where $\sC$ is hereditary and stable by disjoint union.
    Let $\sD$ be a graph class of asymptotic dimension $d$ with $d$-dimensional control function $f$, that is $T$-locally-$\sC$ for $T = f(6k+6\rho) + 4k+4\rho + \max\set{k+\rho,r}$. Then $\sfA$ is a $k$-uniform $\alpha (d+1)$-approximation algorithm on $\sD$.
\end{restatable}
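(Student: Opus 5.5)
We follow the proof strategy of \cref{prop:local_approx}: bound $\card{\sfA(G)\cap S}$ color class by color class, using the asymptotic-dimension coloring of $G\in\sD$. The new ingredient is that $\sfA$ is only weakly uniform, so every set on which we invoke uniformity must contain at least a $\frac{1}{d+1}$ fraction of the vertices of the graph it lives in.

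\textbf{Step 1 (coloring).} Fix $G\in\sD$ and $S\subseteq V(G)$. Let $R=2k+2\rho$. Apply \cref{prop:asdim_def} to $G^{R}$, or rather to $G^{3R}$, with $d+1$ colors. This gives color classes $C_0,\dots,C_d$. Each class splits into pieces $B^i_j$ of weak diameter at most $f(3R)$ in $G$, and distinct pieces of the same color are at distance more than $3R$. The extra factor $3$ is what the term $f(6k+6\rho)$ in $T$ pays for. Next, use \cref{lem:balance_size} to adjust the coloring so that the classes are nearly balanced. The conclusion we need is that for each $i$, the set $S_i=S\cap C_i$, or a suitably enlarged set around it, is large relative to the auxiliary graph built below. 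The slack $4k+4\rho$ in $T$ should absorb the radius increase caused by the balancing.

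\textbf{Step 2 (local gluing).} For each color $i$, define $H_i$ as the disjoint union over $j$ of the graphs $G[N^{k+\rho+r}[B^i_j]]$. Since
\[
T\ge f(6k+6\rho)+4k+4\rho+\max\set{k+\rho,r},
\]
each component lies inside some $G[N^T[u]]$, so it belongs to $\sC$ because $\sC$ is hereditary. By stability under disjoint union, $H_i\in\sC$. Because the pieces are more than $3R$ apart, these balls are pairwise disjoint. Moreover, every vertex of $S_i$ has the same radius-$r$ view in $H_i$ as in $G$. Hence $\sfA(G)\cap S_i=\sfA(H_i)\cap S_i$, and $\OPT_\Pi(H_i,N^k[S_i])\le\OPT_\Pi(G,N^k[S_i])$.

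For the second inequality, restrict an optimal solution of $G$ to $N^{k+\rho}[S_i]$. By $\rho$-locality, feasibility on $N^k[S_i]$ depends only on that radius. Also, the sets $N^{k+\rho}[B^i_j]$ for different $j$ do not interact.

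Identifiers in $H_i$ are inherited from $G$. Since $\card{V(H_i)}\le\card{V(G)}$ and, by balancing, $\card{V(H_i)}$ is at least a constant fraction of $\card{V(G)}$, these identifiers remain polynomial in $\card{V(H_i)}$. This is exactly why \MetaII needs no assumption that $\sfA$ avoids polynomial identifiers.

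\textbf{Step 3 (weak uniformity and summation).} Apply $\frac{1}{d+1}$-weak $k$-uniformity of $\sfA$ on $H_i$ to the set $S_i$. This is allowed once $\card{S_i}\ge\floor{\card{V(H_i)}/(d+1)}$. We get
\[
\card{\sfA(G)\cap S_i}\le\alpha\cdot\OPT_\Pi(G,N^k[S])\quad\text{for each } i.
\]
Summing over the $d+1$ colors gives $\card{\sfA(G)\cap S}\le\alpha(d+1)\,\OPT_\Pi(G,N^k[S])$.

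\textbf{Main obstacle.} The hard step is making the size condition hold for an arbitrary $S$, possibly tiny. Two fixes are possible.

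The first fix is to build $H_i$ only from pieces meeting $S$, and to pad $H_i$ with copies of parts of the other color classes, which also lie in $\sC$ by heredity. Then $S_i$ becomes the large part only if we also enlarge the set we measure. This suggests the second fix.

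The second fix, which I would try first, is to apply uniformity not to $S_i$ but to $S'_i=V(H_i)\cap N^{2k+2\rho}[C_i]$. This set is large by the balancing of \cref{lem:balance_size}. We then bound $\card{\sfA(H_i)\cap S_i}\le\card{\sfA(H_i)\cap S'_i}$ only after restricting $H_i$ to pieces meeting $S$, so that $\OPT_\Pi(H_i,N^k[S'_i])$ is still controlled by $\OPT_\Pi(G,N^k[S])$ through the cut radius.

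Getting these radii consistent is exactly where the constants $6k+6\rho$ and $4k+4\rho$ in $T$ come from. I expect this bookkeeping to be the delicate part.
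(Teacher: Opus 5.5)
\paragraph{Gap.} Your setup matches the paper: a balanced colouring, gluing the pieces of one colour into a graph of $\sC$, and inherited identifiers staying polynomial because the glued graph contains a whole colour class. But the argument has a genuine gap exactly at the step you flag.

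Step~3 invokes $\frac{1}{d+1}$-weak uniformity on $H_i$ for the set $S_i=S\cap C_i$. This is only allowed when $\card{S_i}\ge\floor{\card{V(H_i)}/(d+1)}$. \cref{lem:balance_size} makes $C_i$ large, not $S\cap C_i$. So for small $S$ the bound $\card{\sfA(G)\cap S_i}\le\alpha\cdot\OPT_\Pi(G,N^k[S])$ is unsupported, and neither of your fixes closes this.
\begin{enumerate}
\item Shrinking $H_i$ to the pieces meeting $S$ makes $H_i$ small. The identifiers inherited from $G$ are then no longer polynomial in $\card{V(H_i)}$, which is exactly what the balancing was protecting.
\item Measuring a larger set $S'_i$ only gives $\card{\sfA(G)\cap S_i}\le\alpha\cdot\OPT_\Pi(H_i,N^k[S'_i])$. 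Here $S'_i$ must contain a $\frac{1}{d+1}$ fraction of a graph large enough for the inherited identifiers to be polynomial in its size. So this optimum is in general polynomially large in $n$ (e.g.\ $G$ a long path, $\Pi$ = MDS), while $\OPT_\Pi(G,N^k[S])$ can be $O(1)$ (e.g.\ $S$ a single vertex). No choice of radii repairs this.
\end{enumerate}
Padding with other parts of $G$ fails for the same reason: they are not cheap. The only way to make a large measured set cheap is to add fresh graphs of $\sC$ with bounded optimum. That is the paddability hypothesis of \MetaIII, which is not assumed here, and even there it costs an additive $\alpha\omega$.

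\paragraph{What the paper does.} The paper's proof never faces this obstacle because it proves less. It applies \cref{lem:balance_size} once, with $r=2k+2\rho$. This single application accounts for both $f(6k+6\rho)$ (the internal scale $3r$) and $4k+4\rho$ (the $+2r$ in the diameter bound); no further radius bookkeeping is involved. It then applies weak uniformity in the glued graph $G'$ to the whole class $C_i$. This is admissible since $\card{C_i}\ge\floor{n/(d+1)}\ge\floor{\card{V(G')}/(d+1)}$. It gets $\card{\sfA(G)\cap C_i}\le\alpha\cdot\OPT_\Pi(G,N^k[C_i])\le\alpha\cdot\OPT_\Pi(G)$ and sums over the $d+1$ classes.

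Its conclusion is only that $\sfA$ is an $\alpha(d+1)$-approximation on $\sD$; $k$-uniformity for arbitrary $S$ is not derived. So your Steps~1--3, with uniformity applied to $C_i$ itself, reproduce the paper's argument. The extension to arbitrary $S$ that you are after is not supplied by the paper's proof either.

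\paragraph{A correction to Step~2.} After balancing, the pieces are $R$-components. Distinct pieces of one colour are therefore only guaranteed to be more than $R=2k+2\rho$ apart, not $3R$. The radius-$(k+\rho+r)$ balls around them can then overlap for every $r\ge 1$; even with $3R$-separation they can overlap whenever $r>2k+2\rho$. So your $H_i$ is one of two things:
\begin{itemize}
\item a union of overlapping copies, which duplicates identifiers; or
\item an induced subgraph whose components may chain several pieces and leave every $T$-ball, so membership in $\sC$ does not follow.
\end{itemize}
The paper is equally terse here: it asserts $G'\in\sC$ from stability under disjoint union. Still, the disjointness you claim does not follow from the separation you actually have.
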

The proof of this can be found in \cref{appendix:non-paddable_metatheorem}.

When the error set is not empty, we require further assumptions. For a discussion on why those assumptions are required, we refer to \cref{appendix:assumptions_meta_theorem_paddable}
A problem $\Pi$ is called \emph{$\omega$-paddable} if for any integer $N$, there exists some $G\in\sC$ on at least $N$ vertices such that $\OPT_\Pi(G)\leq \omega$.
We also add a small property that the problem needs to satisfy: a $d$-local problem $\Pi$ is \emph{dense} in $\sC$ if there exists some unbounded function $g$ that for any $G\in\sC$ of diameter $D$, we have $\OPT_\Pi(G)\geq \beta\cdot D$. We require that\footnote{\label{footnote:remove_dense}It is possible to get rid of this property if one adds that the graphs in the definition of $\omega$-paddable also have bounded diameter. Then, the rough idea is that, in the proof of \MetaIII, the algorithm will be able to compute the optimal solution on the graph $\widehat{G}$ of the proof. Then one gets \cref{eq:omega} without the $\omega$ term.} $\liminf f = +\infty$. 
This essentially mean that there are no large (in diameter) portion of the graph with no vertices of the solution; every vertex is close to a vertex of the solution.
Under the both additional assumptions of paddability and density, we can prove an analogue of \MetaII. See \MetaIII for the full statement.

\paragraph{Summary Table of the Results.}

Here is a simplified table of the assumptions necessary to apply our meta-theorem that transform a constant-round constant-factor approximation algorithm $\sfA$ on the class $\sC$ to another constant-round constant-factor approximation algorithm on the class $\sD$. We assume that $\sC$ is hereditary and stable by disjoint union, and that $\sD$ is of bounded asymptotic dimension.

\begin{table}[H]
    \begin{center}
    \begin{tabular}{|c|c|c|}
        \cline{2-3}
        \multicolumn{1}{c|}{}& $T$-error set is empty & $T$-error set is not empty \\
        \multicolumn{1}{c|}{}& ($\sD$ is locally-$\sC$) & ($\sD$ is locally nice) \\
        \hline
        & $\Pi$ is cuttable & $\Pi$ is cuttable and additive \\
        $\sfA$ \emph{does not assume}&&\\
        polynomial IDs & (\MetaI and & (\MetaI and \\
        & \cref{prop:weak_meta_theorem}) & \cref{prop:weak_meta_theorem}) \\
        \hline
        & $\Pi$ is cuttable & $\Pi$ is cuttable, additive,\\
        $\sfA$ \emph{assumes} & & paddable and dense\footnotemark{}\\
        polynomial IDs&&\\
        &(\MetaII and & (\MetaIII and \\
        & \cref{cor:cutting_trick_implies_weakly_uniformizable}) & \cref{cor:cutting_trick_implies_weakly_uniformizable}) \\
        \hline
    \end{tabular}
    \caption{Simplified table of the assumptions necessary to apply \MetaI, \MetaII and \MetaIII, along with their references.} 
    \end{center}
\end{table}
\footnotetext{It is possible to remove the dense condition, at the expense of making the paddable condition stricter. See~\cref{footnote:remove_dense}.}

\section{Discussions}

\subsection{Definitions for Locally Verifiable Problems}
\label{sec:locally_verifiable_problems}

Let us define $\rho$-local verifiable problems in more detail. Let $G$ be a graph with a set $\Sigma_{\text{in}}$ of input labels on vertices or edges, which includes a unique identifier for each vertex.
If $v\in V(G)$, $\sfA$ is a \LOCAL algorithm, and $X$ is a set of vertices of $G$, we use $\sfA(G, X, v)$ to represent the output of vertex $v$ when $\sfA$ is run on $G$ with input labels $\Sigma_{\text{in}}\sqcup X$ (the local inputs form an encoding of $X$, i.e. every vertex knows if it is contained or not in $X$).
A simple graph problem $\Pi$ is $\rho$-local if there exists some \LOCAL algorithm $\sfR$ which runs in $\rho$ rounds such that for all subsets of vertices $X$, the following holds:
\begin{itemize}
    \item if $X$ is a feasible solution of the problem $\Pi$ on $(G,\Sigma_{\text{in}})$, then for all vertices $v$ of $G$, we have $\sfA(G, X, v) = 1$, or
    \item if $X$ is not a feasible solution of the problem $\Pi$ on $(G,\Sigma_{\text{in}})$, then there exists $v\in V(G)$ such that $\sfA(G, X, v) = 0$.
\end{itemize}

\subsection{Why cutability seems unavoidable}

Cutability is not a trivial property. 
Consider the following very artificial problems: we want to output a set (as small as possible) which contains every vertex of degree $0$ or $1$ in the graph.
Of course, this problem admits a trivial $1$-round algorithm: take all vertices of degree $0$ or $1$.
However, our goal is to be able to use algorithms in a black-box manner.
Instead, we consider the $1$-round algorithm $\sfA$ that outputs every vertex of degree not equal to $2$.
Take $\sC$ the class of forests, on which $\sfA$ is a $2$-approximation.
This is because there are at least as many leaves as vertices of degree $3$ or more.

Now, let $\sD$ be the closure (under subgraphs) of $q$-subdivisions of ladders\footnote{The $q$-subdivision of a graph $G$ is a copy of $G$ where every edge is replaced by a path with $q$ inner vertices.}, where $q$ is as large as we like and a ladder is the Cartesian product of an edge and an arbitrarily long path. Clearly, every graph in $\sD$ is $q$-locally in $\sC$. Additionally $\sD$ has asymptotic dimension $1$, (one can check that they are $K_4$-minor-free, and so have bounded treewidth hence asymptotic dimension $1$~\cite{BBEGx24}).
However, $\sfA$ fails catastrophically on $\sD$.
Indeed, there are graphs in $\sD$ that do not have any leaves or isolated vertices (one can just take the $q$-subdivision of a ladder), so that the best solution on this graph is empty.
Of course, in this very specific scenario there does exist an alternative algorithm which gives a $1$-approx in forests and actually works in $\sD$. However, the point is to avoid modifying the black-box $\sfA$.
More generally, such degree-based problems seem to have uncontrollable behavior across local-global boundaries: in $\sC$ the algorithm can afford to take more vertices than necessary because of the existence of some other vertices (such as the large amount of leaves in the case of forests), whereas in $\sD$ these vertices do not exist.

\subsection{Cutability of \texorpdfstring{\textsc{Distance-Exactly-$d$ Dominating Set}}{Distance-Exactly-d Dominating Set}} 

In some graph classes such as planar graphs and minor-free classes, \cref{prop:MDS_distance_d_cuttable} can be extended to \textsc{Minimum Distance-Exactly-$d$ Dominating Set}.
Given a graph $G$ and $d$ a positive integer, a subset of vertices $D$ is a called a \emph{distance-exactly-$d$ dominating set} of $G$ if every vertex $v$ of $V(G)\setminus D$ is at distance exactly $d$ from some vertex in $D$. In the \textsc{Minimum Distance-Exactly-$d$ Dominating Set} problem, one asks for \emph{distance-exactly-$d$ dominating set} of minimum size.
In graph classes that are closed under the addition of leaves to the graphs, \textsc{Distance-$(\le\!d)$ Dominating Set} is cuttable with parameter $d$.
To give a rough intuition, in these classes, one can attach a path of length $d-1$ to a vertex $v$, put $v$ and all vertices in the path to the dominating set, and thus simulate a distance-$(\le\!d)$ dominating set that way.

\subsection{On the additional assumptions of the Meta-Theorem for paddable problems}\label{appendix:assumptions_meta_theorem_paddable}

In the statement of \MetaIII, we require problems that are paddable, i.e., we want to exclude graph classes that admit arbitrarily large graphs whose difficulty is concentrated in a bounded-diameter region with bounded optimum. Otherwise, a padding trick can inflate instance size (hence identifier ranges) without increasing hardness. 
One cannot hope for a version of \MetaIII with non-empty error sets and for problems without this property.
Here is the scenario we want to avoid. One could take any approximation algorithm on a class of graph $\sC$, and construct a class $\sD$ where graphs of size $n$ are formed by taking a graph of $\sC$ of size at most $\sqrt{n}$ (or $n^c$ for some $0<c<1$), and attaching somewhere an error set $X$ of size at least $n-\sqrt{n}$, which has small radius and small (bounded) optimum solution.
Then, this version of \MetaIII would give a good approximation on this class $\sD$. However, solving the problem $\Pi$ efficiently on $\sD$ would require solving it on $\sC$, as the error set is of small radius and has a small optimum solution.
Put in a nutshell, there is roughly a reduction from the class $\sD$ to the class $\sC$.
However, as the error set is large in size, one could attribute very large identifiers to the portion of the graph in $\sC$: setting $|X|\geq n-\sqrt{n}$ can artificially inflate the identifiers by a square (and one can get a polynomial with power $C$ by choosing $|X|\geq n^{1/C}$).
If the algorithm does not need the polynomial identifiers, everything still works. If it is not the case however, we have transformed an easy instance to another easy instance where identifiers are larger. This is not possible if the algorithm truly needs the polynomial identifiers, i.e., if the approximation guarantee depends on the vertex identifiers.\footnote{For these reasons, this might hint towards considering the size of the largest identifier as complexity parameter of its own, just like the size of the graph~\cite{balliu2026distributedcomplexitylandscapetrees}.} Thus, we require $\sC$ to be paddable or $\sD$ to be not paddable. We prove the result for the case where $\sC$ is paddable only, as the proof is similar for the other case.
\footnote{If we require the algorithm to have knowledge the size of the graph, we can have \MetaIII with this additional property. However, to achieve this, we need to change the definition of paddable: we now request, for any $n$, a graph of size $n$ (instead of size $\geq n$) with the desired properties.}

\section{Proofs}

\subsection{Approximation for planar graphs}
\label{appendix:planar}
\ApproxPlanar*

\begin{proof}
\begin{figure}[h!]
\centering
\begin{tikzpicture}[scale=1]

\node[vertex,draw=mygreen] (v1)  at (4.5,3.5)   {1};
\node[vertex,draw=myred,fill=myred!30!white, ultra thick] (v2)  at (4.5,2.5) {2};
\node[vertex,line width=1pt,draw=primary] (v3)  at (3,3)   {3};
\node[vertex,line width=1pt,draw=primary] (v4)  at (3,2)   {4};
\node[vertex,draw=mygreen] (v5)  at (5.5,2) {5};
\node[vertex,line width=1pt,draw=primary] (v6)  at (5,1)   {6};
\node[vertex,line width=1pt,draw=primary] (v7)  at (4,1.8) {7};
\node[vertex,line width=1pt,draw=primary] (v8)  at (4,0.8) {8};

\node[vertex, draw=mygreen] (v10) at (1.5,2) {10};
\node[vertex, draw=mygreen] (v9)  at (2.5,1.2) {9};
\node[vertex,line width=1pt, draw=primary] (v11) at (1,1) {11};
\node[vertex,line width=1pt, draw=primary] (v12) at (1.8,0.3) {12};
\node[vertex,line width=1pt, draw=primary] (v13) at (3,0.3) {13};
\node[vertex,line width=1pt, draw=primary] (v14) at (0.5,0.2) {14};
\node[vertex,line width=1pt, draw=primary] (v15) at (0,1) {15};
\node[vertex,line width=1pt, draw=primary] (v16) at (-0.5,0.2) {16};
\node[vertex,line width=1pt, draw=primary] (v17) at (-1,1) {17};
\node[vertex,line width=1pt, draw=primary] (v18) at (-1.5,0.2) {18};
\node[vertex,line width=1pt, draw=primary] (v19) at (-2,1) {19};
\node[vertex,line width=1pt, draw=primary, draw=none] (v20) at (-2.5,0.2) {$\dots$};

\draw[edge] (v1)--(v2)--(v5)--(v6)--(v8)--(v7)--(v2);
\draw[edge] (v3)--(v1);
\draw[edge] (v3)--(v4)--(v7);
\draw[edge] (v4)--(v2);
\draw[edge] (v7)--(v5);
\draw[edge] (v8)--(v4);

\draw[edge] (v10)--(v4);
\draw[edge] (v10)--(v9);
\draw[edge] (v9)--(v4);
\draw[edge] (v9)--(v8);
\draw[edge] (v9)--(v12);
\draw[edge] (v10)--(v11);
\draw[edge] (v11)--(v12);
\draw[edge] (v11)--(v14);
\draw[edge] (v12)--(v14);
\draw[edge] (v11)--(v15);
\draw[edge] (v9)--(v13);
\draw[edge] (v12)--(v13);
\draw[edge] (v15)--(v16);
\draw[edge] (v15)--(v17);
\draw[edge] (v14)--(v16);
\draw[edge] (v14)--(v15);
\draw[edge] (v17)--(v18);
\draw[edge] (v17)--(v19);
\draw[edge] (v16)--(v18);
\draw[edge] (v16)--(v17);
\draw[edge] (v19)--(v20);
\draw[edge] (v18)--(v20);
\draw[edge] (v18)--(v19);
\end{tikzpicture}
\caption{Depicted in green is the dominating set $D_u$ for $u=2$. Note that $d(u)$ is $1$ or $5$. }\Description{~}
\end{figure}

Let $G$ be a planar graph, let $Y$ be a nice dominating set of $G$ and let $D$ be the dominating set of $G$ returned by the algorithm. Let $u\in V(G)$ and let $D_u$ be the dominating set computed by $u$. That means
\begin{itemize}
    \item $D_u$ is nice
    \item $D_u$ is lexicographically smallest nice set dominating $N^4[u]$
\end{itemize}

The first key observation is that given disjoint subsets  $A,B,C$ of $V(G)$ such that  $G = G[A \cup C]\cup G[B\cup C]$, that is, $C$ separates $A$ from $B$ (see Figure~\ref{fig:ABC}). If $G[B\cup C]$ is in the fourth neighbourhood of $u \in B\cup C$, the value of $D_u \cap B$ is completely determined by $D_u \cap C$.
In other words, there is a unique best way to extend $D_u \cap C$ to $B$. Therefore, for such vertices $u$, there are at most $2^{|C|}\cdot MDS(G,B\cup C)$ different candidates for $d(u)$.

\begin{figure}[htbp!]
    \centering
\begin{tikzpicture}[scale=0.5,use Hobby shortcut,every node/.style={vertex,draw=primary}]
        \draw [fill=mypink,draw=none] ([closed]3, 24) .. (7, 21) .. (7, 18) .. (3, 15) .. (-1, 18) .. (-1, 21);

        \draw [fill=myorange,draw=none] ([closed]1, 21) .. (3, 22) .. (5, 21) .. (5, 18) .. (3, 17) .. (1, 18);

        \node [rectangle,draw=none,fill=none,text=Red!80!black] (42) at (3, 20) {\LARGE\bm $B$};
		\node [rectangle,draw=none,fill=none,text=Fuchsia] (43) at (1, 22.5) {\LARGE\bm $C$};
        \node [rectangle,draw=none,fill=none] (44) at (-2, 19.5) {\LARGE\bm $A$};

        \clip (3, 19.5) circle (4.8cm);
        
        \node (1) at (3, 16) {};
		\node (2) at (0, 18) {};
		\node (3) at (6, 18) {};
		\node (4) at (6, 21) {};
		\node (5) at (3, 23) {};
		\node (6) at (0, 21) {};
		\node (7) at (-1, 23) {};
		\node (8) at (-0.25, 25.5) {};
		\node (9) at (8.25, 19) {};
		\node (10) at (5, 14) {};
		\node (11) at (-2.5, 15.75) {};
		\node (12) at (8, 22.5) {};
		\node (13) at (5.75, 24.5) {};
		\node (14) at (1, 25) {};
		\node (15) at (-2.75, 20.75) {};
		\node (16) at (-3.25, 23.5) {};
		\node (17) at (-2.25, 24.75) {};
		\node (18) at (7.25, 15.75) {};
		\node (19) at (6, 18) {};
		\node (20) at (3, 16) {};
		\node (21) at (0, 14.25) {};
		\node (22) at (-1.25, 15.75) {};
		\node (23) at (1.75, 13.75) {};
		\node (24) at (2, 21) {};
		\node (25) at (4, 21) {};
		\node (26) at (3, 18) {};
		\node (28) at (4.25, 19) {};
		\node (29) at (1.75, 19) {$u$};

        \draw[edge] (2) to (1);
		\draw[edge] (4) to (5);
		\draw[edge] (5) to (6);
		\draw[edge] (6) to (2);
		\draw[edge] (1) to (3);
		\draw[edge] (3) to (4);
		\draw[edge] (14) to (5);
		\draw[edge] (6) to (7);
		\draw[edge] (7) to (14);
		\draw[edge] (14) to (8);
		\draw[edge] (8) to (7);
		\draw[edge] (15) to (6);
		\draw[edge] (16) to (7);
		\draw[edge] (17) to (7);
		\draw[edge] (5) to (13);
		\draw[edge] (13) to (4);
		\draw[edge] (4) to (12);
		\draw[edge] (3) to (9);
		\draw[edge] (19) to (18);
		\draw[edge] (20) to (10);
		\draw[edge] (21) to (23);
		\draw[edge] (23) to (22);
		\draw[edge] (22) to (2);
		\draw[edge] (20) to (23);
		\draw[edge] (21) to (22);
		\draw[edge] (22) to (11);
		\draw[edge] (5) to (24);
		\draw[edge] (5) to (25);
		\draw[edge] (25) to (24);
		\draw[edge] (25) to (4);
		\draw[edge] (24) to (29);
		\draw[edge] (29) to (6);
		\draw[edge] (29) to (2);
		\draw[edge] (20) to (26);
		\draw[edge] (26) to (29);
		\draw[edge] (29) to (28);
		\draw[edge] (28) to (25);
		\draw[edge] (26) to (28);
		\draw[edge] (28) to (19);
    \end{tikzpicture}
    \caption{The depiction of $A,B,C$ in the proof of the key observation.}\Description{~}
    \label{fig:ABC}
\end{figure}

\begin{figure}
\centering
\resizebox{.95\textwidth}{!}{\begin{tikzpicture}[scale=0.5,every path/.style={draw=primary}] 
    \node (before) at (0,0){
        \begin{tikzpicture}[scale=0.43]  
        \def\height{1.5};
        \def\width{2};
        \node[snode] (y_2) at (0, 0){$y_2$};
        \node[snode] (y_1) at (3*\width,0){$y_1$};
        \node[snode] (y_3) at (5*\width,0){$y_3$};
        \node[snode] (y_4) at (3*\width,-3){$y_4$};
        \node[green node] (d_1) at (3*\width, 2*\height) {$d_1$};
        \node[green node] (d_3) at (2*\width-0.7,0.5){$d_3$};
        \node[green node] (d_2) at (1*\width,1.5*\height){$d_2$};
        \node[green node] (d_6) at (1*\width+0.2,-\height-0.3){$d_6$};
        \node[normal node] (d_4) at (4*\width,\height){$d_4$};
        \node[normal node] (d_9) at (4.5*\width,2.5*\height){$d_9$};
        \node[normal node] (d_5) at (4*\width, -1*\height){$d_5$};
        \node[normal node] (d_7) at (3*\width,-5.7){$d_7$};
        \node[normal node] (d_8) at (1.8*\width,-5.2){$d_8$};
        \draw[ultra thick,grun,{Latex[round]}-] (y_1) to (d_1);
        \draw[ultra thick,grun,{Latex[round]}-] (y_1) to (d_3);
        \draw[ultra thick,grun] (d_1) to (d_2);
        \draw[ultra thick,grun] (d_3) to (d_2);
        \draw[ultra thick,grun] (d_3) to (d_6);
        \draw[ultra thick,grun,{Latex[round]}-] (y_2) to (d_2);
        \draw[ultra thick,grun,{Latex[round]}-] (y_2) to (d_6);    
        \draw[{Latex[round]}-] (y_3) to (d_4);
        \draw[{Latex[round]}-] (y_3) to (d_5);
        \draw (d_1) to (d_4);
        \draw (d_1) to (d_5);
        \draw (d_1) to (d_9);
        \draw (d_4) to (d_9);
        \draw[{Latex[round]}-] (y_3) to (d_9);
        \draw(y_4) to (y_1);
        \draw (y_4) to (d_5);
        \draw (d_7) to (d_8);
        \draw[-{Latex[round]}] (d_7) to (y_4);
        \draw[-{Latex[round]}] (d_8) to (y_4);
        \node at (5*\width,-3){$G$};
        \end{tikzpicture}
    };
    \node(after) at (before.east)[anchor=east,xshift=6cm,yshift=-0.2cm]{
        \begin{tikzpicture}[scale=.43]
        \def\height{1.5};
        \def\width{2};
        \node[snode] (y_2) at (0, 0){$y_2$};
        \node[snode] (y_1) at (3*\width ,0){$y_1$};
        \node[snode] (y_3) at (5* \width, 0){$y_3$};
            \node[snode] (y_4) at (3*\width ,-3){$y_4$};
        \draw[ultra thick, grun, bend left=20] (y_2) to (y_1);
        \draw[ultra thick, grun,bend left=20] (y_1) to (y_2);
        \draw[bend left=20] (y_1) to (y_3);
        \draw[bend left=20] (y_3) to (y_1);
        \draw ($(y_4)+(-0.8,-0.8)$) circle (1.3cm);
        \draw (y_1) to (y_4);
        \draw[ultra thick,grun] (y_1) to (y_2);
        \draw (y_3) to (y_4);
        \draw (y_1) to (y_3);
        \node at (5*\width, -3){$H(G)$};
        \node[snode] (y_4) at (3*\width ,-3){$y_4$};
        \end{tikzpicture}
    };
    \node(after) at (before.east)[anchor=east,xshift=12cm,yshift=-0.2cm]{
        \begin{tikzpicture}[scale=.43]
        \def\height{1.5};
        \def\width{2};
        \node[snode] (y_2) at (0, 0){$y_2$};
        \node[snode] (y_1) at (3*\width ,0){$y_1$};
        \node[snode] (y_3) at (5* \width,0){$y_3$};
        \node[snode] (y_4) at (3*\width ,-3){$y_4$};
        \draw[ultra thick, grun, bend left=20] (y_2) to (y_1);
        \draw[ultra thick, grun,bend left=20] (y_1) to (y_2);
        \draw[bend left=20] (y_1) to (y_3);
        \draw[bend left=20] (y_3) to (y_1);
        \draw (y_1) to (y_4);
        \draw (y_4) to (y_3);
        \node at (5*\width,-3){$H'(G)$};
        \end{tikzpicture}
    };
\end{tikzpicture}
}
\caption{The plane graphs $H$ and $H'$ obtained from $G$.}\Description{~}
\label{fig:HH'G}
\end{figure}
We now discuss how to use that observation to reach the desired conclusion. We associate to each vertex $u \in V\setminus Y$ an arbitrary vertex $y_u \in Y \cap N(u)$. We consider the plane multigraph $H$ obtained from an arbitrary planar embedding of $G$ by contracting every edge $uy_u$ for $u \in V\setminus Y$, but keeping any resulting loop or multiple edge. Note that $V(H) = Y$ by construction. 
Let $H'$ be the plane multigraph obtained from $H$ by iteratively deleting any edge incident to a face bounded by one edge (i.e. a loop) or to two faces bounded by two edges (i.e. there is a multiedge of multiplicity $3$), see Figure~\ref{fig:HH'G}. In particular, $H'$ may contain loops and multiple edges, but in the embedding they separate different parts of $V(H')$ or, in the case of multiple edges, are incident on at least one side to a face of size at least $3$. We observe that by Euler's formula, there are fewer than $2\cdot 3|V(H')| = 6|Y|$ edges in $H'$. To every edge $e$ of $H$ or $H'$ we associate an edge $e_G$ of $G$ such that $e_G$ turns into $e$ in the contraction process. We let $Z$ be the subset of vertices of $V(G)\setminus Y$ which are an endpoint of $e_G$ for some edge $e\in E(H')$. Since every edge $e_G$ contains at most $2$ vertices of $Z$, we have $|Z|\le 12 |Y|$. Therefore, there are at most $13|Y|$ vertices of $Y\cup Z$ in $D$.

It remains to bound $|D\setminus (Y \cup Z)|$. Note that there are two types of vertices $v\in V(G) \setminus(Y\cup Z)$. 
\begin{itemize}
    \item $N[v]\subseteq N[y_v]$ 
    \item $N[v]\nsubseteq N[y_v]$ and $v$ is between two edges $e_G,e'_G$ in $G$ such that $e$ and $e'$ bound a face of size $2$ in $H'$
\end{itemize}
There cannot be any other case, as if the first case does not apply then $v$ has degree at least $2$, so is adjacent to some vertex $w$ such that $w$ is not a neighbour of $y_u$. But then $w$ gets contracted to a vertex $s\neq y_u$ when creating $H$. Note that $vw$ does not get contracted and let $e$ be the edge in $H$ such that $e_G=vw$. But since $v$ is not in $Z$, the edge $e$ gets deleted when creating $H'$, i.e. $e$ is in between two parallel edges to $e$ and we are in the second case.

We now assume $w$ is a vertex such that $d(w)=v$. If the first case applies then any neighbour of $v$ is also a neighbour of $y_v$, in particular, the distance of a vertex $w\neq v$ to $v$ is the same as to $y_v$. However, either $y_v$ has strictly more neighbours than $v$, or $y_v$ has a smaller label than $v$, as $y_v\in Y$, hence $w$ would prefer $y_v$ over $v$.
\begin{figure}
\centering
    \begin{tikzpicture}[scale=0.43,every path/.style={draw=primary}]  
    \def\height{1.5};
    \def\width{2};
    
    \node[snode] (y_3) at (5* \width, -1*\height){$y_1$};

    \node[vertex,draw=primary] (y_4) at (1*\width ,-1*\height){$w_1$};
    \node[vertex,draw=primary] (d_1) at (3*\width, 2*\height) {$x_1$};

    \node[snode] (y_1) at (1*\width, 2*\height){$y_2$};

    \node[vertex,draw=primary] (d_4) at (4*\width, 0.5*\height){$u_1$};
    \node[vertex,draw=primary] (d_9) at (5*\width, 2*\height){$x_2$};
    \node[vertex,draw=primary] (d_5) at (3*\width, -1*\height){$w_2$};
    
    \draw[edge,-{Latex[round]}](d_1) to (y_1);
    
    \draw[edge,{Latex[round]}-] (y_3) to (d_4);
    \draw[edge,{Latex[round]}-] (y_3) to (d_5);
    \draw[edge] (d_1) to (d_4);
    \draw[edge] (d_1) to (d_5);
    \draw[edge] (d_1) to (d_9);
    \draw[edge] (d_4) to (d_9);
    \draw[edge,{Latex[round]}-] (y_3) to (d_9);
    \draw[edge,-{Latex[round]}](y_4) to (y_1);
    \draw[edge] (y_4) to (d_5);
    \node at (0*\width, 0.5*\height){\Large$G$};
    \end{tikzpicture}
    \caption{A depiction of $x_1,x_2,w_1,w_2$ for the vertex $u_1$.}\Description{~}
    \label{fig:region}
\end{figure}

We analyse now the second case, which is slightly more interesting. Let $e_G$ be incident to $x_1,x_2$ and $e'_G$ to $w_1,w_2$, where for each $i$, either $d(x_i)=y_i$ or $x_i=y_i$ and similarly for $w_i$ and without loss of generality we assume $y_{u}=y_1$. For a depiction, see Figure~\ref{fig:region}. Note that $C=\{x_1,x_2,y_1,y_2,w_1,w_2\}$ separates $u$ from the rest of $V(H')$, and that vertices involved in some edge inside $f$ in $H$ are all adjacent to $y_1$ or to $y_2$ hence are at distance at most $3$ to all vertices in $f$ (with respect to $u$, since $u$ is a neighbour of $y_2$ or has a neighbour which is a neighbour of $y_2$ as we are in the second case). But this means that $w$ is at distance at most $4$ to all vertices in $f$. Therefore, there are at most $2^{6} $ choices for $D_u$ and  $2^{6}\cdot 2$ choices for $d(u)$ if $d(u)\notin Y\cup Z$. This is because not more than $2$ vertices from the set contained in the cycle $y_2x_1w_1y_1w_2x_2y_2$ can be chosen as otherwise $y_1,y_2$ would be a better choice. We can note that in the case where both $y_1$ and $y_2$ get selected in $D_u$ there is no extra vertex to add as all of the inside is dominated and when one of $y_1$ or $y_2$ is chosen, then there is at most one vertex picked on the inside, otherwise $y_1,y_2$ would be a better choice. Therefore, we reduce the number of choices to $1\cdot 2^4\cdot 2+ 2\cdot 2^4\cdot 1 = 64$.

Since there are at most $3|Y|$ options for $f$, this adds up to $192|Y|$.

All together, we get $13|Y|$ for the set $Y$ or the set $Z$ being chosen, and $192|Y|$ for the choices of the second case of vertices not in $Z$ or $Y$ being chosen. This sums up to $\RATIOP |Y|$, as claimed.
\end{proof}

\subsection{Meta-theorem for the \texorpdfstring{\LOCAL}{LOCAL} model with no polynomial identifiers requirement}

\MetaTheoremNonPoly*

This proof is essentially a simplified proof of \MetaIII. However, we include it for completeness.

\begin{proof}
    Let $G\in \sD$.
    As $\sD$ has $d$-dimensional control function $f$, then $G^{2k+2\rho}$ admits a $(d+1)$-coloring $f(2k+2\rho)$-bounded in $G$.
    Fix such a coloring, and, for each $i\in\set{0,1,\dots,d}$, let $C_i$ be the set of color-$i$ vertices in $G^{2k+2\rho}$ (and also in $G$). 
    We denote by $\CC_i$ the set of $(2k+2\rho)$-components of $C_i$ (i.e. connected components of $G^{2k+2\rho}[C_i]$). 
    So, by definition, all components of $\CC_i$ have weak diameter in $G$ at most $f(2k+2\rho)$, and are pairwise at distance at least $2k+2\rho+1$ in $G$ (because distinct components of $\CC_i$ cannot be adjacent in $G^{2k+2\rho}$).

    We describe the Algorithm~$\sfB$ we want to use in \cref{algo:PiI}, which is very similar to \cref{algo:MDS_bdd_genus}.
    \begin{algorithm}
    \caption{Generic algorithm $\sfB$ from \MetaI.}\label{algo:PiI}
    \begin{algorithmic}[1]

    \Require An $\eps$-weakly $k$-uniform $\alpha$-approximate \LOCAL algorithm $\sfA$ for $\Pi$ in $\sC$ with round complexity $r$, a graph classes $\sC$ and $\sD$ with properties of the statement of \MetaIII, the $d$-dimensional control function $f$ of $\sD$ and $G\in\sD$ with $T$-errors $X$ where $T = f(2k+2\rho) + \max\set{k+\rho,r}$.

    \Ensure A solution $S$ for the graph $G$ to the problem $\Pi$ with the guarantee that 
    $\card{S}\leq(\alpha (d+1)+1) \cdot \OPT_\Pi(G)$.

    \State $S \gets \emptyset$

    \State Each vertex $u$ computes $G[N^T[u]]$ and checks whether it belongs to $\sC$; as a consequence, it decides whether it belongs to $X$. 

    \State Each vertex $u$ runs $\sfA$ for $r$ rounds, and gets added to $S$ only if $u\in \sfA(G)\setminus X$.

    \State $S \gets S \cup S'$, where $S'$ is a brute-forced minimum set of $G$ that satisfies the constraints of vertices that are non-satisfied.

    \end{algorithmic}
    \end{algorithm}

    We run Algorithm~$\sfB$ on $G$. It is easy to check that $\sfB(G)$ is a valid solution for $G$, because $\Pi$ is additive. By definition, $\card{\sfB(G)} \le |S| + |S'|$, where $S = \sfA(G)\setminus X$ (Step~3) and $S'$ is a brute-forced set of minimum size satisfying the constraints of vertices of $G$ not satisfied (Step~4). Note that every vertex not in $N^{\rho}[X]$ is satisfied by $S$, so $S'\subseteq N^{2\rho}[X]$. Clearly, $\card{S'} \le \OPT_\Pi(G)$, and $\card{S'} = 0$ if there is no $T$-errors ($X=\emptyset$). Moreover, by definition of the coloring, $\card{\sfA(G)\setminus X} = \sum_{i=0}^d \card{(\sfA(G)\cap C_i)\setminus X}$. In other words,
    \begin{equation}\label{eq:B(G)nonpoly}
    \card{\sfB(G)} ~\le~ \sum_{i=0}^d \card{(\sfA(G)\cap C_i)\setminus X} + \left\{\begin{array}{ll}
        0 &\mbox{if $X=\emptyset$}\\
        \OPT_\Pi(G) &\mbox{otherwise}
        \end{array}
        \right.
    \end{equation}
    
    To upper bound the term $\card{(\sfA(G)\cap C_i)\setminus X}$, consider some color $i\in\set{0,1,\dots,d}$ and some component $C\in \CC_i$ such that $C\not\subseteq X$. Consider some $v_C\in C\setminus X$, and denote $G_C = G[N^T[v_C]]$. Note that, by definition of $X$ and as $v\notin X$, we have $G_C \in\sC$. Moreover, let $\CC'_i = \{C\in\CC_i \mid C\not\subseteq X\}$ and let $G' = G[\bigcup_{C\in\CC'_i} N^T[v_C]]$ be the union of the $G_C$'s for $C\in\CC_i$ such that $C\not\subseteq X$.

    Along the proof, we will use several times the following fact:

    \begin{restatable}{fact}{DLocalProblem}\label{fact:d_local_problem}
        For any graph $H$ and $W \subseteq V(H)$, to satisfy the constraints of vertices of $W$ in $H$, it is enough to select vertices taken from $N^\rho[W]$.
    \end{restatable}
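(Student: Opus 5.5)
The fact says: to satisfy the constraints of the vertices of $W$ in $H$, it suffices to select vertices from $N^\rho[W]$. Formally, for any set $S \subseteq V(H)$ feasible on $W$, the set $S\cap N^\rho[W]$ is also feasible on $W$. In particular, $\OPT_\Pi(H,W)$ is attained by a set contained in $N^\rho[W]$.

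The plan is to argue directly from the definition of $\rho$-locality via the verifier $\sfR$.

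First, I would make precise what ``$S$ satisfies the constraint of $w$'' means: the verifier $\sfR$, run on $H$ with $S$ encoded as input labels, outputs $1$ at $w$. Feasibility on $W$ then means that $\sfR(H,S,w)=1$ for every $w\in W$. This is the reading under which $\OPT_\Pi(H,W)$ is defined.

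Second, I would use the fact that $\sfR$ runs in $\rho$ rounds. Hence the output of $w$ is a function only of the labelled ball $H[N^\rho[w]]$, together with the input labels on it, which include the indicator of membership in $S$. Replacing $S$ by $S' = S\cap N^\rho[W]$ leaves the membership indicator unchanged on $N^\rho[w]\subseteq N^\rho[W]$ for every $w\in W$. So the view of each $w\in W$ is identical, and $\sfR(H,S',w)=\sfR(H,S,w)=1$. Thus $S'$ is feasible on $W$ and satisfies $|S'|\le|S|$.

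Applying this to an optimal $S$ gives the statement about $\OPT_\Pi(H,W)$. The only subtle point, which I would state explicitly, is that the verifier's identifiers and other input labels do not depend on $S$. In the edge-labelled case, ``view'' also includes edges within the ball, which are unaffected by the restriction. No real obstacle is expected; the step to be careful about is fixing the convention that a vertex's constraint is exactly its verifier output, so that ``feasible on $W$'' is local in the sense used.
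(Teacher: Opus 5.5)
Your proposal is correct and matches the paper's implicit reasoning. The paper states this Fact without a separate proof, as an immediate consequence of $\rho$-locality: the verifier $\sfR$'s output at each $w\in W$ depends only on the labels in its radius-$\rho$ view, so restricting a set that is feasible on $W$ to $N^\rho[W]$ keeps it feasible on $W$.

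Since this restriction is strict whenever $S\not\subseteq N^\rho[W]$, your argument also gives the slightly stronger form used later in the meta-theorem proofs: every minimum set satisfying the constraints of $W$ lies inside $N^\rho[W]$.
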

    
    In the following, to differentiate the neighborhoods in $G$ and in $G_C$ or $G'$, we write them using either $N_G$, $N_{G_C}$, or $N_{G'}$ notations.
    Notice that, for each non-negative integer $t\le \max\set{k+\rho,r}$, we have $G[N_{G}^t[C]]\subseteq G_C$. This is because $C$ has weak diameter in $G$ at most $f(2k+2\rho)$ and thus $G[N_{G}^t[C]]$ has weak diameter in $G$ at most $f(2k+2\rho) + t \le T$, by the choice of $T$. Moreover, as $\sC$ is hereditary, we have that $G[N_{G}^t[C]]\in \sC$.

    As $G[N^t_G[C]]\subseteq G_C$, vertices in $C$ have the same distance $t$-neighborhood in $G$ and $G_C$.
    In particular, $G[N^t_G[C]] = G_C[N^t_{G_C}[C]]$ and $G[N^t_G[\bigcup\CC'_i]] = G'[N^t_{G'}[\bigcup\CC'_i]]$.
    From~\cref{fact:d_local_problem}, any minimum set of vertices of $G$ that satisfies the constraints of $N^k_G[C]$ is contained in $G[N^{k+\rho}_G[C]]$, and similarly if we replace $G$ by $G_C$.
    It follows that $\OPT_\Pi(G,N^k_{G}[C]) = \OPT_\Pi(G_C,N^k_{G_C}[C])$. 
    Recall that any two connected components $C, C' \in \CC_i$ are at distance in $G$ at least $2k+2\rho+1$. So $G[N^{k+\rho}_G[C]]$ and $G[N^{k+\rho}_G[C']]$ are disjoint subgraphs, and, as to satisfy all constraints of vertices in $V(G)$, one needs to satisfy constraints of vertices in $N^{k}_G[C]$ and $N^{k}_G[C']$, by disjunction of these sets and by~\cref{fact:d_local_problem}, we have by summing over all $C\in\CC_i$ that
    \[
        \OPT_\Pi\left(G,N^k_{G}\left[\bigcup \CC'_i\right]\right) = \OPT_\Pi\left(G',N^k_{G'}\left[\bigcup \CC'_i\right]\right).
    \]
    Moreover, because $\sC$ is stable by disjoint union and that $G[N_{G}^t[C]]\in\sC$ for every $C\in \CC'_i$, we get $G[N_{G}^t[\bigcup\CC'_i]]\in\sC$. Therefore, as $\sfA$ is a $k$-uniform $\alpha$-approximation on $\sC$, we get
    \begin{align*}
            \card{\sfA(G')\cap \bigcup\CC'_i}\leq \alpha\cdot\OPT_\Pi\left(G',N_{G'}^k\left[\bigcup\CC'_i\right]\right) = \OPT_\Pi\left(G,N^k_{G}\left[\bigcup \CC'_i\right]\right).
    \end{align*}

    Since $G[N^r_G[\bigcup\CC'_i]] = G'[N^r_{G'}[\bigcup\CC'_i]]$, Algorithm~$\sfA$, which runs in $r$ rounds, returns the same vertex set in $G$ and $G'$ for vertices in $\bigcup\CC'_i$. Therefore, $\card{\sfA(G)\cap \bigcup\CC'_i} = \card{\sfA(G')\cap \bigcup\CC'_i}$. Combining with the previous inequality, we get

    \begin{align*}
        \card{\sfA(G)\cap \bigcup\CC'_i} ~\le~ \alpha \cdot \OPT_\Pi\left(G, N_G^k\left[\bigcup\CC'_i\right]\right)~\le~ &\alpha \cdot \OPT_\Pi(G,V(G))~=~ \alpha \cdot \OPT_\Pi(G)~.
    \end{align*}
    We remark that $\card{(\sfA(G)\cap C)\setminus X} = 0$ if $C \subseteq X$. Because all vertices in $C_i$ but not in $X$ are in some $C\in\CC'_i$, we have
    
    \begin{align}\label{eq:nonpoly_Ci_bound}
        \card{(\sfA(G)\cap C_i)\setminus X} \leq \card{\sfA(G)\cap \bigcup\CC'_i} \leq \alpha \cdot \OPT_\Pi(G)~.
    \end{align}

    Putting \cref{eq:nonpoly_Ci_bound} in \cref{eq:B(G)nonpoly}, we get that Algorithm~$\sfB$ is an $(\alpha(d+1)+1)$-approximation, and even an $\alpha(d+1)$-approximation if $X=\emptyset$.
    
    To see that $\sfB$ is also a $k$-uniform approximation, it is sufficient to see that, for any $W \subseteq V(G)$, the bound obtained is $\card{\sfB(G)\cap W} \le \alpha (d+1)\cdot \OPT_\Pi(G, N_G^k[W]) + \OPT_\Pi(G, N_G[W])$ where the first two terms are due to running $\sfA$ and the last term is due to running the brute-force.
    Indeed, if the brute-force computed a set whose intersection with $W$ was smaller than $\OPT_\Pi(G, N^\rho_G[W])$, we could replace it by the minimum size set satisfying the constraints of the vertices in $N^\rho_G[W]$ and obtain a smaller set, a contradiction. This proves that $\sfB$ is a $\max\{k,\rho\}$-uniform with an approximate ratio $\alpha(d+1)+1$ (or $\alpha(d+1)$ if $X=\emptyset$, as the brute-force is not required in that case).
    Now, let us prove that Algorithm~$\sfB$ has the desired round complexity.
    \begin{itemize}
        \item Computing $X$ in Step~2 takes $T+1$ rounds.
    
        \item Running $\sfA$ in Step~3 takes~$r$ rounds.
    
        \item For Step~4, consider the set $S$ computed by $\sfB$ at Step~3, before the brute-force.
        Observe that the vertices in the set $W = V(G)\setminus N^\rho_G[X]$ are satisfied by $S$. This is because vertices of $W$ are at distance at least $\rho+1$ from $X$ and thus vertices of $N^\rho_G[W]$ cannot be in $X$. Thus, $\sfA$ applies to all vertices of $N^\rho_G[W]$, and so $W$ is indeed satisfied by $S$ (\cref{fact:d_local_problem}).
        Therefore, to satisfy $N^\rho_G[X]$ it is enough to select a set $S'$ from $N^{2\rho}_G[X]$ (\cref{fact:d_local_problem}) as done in Step~4.
        By assumption, the connected components of $N_G^{2\rho}[X]$ have weak diameter in $G$ at most $\delta$ (obviously, if $X = \emptyset$ then $\delta = 0$). Therefore, the brute-force (Step~5) will take at most $\delta+1$ rounds. 
    \end{itemize}
    Naively, Steps~2 and~3 together take $(T+1) + r$ rounds. As Algorithm~$\sfA$ is not guaranteed to work when executed on vertices of $X$, Step~3 must be run only after Step~2. However, both steps can be run in parallel as follows. We run $\sfA$ for $r$ rounds exactly and stop it just after (since its running time on a vertex of $X$ could result into more than $r$ rounds). Then, the decision to add $u$ in $S$ (if selected by $\sfA$) is delayed for the next $T+1 - r$ rounds (this is non-negative from the choice of $T$). In parallel, $G[N^T[u]]$ is computed and is checked to be in $\sC$ or not after $T+1$ rounds. So, after $\max\set{r,T+1} = T+1$ rounds, set $S$ in Step~3 has been completed.
    Step~4 takes $\delta+1$ steps, so that the total round complexity of $\sfB$ is $T+\delta+2$, completing the proof. 
\end{proof}

\subsection{\texorpdfstring{\minDS}{Minimum Dominating Set} is uniformizable}\label{appendix:MDS_uniformizable}

\begin{restatable}{proposition}{MDSUniformizable}\label{prop:mds_uniformizable}
    Let $\sC$ be a hereditary class of unbounded degree\footnote{For a graph class in which every graph has maximum degree $\Delta$, the $\LOCAL$ algorithm that includes all nodes is a $\Delta+1$-approximation. This implies that the bounded-degree case is less interesting than the unbounded-degree one, which is therefore the focus of our study.} and that is closed by disjoint union.
    Then, the \minDS problem is uniformizable in the \LOCAL model with binding function $(r,\alpha)\mapsto(r+1,\alpha+\eps)$ for any real $\eps>0$.
\end{restatable}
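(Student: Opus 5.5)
The idea is the cutting trick for \minDS, done so that the black box never notices the cut, plus a padding gadget that keeps identifiers polynomial. Fix $G\in\sC$ whose identifiers are polynomial in $n=|V(G)|$, and fix $S\subseteq V(G)$. The target is
\[
\card{\sfA(G)\cap S}\le(\alpha+\eps)\cdot\MDS(G,N^{r+1}[S]),
\]
possibly after replacing $\sfA$ by a slightly modified $\sfB$ that keeps the same locality.

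First I cut. Let $D$ be a minimum set dominating $N^r[S]$ in $G$. Then $D\subseteq N^{r+1}[S]$ and $|D|\le\MDS(G,N^{r+1}[S])$. Set $H_0=G[N[D]]$, which lies in $\sC$ because $\sC$ is hereditary. Since $N^r[S]\subseteq N[D]$ and $H_0$ is induced, each $s\in S$ has the same radius-$r$ view in $H_0$ as in $G$. This uses two facts: distances in $H_0$ are at least those in $G$, and every $G$-path of length at most $r$ from $s$ stays inside $N^r[S]$. Moreover $D$ dominates all of $H_0$, so $\MDS(H_0)\le|D|$. This is the same computation as \cref{claim:cutting_trick_size}.

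Second I pad, since identifiers of $G$ need not be polynomial in $|V(H_0)|$. Because $\sC$ is hereditary and of unbounded degree, it contains stars $G[N[v]]$ with arbitrarily many leaves. I take one star $K$ with at least $n$ vertices and give it fresh identifiers disjoint from those of $G$, all bounded by a polynomial in $n$. Let $H=H_0\sqcup K$, which is in $\sC$ by closure under disjoint union. Now $|V(H)|\ge n$, so the original identifiers, and hence all identifiers of $H$, are polynomial in $|V(H)|$. Also $\MDS(H)\le|D|+1$, and the views of vertices of $S$ are still unchanged. Hence
\[
\card{\sfA(G)\cap S}=\card{\sfA(H)\cap S}\le\alpha\,\MDS(H)\le\alpha\,(\MDS(G,N^{r+1}[S])+1).
\]

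The main obstacle is the additive $+1$ coming from the padding. It is harmless when $m=\MDS(G,N^{r+1}[S])\ge\alpha/\eps$, since then $\alpha(m+1)\le(\alpha+\eps)m$. For small $m$ I would try two things, in order.
\begin{itemize}
\item Find a cheaper padding whose dominating vertex can be merged into $D$, for instance by attaching the leaves of $K$ to a vertex of $D$. This would give $\MDS(H)\le|D|$ exactly. However, it needs closure under adding pendant vertices, which is not among the hypotheses, so this may fail.
\item Otherwise, define $\sfB$ from $\sfA$ and use the freedom in the binding function. Following the brute-force boosting of \cref{prop:aMDS+b}, $\sfB$ handles components whose local optimum is below $\alpha/\eps$ separately. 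Each vertex in a bounded-radius region with small local optimum can detect this within $O(r)$ rounds and solve optimally there, while $\sfA$ is used elsewhere.
\end{itemize}
With either fix, the bound becomes $(\alpha+\eps)\cdot\MDS(G,N^{r+1}[S])$. This is $(r+1)$-uniformity with ratio $\alpha+\eps$, matching the binding function $(r,\alpha)\mapsto(r+1,\alpha+\eps)$.
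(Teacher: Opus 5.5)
Your route is the paper's route. You cut around $S$ so that the optimum drops to at most $\MDS(G,N^{r+1}[S])$. You then pad with the closed neighbourhood of a high-degree vertex carrying fresh identifiers, which gives $\card{\sfA(G)\cap S}\le\alpha(m+1)$ with $m=\MDS(G,N^{r+1}[S])$. Finally you try to remove the additive term with the brute-force trick of \cref{prop:aMDS+b}. One minor repair: let $D$ dominate $N^{r+1}[S]$, as the paper does with $G[N^k[S]\cup X]$. Then the cut also keeps the edges between distance $r$ and $r+1$, which an $r$-round algorithm sees in the paper's convention, and this costs nothing.

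The genuine gap is the step you flag yourself, and neither bullet closes it.
\begin{itemize}
\item The first bullet fails even if pendant vertices were allowed. Since $D\subseteq N^{r+1}[S]$, hanging $n$ leaves on a vertex of $D$ typically alters the $r$-round views of vertices of $S$.
\item The second bullet is not a well-defined algorithm. Inside a large component every vertex lies in many overlapping bounded-radius regions of small local optimum, and optimal solutions of these regions do not glue into one dominating set. Wherever $\sfB$ departs from $\sfA$, you also lose the comparison with $\sfA$ on the padded graph, which is your only source of bounds.
\item A component of optimum below $\alpha/\eps$ can have diameter close to $3\alpha/\eps$, so detecting it costs $\Theta(\alpha/\eps)$ rounds, not $O(r)$.
\end{itemize}

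The workable version brute-forces whole connected components and runs $\sfA$ elsewhere. What is missing is a reason why a set $S$ lying in a component where $\sfB$ uses $\sfA$ must have large $m$. That reason is a geodesic bound. If the component of $s\in S$ has diameter at least $2k$, then $N^k[s]$ contains a shortest path $p_0\dots p_k$. Each vertex dominates at most three of the $p_i$, hence $\MDS(G,N^k[S])\ge (k+1)/3$.

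With this, let $\sfB$ solve exactly (with canonical tie-breaking) each component of diameter $<2k$ and run $\sfA$ on the others. Split $S=S_1\sqcup S_2$ accordingly:
\begin{itemize}
\item An optimal solution $O$ satisfies $\card{O\cap S_1}\le\MDS(G,N[S_1])$ by an exchange argument.
\item Your padding bound, with $N^k$ in place of $N^{r+1}$, gives $\card{\sfA(G)\cap S_2}\le\alpha(m_2+1)\le(\alpha+\eps)m_2$ once $m_2\ge(k+1)/3\ge\alpha/\eps$.
\item $\MDS(G,N^k[S])$ is additive over components, so the two bounds combine.
\end{itemize}

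This requires $k\ge 3\alpha/\eps-1$. So the argument gives $k=\max\set{r+1,\ceil{3\alpha/\eps}}$, and in particular $\sfB$ is $(r_{\sfB}+1)$-uniform for its own round complexity $r_{\sfB}$. It does not give $k=r+1$ for the round complexity of $\sfA$. With $k=r+1$ and $\eps<3\alpha/(r+2)$, sets $S$ inside large components with $m$ close to $(r+2)/3$ are covered neither by your bound nor by the brute force.

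The paper's own closing step is terse at exactly this point. It passes from a large component diameter in $G$ to $\MDS(G')>B$, which needs this lemma and the same enlargement of $k$.
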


\begin{proof}
    We first prove this statement in the \LOCAL model with $\eps = 0$.
    Assume for sake of contradiction that there exists integers $d,r$ and a real $\alpha$ such that for all integers $k$ and real $\beta$, there exists an $r$-round $\alpha$-approximation $\sfA$ of $\Pi$ in constant time on the hereditary graph class $\sC$ such that $\sfA$ is not $k$-uniform with $\beta$-approximation of $\Pi$ in constant time.
    Here, we will take $k=r+1$ and $\beta=\alpha$.
    By the definition of uniformity, this means that there exists some subset $S$ of vertices for which:
    \[
        |\sfA(G)\cap S| > \alpha \cdot \MDS(G,N^k[S]).
    \]
    Let us cut $G$ around $S$: set $G' := G[N^k[S]\cup X]$ where $X\subseteq N^{k+1}[S]$ is a minimum dominating set of $N^k[S]$ in $G$.
    We keep the same identifiers in $G'$ as in $G$.
    We first bound the size of the minimum dominating set of $G'$.
    \begin{claim}\label{claim:cutting_trick_size}
        $\MDS(G')\leq \MDS(G,N^k[S])$.
    \end{claim}
    \begin{proofofclaim}
        By definition of $X$, it follows that $|X|=\MDS(G,N^k[S])$.
        $X$ is also a dominating set of $G'$, because each of its vertices is either in $X$ or in $N^k[S]$, and that in the second case, it is dominated by some vertex of $X$ by definition of $X$.
        Therefore, $\MDS(G')\leq |X| = \MDS(G,N^k[S])$.
    \end{proofofclaim}
    As our goal is to get a contradiction on $G'\in\sC$~\footnote{because $\sC$ is hereditary} instead of $G$, we prove the following.
    \begin{claim}\label{claim:cutting_trick_view}
        On $S$, algorithm $\sfA$ returns the same set in $G$ and in $G'$.
    \end{claim}
    This will prove that $|\sfA(G)\cap S| = |\sfA(G')\cap S|$.
    \begin{proofofclaim}
        $\sfA$ is an $r$-round algorithm. Therefore, vertices in $S$ only see vertices and edges in $G[N^r[S]]$, and edges between vertices of $N^r(S)$ and vertices of $N^{r+1}(S)$~\footnote{The definition of an $r$-round \LOCAL algorithm running on vertex $v$ includes edges at distance $r+1$ of $v$ but not vertices at distance $r+1$ of $v$ in $G$.}. All of those are the same in $G$ and $G'$, therefore $\sfA(G)\cap S = \sfA(G')\cap S$, as wanted.
    \end{proofofclaim}
    By \cref{claim:cutting_trick_view},
    \[
        |\sfA(G')|\geq|\sfA(G')\cap S| = |A(G)\cap S| > \alpha\MDS(G,N^k[S]).
    \]
    By \cref{claim:cutting_trick_size}, we have $\alpha\MDS(G,N^k[S])\geq \alpha\cdot\MDS(G')$.
    It follows that $|\sfA(G')|\geq \alpha\cdot\MDS(G')$, a contradiction as $G'\in \sC$ and $\sfA$ is an $\alpha$-approximation on $\sC$.
    Let us now adapt this proof for the \LOCAL model, for $\eps > 0$. For now, by taking $\gamma=\alpha+\eps$, we have proved $|\sfA(G')|\geq (\alpha+\eps)\cdot\MDS(G')$. The only thing we need to take care of is that nodes in $G$ have identifiers that are bounded by a polynomial of $|V(G')|$.
    Instead of changing the identifiers in $G'$ without impacting the output of $\sfA$, which is tricky, we consider a bigger graph.
    Let $N$ be the biggest identifier in $G'$. $\sC$ is a hereditary graph class of unbounded degree, so there exists a graph $H\in\sC$ and a vertex $v$ of $H$ with degree at least $N$.\footnote{Here, if the algorithm requires access to $n$ (the size of the graph), one needs to choose $N\geq n$ and delete neighbors of $v$ so that, in the following, we get $|V(G'')|=n$.}
    Take $H'=H[N_H[v]]$ and give the vertices identifiers in $[N+1,2N]$. Let $G'' = G'\sqcup H'$.
    Clearly, $H'$ can be dominated by choosing the single vertex $v$, i.e. $\MDS(G'') = \MDS(G') + 1$.
    Furthermore, $\sfA(G'') = \sfA(G') \sqcup \sfA(H')$ and it follows
    \[
        |\sfA(G'')|\geq (\alpha+\eps)\cdot\MDS(G')+1 = (\alpha+\eps)\cdot\MDS(G'')+1-(\alpha+\eps).
    \]
    Now, we can apply the same trick as in \cref{prop:aMDS+b}. 
    Observe that the diameter of each connected component of the graph $G'$ is at most $3\MDS(G')$.
    We construct an algorithm $\sfB$ where each vertex checks whether the diameter $D$ of its component is less than $3B$, for $B =(\alpha-1)/\eps$. This can be done by collecting its radius-$3B$ neighborhood in $3B$ rounds. If true, the vertex locally brute forces an optimal solution for its component, which it already knows about. If false, the vertex applies the approximate algorithm $\sfA$.
    If $D\leq B$, then the algorithm is obviously a good approximation (with approximation ratio $1$).
    Otherwise, $D>3B$ and it follows that $\MDS(G')>B$. One can verify that $|\sfA(G'')| > \alpha\cdot \MDS(G'')$, a contradiction, as $G''\in \sC$.
\end{proof}

\subsection{\texorpdfstring{\minkTDS}{Minimum k-Tuple Dominating Set} is cuttable and additive}

\begin{proposition}\label{prop:minkTDS_cuttable}
    \minkTDS is additive and cuttable with parameter $1$.
\end{proposition}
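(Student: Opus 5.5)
The plan is to handle additivity directly and prove cuttability by cutting around an optimal partial solution, as in \cref{claim:cutting_trick_size} for MDS. Throughout, $Z$ is \emph{feasible on $X$} if $\card{N(v)\cap Z}\ge k$ for every $v\in X$. Here $N(v)$ is the open neighbourhood, since in the $k$-tuple (total) version every vertex, including those of $Z$, needs $k$ neighbours in $Z$. This constraint is checked in one round, so \minkTDS is $1$-local and \cref{fact:d_local_problem} applies with $\rho=1$.

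\emph{Additivity.} Let $Z$ be feasible on $X$ and $Z'\supseteq Z$. For every $v\in X$ we have $N(v)\cap Z'\supseteq N(v)\cap Z$, so $\card{N(v)\cap Z'}\ge k$. Hence $Z'$ is feasible on $X$. This settles additivity immediately.

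\emph{Cuttability.} Let $G\in\sC$ and $X\subseteq V(G)$, and fix a minimum set $Z$ feasible on $X$. By \cref{fact:d_local_problem}, we may take $Z\subseteq N[X]$, and $\card{Z}=\OPT_\Pi(G,X)$. I would take $Y=X\cup Z$, exactly as $G'=G[N^k[S]\cup X]$ was used for MDS, and show that $Z$ is feasible on all of $Y$ in $G[Y]$. Then $\OPT_\Pi(G[Y])\le\card{Z}$, which gives parameter $\beta=1$. Membership $G[Y]\in\sC$ follows from heredity.
\begin{enumerate}
\item For $v\in X$: since $Z\subseteq Y$, we have $N_{G[Y]}(v)\cap Z=N_G(v)\cap Z$. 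So $v$ keeps its $k$ neighbours in $Z$.
\item For $z\in Z\setminus X$: this is the delicate step, because $z$ itself needs $k$ neighbours inside $Z$.
\end{enumerate}

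I expect step (ii) to be the main obstacle. Minimality of $Z$ for $X$ alone does not force vertices of $Z\setminus X$ to be $k$-dominated. For example, with $k=1$, $X$ a single leaf and $Z$ its neighbour, any $Y\supseteq X$ forces a solution of size at least $2>1$. So the first thing I would try is an exchange argument. Among all minimum sets feasible on $X$, choose $Z$ maximizing $\card{Z\cap X}$, or more generally maximizing the number of vertices of $Z$ with $k$ neighbours in $Z$. Then argue that any $z\in Z\setminus X$ failing the constraint could be swapped with a vertex of $X$ or of $N(z)$ without losing feasibility on $X$.

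If this exchange fails, the fallback is to enlarge $Y$ by adding, for each unsatisfied $z$, up to $k$ neighbours drawn from the neighbourhoods already used by $X$. In that case the cost must be charged against $\OPT_\Pi(G,X)$ itself, using that in any feasible solution each $x\in X$ already has $k$ neighbours in $Z$, so these witnesses can be reused. I would then need to check whether the resulting parameter is still $1$ or only a constant. The leaf example above shows that exactly $\beta=1$ genuinely depends on resolving this point.
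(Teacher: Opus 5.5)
Your choice $Y=X\cup Z$ and your step (i) are exactly the paper's proof. The proof stays incomplete, though, because you are working with a different problem. The paper defines \minkTDS as follows: every vertex \emph{outside} the chosen set must have at least $k$ neighbours in it. This is why $k=1$ gives back MDS. Vertices of the solution carry no constraint.

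Under that definition your step (ii) is vacuous: each $z\in Z\setminus X$ belongs to the solution and is satisfied automatically. Each $v\in Y\setminus Z\subseteq X\setminus Z$ keeps its $k$ neighbours in $Z\subseteq Y$ when passing to $G[Y]$, which is your step (i). Hence $Z$ is feasible for $G[Y]$ and $\OPT_\Pi(G[Y])\le |Z|=\OPT_\Pi(G,X)$. That is the whole argument in the paper, and additivity is equally immediate.

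With your total (open-neighbourhood) version, step (ii) is not a technical obstacle but a genuine impossibility. Neither the exchange argument nor the fallback can rescue it.

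\begin{itemize}
\item For $k=1$, your own leaf example already shows that every $Y\supseteq X$ gives $\OPT_\Pi(G[Y])\ge 2>1=\OPT_\Pi(G,X)$. So the statement with parameter $1$ is simply false for that problem.
\item For $k\ge 2$ it is worse. Take a star $K_{1,m}$ with $m\ge k$ and let $X$ be the centre, so $\OPT_\Pi(G,X)=k$. For any $Y$ containing the centre, $G[Y]$ has no feasible solution at all: the centre needs $k$ leaves in the solution, and each leaf has only one neighbour.
\end{itemize}

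So total $k$-tuple domination is not cuttable with any parameter on classes containing stars, such as planar graphs. What makes the proposition true is precisely the paper's definition, and with it your plan closes in one line.
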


\begin{proof}
    By definition, the problem is additive. Let us now prove cutability.
    Let $G$ be a graph and $X\subseteq V(G)$, and let $S$ be a $k$-Tuple Dominating Set of the set $X$.
    Set $Y=X\cup S$. $S$ is a $k$-Tuple Dominating Set of $G[Y]$. 
    Indeed, every vertex of $G[Y]$ that is in $Y\setminus S$ is in $X\setminus S$, and therefore is $k$-Tuple Dominated by $S$.
    Thus for $\Pi= \minkTDS$, we have $\OPT_\Pi(G,X)\geq \OPT_\Pi(G[Y])$.
\end{proof}

\subsection{\texorpdfstring{\textsc{Distance-$\leq\! d$ Dominating Set}}{Distance-≤d Dominating Set} is cuttable}\label{appendix:distance-d_DS}

\begin{restatable}{proposition}{DistanceDCuttable}\label{prop:MDS_distance_d_cuttable}
    \textsc{Distance-$(\le\!d)$ Dominating Set} is cuttable with parameter $1$.
\end{restatable}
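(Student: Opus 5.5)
We need to show: for every graph $G$ and every $X\subseteq V(G)$ there is some $Y\supseteq X$ with $\OPT_\Pi(G[Y])\le \OPT_\Pi(G,X)$, where $\Pi$ is \textsc{Distance-$(\le\!d)$ Dominating Set}. We follow the pattern of \cref{claim:cutting_trick_size} and \cref{prop:minkTDS_cuttable}. Take a minimum set $S\subseteq V(G)$ that distance-$d$ dominates $X$ in $G$, so $|S|=\OPT_\Pi(G,X)$. For each $x\in X$ fix a vertex $s_x\in S$ with $d_G(x,s_x)\le d$, and a shortest $x$--$s_x$ path $P_x$ in $G$. We then set
\[
Y \;=\; X\cup S\cup \bigcup_{x\in X} V(P_x).
\]

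It suffices to show that $S$ is a distance-$(\le\!d)$ dominating set of $G[Y]$. Then $\OPT_\Pi(G[Y])\le |S|=\OPT_\Pi(G,X)$, which is cuttability with parameter $1$.

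\begin{itemize}
\item A vertex $x\in X$ keeps its dominator: the whole path $P_x$ lies in $G[Y]$, so $d_{G[Y]}(x,s_x)\le d$.
\item A vertex of $S$ is trivially dominated by itself.
\item A vertex $w$ on some path $P_x$ is the only genuinely new point. The subpath of $P_x$ from $w$ to $s_x$ lies in $G[Y]$ and has length at most $|P_x|\le d$. So $d_{G[Y]}(w,s_x)\le d$.
\end{itemize}

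The main obstacle, and the reason for adding the paths instead of taking $Y=X\cup S$ as in the $k$-tuple case, is that distances can grow in induced subgraphs. Once we include the witnessing shortest paths, this is harmless. The one subtlety is that every vertex added to $Y$ must itself be within distance $d$ of $S$ inside $G[Y]$. The argument for $w$ above shows that intermediate path vertices are automatically close to the endpoint $s_x$. No hereditary assumption on the class is used beyond $G[Y]$ being an induced subgraph of $G$, so the statement holds for any hereditary $\sC$.
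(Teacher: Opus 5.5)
Your proof is correct and uses essentially the same idea as the paper: enlarge $X$ by shortest paths from each vertex of $X$ to its dominator in a minimum solution, so that the witnessing distances survive in the induced subgraph. The only difference is presentational: the paper adds these paths iteratively (only for vertices of $X$ not yet dominated in $G[Y]$) and uses $D\cap Y$ as the dominating set, whereas you add all paths and all of $S$ at once, which is simpler and equally valid.
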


\begin{proof}
    Let $\Pi$ be the \textsc{Distance-$(\leq\! d)$ Dominating Set} problem, let $G$ be a graph and $X$ a subset of vertices. 
    Let $D$ be a minimum-size set that is distance-$d$ dominating for the vertices in $X$, so that $|D|=\OPT_\Pi(G,X)$.
    Note that $D\subseteq N^d[X]$.
    We are going to construct a set $X\subseteq Y\subseteq N^d[X]$ such $\OPT_\Pi(G[Y]) \leq \OPT_\Pi(G,X)$.
    We do this recursively. Start with $Y := X$.
    We maintain the invariant that every vertex of $Y$ that is not dominated by $D\cap Y$ in $G[Y]$ is in $X$.
    This invariant holds trivially for $Y=X$.
    While some vertex $v$ of $X$ is not dominated in $G[Y]$ by $D\cap Y$, do the following.
    Let $y\in D$ be a vertex that dominates $v$ in $G$, i.e., with the distance to $y$ at most $d$.Note that here, we do not have necessarily $y\notin Y$: it could be the case that $v$ is dominated by some $y\in X$ in $G$ because of a path with a portion outside $X$.
    As $y\in D\subseteq N^d[X]$, there is a path $P$ in $G[N^d[X]]$ from $v$ to $y$ with length at most $d$. Actually, one can just take the shortest $vy$-path. It has length at most $d$ and is in $N^d[v]\subseteq N^d[X]$.
    Set $Y := Y\cup V(P)$. All vertices that are in $P$ (including $v$) are dominated by $y$, and therefore are now dominated in $G[Y]$.
    The invariant is thus satisfied.
    Each time we do this operation, at least one more vertex of $X$ is dominated in $G[Y]$. As $X$ is finite, this process terminates and we have constructed a set $Y$ that is dominated by $D\cap Y$ in $G[Y]$.
    It follows that $\OPT_\Pi(G[Y]) \leq \OPT_\Pi(G,X)$, and this completes the proof.
\end{proof}

\subsection{Cutting implies uniform algorithms in the \LOCAL model with no polynomial identifiers requirement}\label{appendix:cutting_implies_uniformizable_nonpoly}

A problem $\Pi$ is \emph{uniformizable} on $\sC$ if for all integers $r$ and all reals $\alpha$, there exists some integers $k$ and real $\beta$ such that every $r$-round $\alpha$-approximation $\sfA$ of $\Pi$ on $\sC$ is a $k$-uniform $\beta$-approximation of $\Pi$ on $\sC$.
The function $(r,\alpha)\mapsto (k,\beta)$ is called the \emph{binding} function.

\begin{restatable}{proposition}{WeakMetaTheorem}\label{prop:weak_meta_theorem}
    Let $\sC$ be a hereditary class that is closed by disjoint union, and let $\Pi$ be a $\rho$-local problem which is cuttable with parameter $\beta$.
    Then, $\Pi$ is uniformizable \emph{in the \LOCAL model with no polynomial identifiers requirement} with binding function $(r,\alpha) \mapsto(r+1,\alpha\cdot\beta)$.
\end{restatable}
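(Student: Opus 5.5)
The plan is to mimic the first part of the proof of \cref{prop:mds_uniformizable}, where $\eps = 0$ and identifiers need not be polynomial, replacing \cref{claim:cutting_trick_size} by the cuttability hypothesis. Fix $r$ and $\alpha$, let $\sfA$ be an $r$-round $\alpha$-approximation of $\Pi$ on $\sC$, and set $k = r+1$. We show that for every $G\in\sC$ and $S\subseteq V(G)$,
\[
\card{\sfA(G)\cap S} \le \alpha\beta\cdot \OPT_\Pi(G,N^k[S]).
\]

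The first step is to apply cuttability to the set $N^k[S]$. This gives a set $Y\supseteq N^k[S]$ with $\OPT_\Pi(G[Y]) \le \beta\cdot\OPT_\Pi(G,N^k[S])$. Set $G' = G[Y]$ and keep the identifiers of $G$ unchanged. Since $\sC$ is hereditary, $G'\in\sC$. Because we are in the model without a polynomial identifier requirement, $\sfA$ remains a valid $\alpha$-approximation on $G'$ with these identifiers. Hence $\card{\sfA(G')} \le \alpha\cdot\OPT_\Pi(G') \le \alpha\beta\cdot\OPT_\Pi(G,N^k[S])$.

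The second step is the analogue of \cref{claim:cutting_trick_view}: $\sfA(G)\cap S = \sfA(G')\cap S$. A vertex of $S$ running $r$ rounds sees only the vertices of $N^r[S]$, the edges among them, and edges from $N^r(S)$ to $N^{r+1}(S)$. All of these lie inside $N^{r+1}[S] = N^k[S]\subseteq Y$, and $G'$ is an induced subgraph. So every vertex of $S$ has identical views in $G$ and $G'$, including input labels, which we restrict to $Y$. A deterministic algorithm therefore produces the same output at each such vertex. Combining the two steps,
\[
\card{\sfA(G)\cap S} = \card{\sfA(G')\cap S} \le \card{\sfA(G')} \le \alpha\beta\cdot\OPT_\Pi(G,N^k[S]),
\]
which is exactly $k$-uniformity with ratio $\alpha\beta$.

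The main obstacle is making sure the cut graph $G'$ is a legitimate instance on which the guarantee of $\sfA$ applies, since $Y$ is not controlled beyond containing $N^k[S]$. Heredity of $\sC$ handles membership, and the absence of the polynomial identifier requirement handles the identifiers. One also needs $\OPT_\Pi(G')$ to be well defined, meaning a feasible solution of $G'$ exists; cuttability asserts a finite bound, so we take that as given. The hypothesis that $\sC$ is closed under disjoint union does not seem necessary for this argument. It is needed later in \MetaI, and possibly for padding in the polynomial-identifier version. I would double-check the boundary of the view radius in the second step: using $k=r+1$ rather than $k=r$ is exactly what ensures the edges at distance $r+1$ are also preserved.
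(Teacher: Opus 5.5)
Your proof is correct and follows the paper's argument. The paper likewise applies cuttability to $N^{r+1}[S]$, keeps the identifiers in $G'=G[Y]$, shows $\sfA(G)\cap S=\sfA(G')\cap S$ since $N^{r+1}[S]\subseteq Y$, and chains $\card{\sfA(G)\cap S}\le\card{\sfA(G')}\le\alpha\,\OPT_\Pi(G')\le\alpha\beta\,\OPT_\Pi(G,N^{r+1}[S])$; it only phrases this as a proof by contradiction rather than directly, and you are right that closure under disjoint union is not used there.
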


\begin{proof}
    Assume for sake of contradiction that there exist an integer $r$ and a real $\alpha$ such that for all integers $k,r'$ and real $\gamma$, there exists an $r$-round $\alpha$-approximation $\sfA$ of $\Pi$ on the hereditary graph class $\sC$ but no $r'$-round $k$-uniform $\gamma$-approximation algorithm $\sfB$ for $\Pi$ on $\sC$.
    Here, we take $k=r+1$, $\gamma=\alpha\cdot\beta$, $r'=r$ and $\sfB=\sfA$, i.e. suppose $\sfA$ is not a $k$-uniform $\gamma$-approximation of $\Pi$.
    By the definition of uniformity, this means that there exists some subset $S$ of vertices for which:
    \[
        |\sfA(G)\cap S| > \alpha\beta \cdot \OPT_\Pi(G,N^k[S]).
    \] 
    As $\Pi$ is cuttable with the particular case $X=N^k[S]$, there exists a set $Y$ such that $\OPT_\Pi(G[Y])\leq \beta \cdot \OPT_\Pi(G,N^k[S])$.
    We now study the graph $G' := G[Y]$, with the same identifiers as in $G$. We prove the following.
    \begin{claim}\label{claim:cutting_trick_view_2}
        We have $\sfA(G)\cap S = \sfA(G')\cap S$.
    \end{claim}
    \begin{proofofclaim}
        This is the same proof as in \cref{claim:cutting_trick_view}, except with a set $Y$ instead.
        As $\sfA$ is an $r$-round algorithm, vertices in $S$ only see vertices and edges in $G[N^r[S]]$, and edges between vertices of $N^r(S)$ and vertices of $N^{r+1}(S)$. Those are the same in $G$ and $G'$ as $N^{r+1}[S] = N^k[S] \subseteq Y$, and so we have $\sfA(G)\cap S = \sfA(G')\cap S$, as wanted.
    \end{proofofclaim}
    By \cref{claim:cutting_trick_view_2},
    \[
        |\sfA(G')|\geq|\sfA(G')\cap S| = |A(G)\cap S| > \alpha\beta\cdot\OPT_\Pi(G,N^k[S]).
    \]
    By the cutability, we have $\OPT_\Pi(G')\leq \beta \cdot \OPT_\Pi(G,N^k[S])$.
    It follows that $|\sfA(G')|> \alpha\cdot\OPT_\Pi(G')$, a contradiction as $G'\in \sC$ and $\sfA$ is an $\alpha$-approximation on $\sC$.
\end{proof}

\subsection{The algorithm of Heydt et al. is uniform}

\begin{restatable}{observation}{AlgoIsUniform}\label{obs:localplanar}
    The algorithm described in~\cite[Theorem~2.3]{HKOSV25} is a $k(\eps)$-uniform $(11+\eps)$-approximation \LOCAL algorithm for \minDS in planar graphs with round complexity $C(\eps)$, for every $\eps>0$, and for some functions $C$ and $k$. 
\end{restatable}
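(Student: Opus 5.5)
The plan is to avoid re-analysing the algorithm of \cite[Theorem~2.3]{HKOSV25} line by line. Instead we invoke the general uniformization machinery. \minDS is cuttable with parameter $\beta=1$ (\cref{claim:cutting_trick_size}), and it is $1$-local. The class $\sP$ of planar graphs is hereditary and closed under disjoint union. So \cref{prop:weak_meta_theorem} applies with $\sC=\sP$, giving binding function $(r,\alpha)\mapsto(r+1,\alpha)$.

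Concretely, fix $\eps>0$ and let $C(\eps)$ be the round complexity of the Heydt et al.\ algorithm, which is an $(11+\eps)$-approximation on planar graphs. Applying \cref{prop:weak_meta_theorem} with $r=C(\eps)$ and $\alpha=11+\eps$ shows that the same algorithm, unchanged, is a $k$-uniform $(11+\eps)$-approximation with $k(\eps)=C(\eps)+1$. The underlying argument is the contrapositive of the cutting trick. Suppose some $S$ had $\card{\sfA(G)\cap S}>(11+\eps)\MDS(G,N^{k}[S])$. Take $G'=G[N^k[S]\cup X]$ with $X$ a minimum set dominating $N^k[S]$. Then $G'$ is planar, $\MDS(G')\le\MDS(G,N^k[S])$, and vertices of $S$ see identical radius-$C(\eps)$ views in $G$ and $G'$. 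Hence $\card{\sfA(G')}>(11+\eps)\MDS(G')$, which contradicts the approximation guarantee on $G'$.

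The main obstacle is the identifier issue. \cref{prop:weak_meta_theorem} holds only for algorithms that do not require polynomial identifiers, because $G'$ may be much smaller than $G$ while keeping $G$'s identifiers. There are two routes. The first is to check that the algorithm of \cite{HKOSV25} is order-invariant, or at least does not use the magnitude of identifiers (it only uses local structure and tie-breaking), in which case \cref{prop:weak_meta_theorem} applies directly. The fallback is \cref{prop:mds_uniformizable}, which handles \minDS even with polynomial identifiers. Planar graphs are hereditary, closed under disjoint union, and of unbounded degree (stars), so we pad $G'$ with a large star carrying fresh identifiers. This yields binding $(r,\alpha)\mapsto(r+1,\alpha+\eps')$, possibly after adding a bounded-radius brute-force step on small-diameter components. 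We then replace $\eps$ by $\eps/2$ in the original algorithm and take $\eps'=\eps/2$, so the final ratio is $11+\eps$. The round complexity remains a function $C(\eps)$ (the brute-force radius $3B$ depends only on $\eps$), and $k(\eps)=C(\eps/2)+1$.
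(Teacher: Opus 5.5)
Your proof is correct and follows essentially the paper's route. The paper notes that the Heydt et al.\ algorithm runs in $C(\eps)$ rounds and does not need polynomial identifiers, then invokes the cutting-trick uniformization; it cites \cref{prop:mds_uniformizable}, whose identifier-free part is exactly the argument of \cref{prop:weak_meta_theorem} with $\beta=1$ that you spell out. Given that identifier-independence, your primary route already yields $k(\eps)=C(\eps)+1$ with ratio exactly $11+\eps$ for the unmodified algorithm, so the polynomial-identifier fallback, which might alter the algorithm by a brute-force step, is not needed.
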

\begin{proof}
    Let us consider the algorithm described in \cite[Theorem~2.3]{HKOSV25}.
    It has a number of rounds bounded by a function of $\eps$, and does not need polynomial identifiers so \cref{obs:localplanar} follows by \cref{prop:mds_uniformizable}.
\end{proof}

For the interested reader, here is a more detailed analysis of the running time of the algorithm.
It returns a solution $D_1\cup D_2\cup D_3^1 \cup D_3^2$.
$D_1$ can be computed in two rounds, $D_2$ in two additional rounds, and $D_3^1$ in one more additional round.
The set $D_3^2$ can be computed in time $O(\log \Delta/\log(1+\eps))$ where $\Delta = 4\nabla_1\cdot(4^{\nabla_1}+2\nabla_1)(\Delta_R+1)/\eps$ with (with the particular case of planar graphs) $\nabla_1 < 3$ and $\Delta_R=\kappa^{s-1}(t+s-1+(s-1)\kappa^3)$ with $\kappa = \max\{2\nabla_0,2\nabla\}$.
Again, for planar graphs, $\nabla_0<3$ and $\nabla \leq 2$ so $\kappa \leq 6$.
It follows that $\Delta_R\leq 15732$ because $s=t=3$ on planar graphs.
So $\Delta \leq 13593312/\eps = O(1/\eps)$ and the running time is $O(\log(1/\eps)/\log(1+\eps/2)) = O(\log(1/\eps)/\eps)$.

\subsection{Cutability implies weakly uniform algorithms}\label{appendix:cutability_implies_weakly_unformizable}

\begin{restatable}{corollary}{CutabilityImpliesWeaklyUnif}\label{cor:cutting_trick_implies_weakly_uniformizable}
    Let $\sC$ be a hereditary class closed by disjoint union, and let $\Pi$ be a $\rho$-local problem which is cuttable with parameter $\beta$.
    Then, for any real $\eps > 0$, the problem $\Pi$ is $\eps$-weakly uniformizable \emph{in the \LOCAL model} with (weak) binding function $(r,\alpha)\mapsto(r+1,\alpha\cdot\beta)$.
\end{restatable}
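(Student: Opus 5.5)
We follow the argument of \cref{prop:weak_meta_theorem}, with one extra step to keep the identifiers polynomial. Fix $\eps>0$, $r$, $\alpha$, and an $r$-round $\alpha$-approximation $\sfA$ of $\Pi$ on $\sC$ in the \LOCAL model. We take $\sfB=\sfA$ and $k=r+1$, and aim to show that $\sfA$ is an $\eps$-weakly $k$-uniform $\alpha\beta$-approximation. Suppose not. Then there are $G\in\sC$ with polynomial identifiers and $S\subseteq V(G)$ with $|S|\ge\lfloor\eps|V(G)|\rfloor$ such that
\[
|\sfA(G)\cap S|>\alpha\beta\cdot\OPT_\Pi(G,N^k[S]).
\]

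Apply cuttability to $X=N^k[S]$. This gives $Y\supseteq N^k[S]$ with $\OPT_\Pi(G[Y])\le\beta\cdot\OPT_\Pi(G,N^k[S])$. Set $G'=G[Y]$ and keep the identifiers of $G$. Since $\sC$ is hereditary, $G'\in\sC$. As in \cref{claim:cutting_trick_view_2}, every vertex of $S$ has the same radius-$r$ view in $G$ and in $G'$, because $N^{r+1}[S]\subseteq Y$. Hence $\sfA(G)\cap S=\sfA(G')\cap S$. Chaining the inequalities gives
\[
|\sfA(G')|\ \ge\ |\sfA(G')\cap S|\ >\ \alpha\beta\cdot\OPT_\Pi(G,N^k[S])\ \ge\ \alpha\cdot\OPT_\Pi(G'),
\]
which contradicts $\sfA$ being an $\alpha$-approximation on $\sC$. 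This is only a contradiction if $G'$ is a legal \LOCAL instance, that is, if its identifiers are polynomial in $|V(G')|$.

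Checking that the identifiers are polynomial is the main obstacle, and the $\eps$-weak hypothesis is exactly what resolves it. We have $|V(G')|\ge|S|\ge\lfloor\eps|V(G)|\rfloor$. The identifiers of $G$ are bounded by $|V(G)|^c$ for some constant $c$. So every identifier in $G'$ is at most
\[
|V(G)|^c\le\bigl((|V(G')|+1)/\eps\bigr)^c,
\]
which is polynomial in $|V(G')|$ for fixed $\eps$.

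One degenerate case remains: if $\eps|V(G)|<1$, the floor is $0$ and $S$ may be empty or tiny. When $S=\emptyset$ the inequality is trivial. When $S\neq\emptyset$ we have $|V(G')|\ge1$ and $|V(G)|<1/\eps$, so all identifiers are bounded by a constant depending only on $\eps$ and $c$, and the polynomial bound still holds after adjusting the constant.

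We assume here that the \LOCAL model permits any fixed polynomial bound on identifiers, with the degree and constant allowed to depend on $\eps$. If a fixed exponent is required instead, we would pad $G'$ with disjoint components, as in the proof of \cref{prop:mds_uniformizable}. This uses closure under disjoint union, but it would need paddability, so we avoid it. With the bound above, the contradiction is valid, and the binding function is $(r,\alpha)\mapsto(r+1,\alpha\beta)$ with $r'=r$.
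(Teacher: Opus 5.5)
Your proposal is correct and follows the paper's proof essentially step for step. You take $\sfA$ itself with $k=r+1$, cut around $N^{r+1}[S]$ via cuttability to get $G'=G[Y]$ with the same identifiers, use equality of radius-$r$ views on $S$ to derive $|\sfA(G')|>\alpha\cdot\OPT_\Pi(G')$, and use $|V(G')|\ge|S|\ge\lfloor\eps|V(G)|\rfloor$ to argue that the identifiers of $G'$ remain polynomial in $|V(G')|$, with constants depending on $\eps$. Your handling of the floor and your explicit assumption that the polynomial bound may depend on $\eps$ are slightly more careful than the paper, which simply writes $|S|\ge\eps|V(G)|$ and asserts that the bound $D\cdot|V(G')|^c/\eps^c$ is ``still polynomial''.
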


\begin{proof}
    Let us go through the proof of \cref{prop:weak_meta_theorem} one more time. For sake of brevity, we do not copy the whole identical proof, but instead give a summary of the proof.
    Assume for sake of contradiction that there exists integers $d,r$ and a real $\alpha$ such that for all integers $k$ and real $\gamma$, there exists an $r$-round $\alpha$-approximation $\sfA$ of $\Pi$ in constant time on the hereditary graph class $\sC$ such that $\sfA$ is not $\eps$-weakly $k$-uniform with $\gamma$-approximation of $\Pi$ in constant time.
    Here, we will take $k=r+1$ and $\gamma=\alpha\cdot\beta$.
    By the definition of weak uniformity, this means that there exists some subset $S\subseteq V(G)$ of size at least $\eps |V(G)|$ for which:
    \[
        |\sfA(G)\cap S| > \alpha\beta \cdot \OPT_\Pi(G,N^k[S])~.
    \]
    We have showed there exists a set $Y$ such that $\beta \cdot \OPT_\Pi(G,N^k[S])\geq \OPT_\Pi(G[Y])$.
    Let $G' := G[Y]$. We keep the same identifiers in $G'$ as in $G$.
    We have showed that $|\sfA(G')|\geq \alpha\cdot\OPT_\Pi(G')$, which would be a contradiction with the fact that $\alpha$-approximation on $\sC$, if $\sfA$ was an algorithm in the \LOCAL model with no polynomial identifiers requirement.
    For this to be a contradiction in the \LOCAL model, we have to show that the IDs in $G'$ are also polynomial in $|V(G')|$.
    Now, we just use the additional property that $S$ has large size.
    All the identifiers in $G'$ are polynomial in $|V(G)|$, say smaller than $D\cdot |V(G)|^c$ for some real $c>0$.
    As $|V(G')|\geq|S|\geq \eps\cdot|V(G)|$, the identifier in $G'$ are smaller than $D\cdot |V(G')|^c/\eps^c$, which is still polynomial. This finishes the proof.
\end{proof}

\subsection{Bounded Euler genus graphs are locally nice}

\begin{proposition}\label{prop:genus_locally_nice}
    Let $\Pi$ be a $\rho$-local problem. Then, for any $g\geq 1$, the class of Euler genus-$g$ graphs is $T$-locally $\delta$-nice w.r.t. planar graphs and $\Pi$, where $\delta < g\cdot(2T+4\rho+1)$.
\end{proposition}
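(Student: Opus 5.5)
Fix $G$ of Euler genus at most $g$, let $X$ be its $T$-error set w.r.t.\ planar graphs, and let $K$ be a connected component of $G[N^{2\rho}[X]]$. The plan is to show that the weak diameter of $K$ in $G$ is less than $g(2T+4\rho+1)$. We mimic the proof of \cref{claim:bdd_genus_locally_nice}: if $K$ had large weak diameter, we would extract $g+1$ vertices of $X$ whose radius-$T$ balls are pairwise disjoint and each non-planar. By additivity of Euler genus over vertex-disjoint subgraphs (cf.~\cite[Theorem~4.4.3]{MT01}), $G$ would then have Euler genus at least $g+1$, a contradiction.

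\textbf{Step 1: a hop-sequence through $X$.} Since $K$ is connected in $G[N^{2\rho}[X]]$, any two of its vertices are joined by a path in $K$. Every vertex of that path lies within distance $2\rho$ of some vertex of $X$. Map each path vertex to such a witness in $X$; consecutive witnesses are then at distance at most $4\rho+1$ in $G$. This gives a sequence $x_1,\dots,x_m$ of vertices of $X$ with consecutive distances at most $4\rho+1$. If $a,b\in K$ realize the weak diameter $D$, we may take the sequence from a witness of $a$ to a witness of $b$, so $d_G(x_1,x_m)\ge D-4\rho$.

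\textbf{Step 2: picking $g+1$ far-apart error vertices.} Set $L=2T+1$, so that balls of radius $T$ around vertices at distance at least $L$ are disjoint. The function $j\mapsto d_G(x_1,x_j)$ starts at $0$ and changes by at most $4\rho+1$ per step. Hence for each target value $t_i=i(L+4\rho)$, $0\le i\le g$, with $t_g\le d_G(x_1,x_m)$, there is an index $j_i$ with $t_i\le d_G(x_1,x_{j_i})< t_i+4\rho+1$. The chosen vertices $u_i=x_{j_i}$ then satisfy $d_G(u_i,u_{i'})\ge |t_i-t_{i'}|-4\rho\ge L$ for $i\ne i'$. So the balls $N^T[u_0],\dots,N^T[u_g]$ are pairwise disjoint, and each induces a non-planar graph by definition of $X$.

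\textbf{Step 3: conclusion and bookkeeping.} Disjointness alone does not give additivity of genus; that needs the components to be genuinely separate. This is handled by taking the subgraph consisting of the disjoint union of the $g+1$ induced subgraphs $G[N^T[u_i]]$. It is a subgraph of $G$, each of its components is non-planar (Euler genus at least $1$), and Euler genus is additive over components and monotone under subgraphs. Hence $G$ has Euler genus at least $g+1$, a contradiction. Therefore $d_G(x_1,x_m)<g(2T+4\rho+1)$, and so $D< g(2T+4\rho+1)+4\rho$.

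\textbf{Main obstacle.} The last step leaves an additive slack of $4\rho$ against the claimed strict bound $\delta<g(2T+4\rho+1)$. To absorb it, I would first try to work directly with vertices of $K$ instead of witnesses in $X$: choose the $u_i$ as vertices of $K$, then replace each by a nearby vertex of $X$, and tune the spacing so that the endpoint losses cancel. Alternatively, one could use a slightly different spacing $L$ (e.g.\ placing the first and last $u_i$ at $x_1$ and $x_m$ exactly). If the exact constant resists, the statement with any $\delta=O(g(T+\rho))$ suffices for all later applications.
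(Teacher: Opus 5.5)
\textbf{There is a gap: the argument as written proves a weaker bound than stated.} Your Steps~2 and~3 are sound. Your care in Step~3, passing to the disjoint union of the balls as a subgraph before invoking additivity, is a detail the paper leaves implicit. But the argument only yields $\delta < g(2T+4\rho+1)+4\rho$, which is $4\rho$ short of the stated bound.

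The cause is the order of operations in Step~1. You replace the path in $K$ by a hop sequence through $X$ \emph{before} spacing. This incurs an endpoint loss, $d_G(x_1,x_m)\ge D-4\rho$, on top of the per-pair loss of $4\rho$ in Step~2. The per-pair loss is exactly what the spacing $2T+4\rho+1$ is designed to absorb. The endpoint loss is not absorbed by anything, because your contradiction is triggered by $d_G(x_1,x_m)\ge g(2T+4\rho+1)$, and $d_G(x_1,x_m)$ may be as small as $D-4\rho$. Re-spacing does not help: hops of length up to $4\rho+1$ already force spacing at least $2T+4\rho+1$. Your second remedy, anchoring the extreme points at $x_1$ and $x_m$, fails for the same reason, since it does not change $d_G(x_1,x_m)$.

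\textbf{The fix is your first remedy, and it needs no tuning:} space along the path inside $K$ first, and only then pass to $X$. Suppose $D\ge g(2T+4\rho+1)$. Take $s,t\in K$ with $d_G(s,t)=D$ and a path $P$ from $s$ to $t$ inside $K$. Along $P$ the function $d_G(s,\cdot)$ changes by at most $1$ per edge, so every value in $\{0,\dots,D\}$ is attained. Pick $u_i\in V(P)$ with $d_G(s,u_i)=d_i:=i(2T+4\rho+1)$ for $0\le i\le g$, and $x_i\in X$ with $d_G(u_i,x_i)\le 2\rho$. For $i<j$,
\[
d_G(x_i,x_j)\ \ge\ d_G(s,x_j)-d_G(s,x_i)\ \ge\ (d_j-2\rho)-(d_i+2\rho)\ \ge\ 2T+1,
\]
so the balls $N^T[x_i]$ are pairwise disjoint, and your Step~3 finishes the proof. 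The only loss is now the $4\rho$ from the two witnesses, which the spacing absorbs. There is no endpoint loss, because the threshold is compared directly with $D=d_G(s,t)$. This is precisely the paper's proof.
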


\begin{proof}
To derive a contradiction, consider a connected component $H$ of $G[N^{2\rho}[X]]$, and assume that $H$ has weak diameter $\delta \ge g \cdot (2T+4\rho+1)$ in $G$. As $g\ge 1$, $X$ and $H$ are not empty, and $H$ is not planar. So, $H$ must contain some path $P$ such that the distance in $G$ between its endpoints, say $s$ to $t$, is $\delta$. In particular, for each $d \in\set{0,1,\dots,\delta}$, there must exist a vertex of $P$ that is at distance exactly $d$ in $G$ from~$s$. This is because the distance in $G$ from $s$, when moving along an edge of~$P$, is a function that can vary by at most one unit, and this distance goes from $0$ (at $s$) to $\delta$ (at $t$). For every $i \in\set{0,1,\dots,g}$, let $u_i$ be any vertex of $P$ at distance exactly $d_i =  i\cdot (2T+4\rho+1)$ in $G$ from $s$. So $u_0 = s$, $u_g = t$, and all the $u_i$'s exist in $P$ (so in $H$) since $d_i \in\set{0,1,\dots,\delta}$. By definition of $H$, each $u_i$ intersects $N^{2\rho}[X]$. So, for each $u_i$ one can select a vertex $x_i\in X$ at distance in $G$ at most~$2\rho$ from $u_i$. By the triangle inequality, for all $0\le i<j\le g$, $x_i$ and $x_j$ are at distance in $G$ at least $(d_j-2\rho) - (d_i+2\rho) = (j-i) \cdot(2T+4\rho+1) - 4\rho \ge 2T+1$. It follows that $N^T[x_i]$ and $N^T[x_j]$ are disjoint. By definition of $X$ and $H$, the subgraphs $G[N^T[x_i]]$'s are not planar, thus of Euler genus $\ge 1$. By additivity of the Euler genera\footnote{The plural of genus.}
    of these $g+1$ pairwise disjoint subgraphs of $G$ (cf.~\cite[Theorem~4.4.3]{MT01}), we get a contradiction that $G$ has Euler genus $g$. Thus $\delta < g\cdot (2T+4\rho+1)$ as claimed.
\end{proof}

\subsection{Balanced asymptotic dimension}\label{appendix:balanced_asdim}

\begin{restatable}{lemma}{BalanceSize}\label{lem:balance_size}
    Let $\sG$ be a graph class with asymptotic dimension at most $d$ and control function $f$.
    For every graph $G\in\sG$ with $n$ vertices and every integer $r\geq 1$, there exists
    a cover $C_0,C_1,\dots,C_d$ of $V(G)$ such that
    \begin{itemize}
        \item every $r$-component of each $C_i$ has weak diameter in $G$ at most $f(3r)+2r$, and
        \item for every $i\in\range{0}{d}$ we have $|C_i|\ge \floor{\frac{n}{d+1}}$.
    \end{itemize}
\end{restatable}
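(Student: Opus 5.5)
Our plan is to start from a coloring given by \cref{prop:asdim_def} at the larger scale $3r$, and then move vertices between color classes to balance sizes without hurting boundedness too much. Concretely, \cref{prop:asdim_def} gives a partition $C'_0,\dots,C'_d$ of $V(G)$ such that every $3r$-component of each $C'_i$ has weak diameter in $G$ at most $f(3r)$. We then set $C_i := C'_i \cup A_i$, where the sets $A_i$ are chosen so that each $|C_i|\ge\floor{n/(d+1)}$. Since the statement only asks for a \emph{cover}, enlarging classes is allowed; the only thing to control is the weak diameter of the $r$-components.

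The key observation is that adding vertices that lie within distance $r$ of an existing $3r$-component does not merge components too much. Suppose every added vertex $a\in A_i$ is within distance $r$ in $G$ of some vertex of $C'_i$. Then take an $r$-component $K$ of $C_i$ and project each vertex of $K$ to a nearby vertex of $C'_i$ at distance at most $r$. Two $r$-adjacent vertices of $K$ project to vertices at distance at most $3r$. Hence all projections lie in a single $3r$-component of $C'_i$, whose weak diameter is at most $f(3r)$. Adding $r$ at each end gives weak diameter at most $f(3r)+2r$, which is the claimed bound.

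So the remaining task is to balance sizes while only adding vertices of $N^r[C'_i]$ to $C_i$. We would handle this as follows.
\begin{itemize}
\item If $N^r[C'_i]$ has at least $\floor{n/(d+1)}$ vertices, we pick any $\floor{n/(d+1)}-|C'_i|$ of them (if that number is positive) and add them to $C_i$.
\item Otherwise, $N^r[C'_i]$ is small, and we need a different route. A natural fix is to exploit the symmetry of the cover condition. A color class whose $r$-neighbourhood is small can be merged with another class, which frees up one index. The freed index then receives a whole copy of some vertex set, for instance a copy of the largest class $C'_j$, so that $C_i := C'_j$ has size at least $n/(d+1)$.
\item Duplicating a class trivially preserves the diameter bound. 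The class $C'_i$ that lost its index must then be absorbed into another class $C'_\ell$ with $\ell\neq i$. Its vertices must be placed so as not to create large components, i.e., they must be $r$-far from that class or adjacent only to its components.
\end{itemize}

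The main obstacle is exactly this absorption step. I would try an iterative argument: while some class has $|N^r[C'_i]|<\floor{n/(d+1)}$, set $C_i := N^r[C'_i]$ (it is already covered elsewhere? not necessarily). Alternatively, set $C_i$ equal to a copy of a large class. Then reinsert the old $C'_i$ by noting that its vertices are covered by $C_i$ only if we keep them. So the cleanest version is probably this: define $C_i := N^r[C'_i]\cup C'_{j}$ with $j$ the index of a largest class, choosing $j$ so that the $r$-neighbourhood stays at distance $>r$ from $C'_j$. This may fail, and if so I would instead take $C_i := N^r[C'_i]$ together with vertices of $N^r[C'_j]$. In that case I would bound the components separately, using that the two pieces only interact through $r$-adjacency, and accept a constant loss. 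Verifying that this last case still meets $f(3r)+2r$ exactly, rather than a weaker constant, is the step I expect to be hardest.
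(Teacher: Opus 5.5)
Your first case and the projection argument are correct and match the paper's Case 1. There is a genuine gap in the second case, $|N^r[C'_i]| < q$ with $q=\lfloor n/(d+1)\rfloor$, which you leave unresolved. The constructions you sketch there do not work.

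\begin{itemize}
\item Absorbing $C'_i$ into another class $C'_\ell$ (after copying the largest class into index $i$) gives no control on diameter. A vertex of $C'_i$ can lie within distance $r$ of two different $3r$-components of $C'_\ell$. A chain of such vertices can then glue arbitrarily many components of $C'_\ell$ into a single $r$-component of unbounded weak diameter.
\item The same objection kills $N^r[C'_i]\cup C'_j$, and $N^r[C'_i]$ plus pieces of $N^r[C'_j]$. Nothing guarantees a class $C'_j$ at distance more than $r$ from $N^r[C'_i]$. Once the two pieces are $r$-adjacent you have no bound at all, let alone $f(3r)+2r$.
\end{itemize}

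The missing idea is that the smallness of $N^r[C'_i]$ is exactly what you exploit. You never need to merge classes or free indices; you only enlarge $C'_i$ by vertices that are \emph{far} from it.

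\begin{itemize}
\item Set $B := V(G)\setminus N^r[C'_i]$. Since $|N^r[C'_i]|<q$, we have $|B|\ge n-q+1$.
\item All of $B$ lies in the $d$ classes $C'_j$ with $j\neq i$. By pigeonhole, some such $C'_j$ has $|C'_j\cap B|\ge (n-q+1)/d > q$, using $(d+1)q\le n$. The case $d=0$ is trivial, since then $C'_0=V(G)$.
\item Add any $q-|C'_i|$ vertices $X\subseteq C'_j\cap B$ to $C'_i$.
\item Every vertex of $X$ is at distance more than $r$ from $C'_i$. Hence each $r$-component of $C'_i\cup X$ lies entirely in $C'_i$ or entirely in $X$.
\item In the first case it sits inside a $3r$-component of $C'_i$. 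In the second it sits inside an $r$-component, hence a $3r$-component, of $C'_j$. Either way its weak diameter is at most $f(3r)$.
\end{itemize}

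Since each enlarged class is a superset of the original $C'_i$, you can do this independently for every small class and still have a cover. This is precisely the paper's Case 2.
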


\begin{proof}
    Let $q:=\big\lfloor\frac{n}{d+1}\big\rfloor$ and $r$ be any integer.
    By \cref{prop:asdim_def} applied to scale $3r$, there exists a $(d+1)$-coloring of $G^{3r}$ whose monochromatic components have weak diameter at most $f(3r)$ (as they have weak diameter $f(3r)$ in $G^{3r}$) in $G$.
    Let $C_0,C_1,\dots,C_d$ be the corresponding color classes, with $|C_0| \leq |C_1|\leq \cdots\leq |C_d|$ without loss of generality. They form a cover of $V(G)$, and every $3r$-component of each $C_i$ has weak diameter at most $f(3r)$ in $G$.

    We construct sets $C'_0,\dots,C'_d$ by proving the following by induction of $i$.
    For every $i$, there exists sets $C'_0,\dots,C'_{i-1}$ such that
    \begin{enumerate}[label=(I\arabic*),ref=(I\arabic*)]
        \item\label{item:I1} for every $j\leq i$, we have $|C'_j|\geq q$, and if $|C'_j|>q$ then $C'_j = C_j$,
        \item\label{item:I2} every $r$-component of each $C'_j$ (for $j\leq i$) has weak diameter at most $f(3r)+2r$ in $G$, and
        \item\label{item:I3} $\displaystyle\bigcup_{j\leq i} C'_j \supseteq \bigcup_{j\leq i} C_j$.
    \end{enumerate}
    Note that \ref{item:I3} says that the sets $C'_0,\dots,C'_i$ cover at least every vertex of $C_0,\dots,C_i$.
    For $i=0$ these conditions are trivially satisfied.
    Assume the conditions hold for some $i\in\{0,\dots,d\}$.
    Let $A:=N^r[C_i]\setminus C_i$ and $B:=(V(G)\setminus N^r[C_i])\setminus C_i$, such that $A\sqcup B=V(G)\setminus C_i$.
    Intuitively, we will try to add vertices to $U$ from $A$ whenever possible (these are within distance $r$ of the original $C_i$), and otherwise take vertices from $B$ (which are farther from $C_i$).
    
    If $|C_i|\geq q$, then take $C'_j = C_j$ for all $j\geq i$. \ref{item:I1} is satisfied because $q \leq |C_i|\leq \cdots\leq |C_d|$.
    \ref{item:I2} is satisfied because the $C_i$'s have all weak diameter at most $f(3r)$, and \ref{item:I3} is trivially satisfied.
    So assume $|C_i|<q$ and let $s:=q-|C_i|>0$ be the number of new vertices we must add to $C_i$ to reach size $q$.
    There are two possibilities.

    \begin{itemize}
        \item \textbf{Case 1: $|A|\geq s$.}
        Choose an arbitrary set $X\subseteq A$ of size $s$ and let $C'_i := C_i\cup X$; this satisfies \ref{item:I1} and \ref{item:I3}.
        Let us prove \ref{item:I2}.
        Let $x,x'\in A$ be two vertices in the same $r$-component of $C'_i$.
        As $C'_i\subseteq N^r[C_i]$, there exist $y,y'\in C_i$ such that $d(x,y),d(x',y')\leq r$.
        We first prove that $y$ and $y'$ are in the same $3r$-component of $C_i$.
        Because $x$ and $x'$ are in the same $r$-component of $C'_i$, exists a sequence $x=x_1,x_2,\dots,x_t=x'$ in $C'_i$ with $d(x_j,x_{j+1})\leq r$ for all $j$.
        As $C'_i\subseteq N^r[C_i]$, we can choose, for each $j$, some $y_j\in C_i$ such that $d(x_j,y_j)\leq r$. Without loss of generality, choose $y_1:=y$ and $y_t:=y'$.
        Then for any $j$, we have that
        \[
        d(y_j,y_{j+1})
        \leq d(y_j,x_j)+d(x_j,x_{j+1})+d(x_{j+1},y_{j+1})
        \leq 3r,
        \]
        and so $y$ and $y'$ are in the same $3r$-component of $C_i$, which means that $d(y,y')\leq f(3r)$.
        Hence
        \begin{align*}        
            d(x,x')&\leq d(x,y)+d(y,y')+d(y',x')\\ &\leq r+f(3r)+r=f(3r)+2r,
        \end{align*}
        as required. This proves \ref{item:I2}.

        \item \textbf{Case 2: $|A|< s$.}
        Then by definition of $B$, 
        \[
            |B| \geq |V(G)|-|A|-|C_i| \geq n-s+1-|C_i| = n-q+1.
        \]
        As the sets $C_j$ for $j\neq i$ cover $B$, there exists some $j\neq i$ such that $|C_j\cap B|\geq(n-q+1)/d \geq q$.  
        Choose an arbitrary set $X\subseteq C_j\cap B$ of size $s$ and let $C'_i := C_i\cup X$; this satisfies \ref{item:I1} and \ref{item:I3}.
        \ref{item:I2} is true all $r$-components of $C'_i$ are either $r$-components of $C_i$ (so they have diameter at most $f(3r)$) or are subsets of $r$-components of $C_j$ (and the same diameter bound applies).
    \end{itemize}

    In both cases we have constructed sets $C'_i$ that satisfies \ref{item:I1}, \ref{item:I2} and \ref{item:I3}.
    By \ref{item:I3}, $\bigcup_{i=0}^d C'_i= \bigcup_{i=0}^d C_i = V(G)$, so the $C'_i$ form a cover.
    \ref{item:I1} and \ref{item:I2} prove the two wanted conditions on the cover. This finishes the proof. 
\end{proof}

\subsection{Meta-theorem for non-paddable problems with empty error set}\label{appendix:non-paddable_metatheorem}

\MetaTheoremNonPaddable*

\begin{proof}
    Let $G\in \sD$ and set $\ell := f(6k+6\rho) + 4k+4\rho$.
    As $\sD$ has $d$-dimensional control function $f$, then by \cref{lem:balance_size}, one can assume $G^{2k+2\rho}$ admits a $(d+1)$-coloring $\ell$-bounded in $G$ where every color class has size at least $\big\lfloor\frac{n}{d+1}\big\rfloor$. Fix such a coloring, and, for each $i\in\set{0,1,\dots,d}$, let $C_i$ be the set of color-$i$ vertices in $G^{2k+2\rho}$ (and also in $G$), so that $|C_i|\geq\big\lfloor\frac{n}{d+1}\big\rfloor$. We denote by $\CC_i$ the set of $(2k+2\rho)$-components of $C_i$ (i.e. connected components of $G^{2k+2\rho}[C_i]$). 
    So, by definition, all components of $\CC_i$ have weak diameter in $G$ at most $\ell$, and are pairwise at distance at least $2k+2\rho+1$ in $G$ (because distinct components of $\CC_i$ cannot be adjacent in $G^{2k+2\rho}$).
    
    We run Algorithm~$\sfA$ on $G$. It is easy to check that $\sfA(G)$ is a valid solution for $G$, because $\sD$ is $T$-locally-$\sC$. By definition of the coloring, 
    \begin{align}\label{eq:nonpaddable}
        \card{\sfA(G)} = \sum_{i=0}^d \card{\sfA(G)\cap C_i}~.
    \end{align}
    
    To upper bound the term $\card{\sfA(G)\cap C_i}$, consider some color $i\in\set{0,1,\dots,d}$ and some component $C\in \CC_i$. Consider some $v_C\in C$, and denote $G_C = G[N^T[v_C]]$. Note that as $\sD$ is $T$-locally-$\sC$, we have $G_C \in\sC$. Let $G' = G[\bigcup_{C\in\CC_i} N^T[v_C]]$ be the union of the $G_C$'s for $C\in\CC_i$.

    Along the proof, we will use several times the following fact:

    \DLocalProblem*
    
    In the following, to differentiate the neighborhoods in $G$ and in $G_C$ or $G'$, we write them using either $N_G$, $N_{G_C}$, or $N_{G'}$ notations.
    Notice that, for each non-negative integer $t\le \max\set{k+\rho,r}$, we have $G[N_{G}^t[C]]\subseteq G_C$. This is because $C$ has weak diameter in $G$ at most $\ell = f(6k+6\rho) + 4k+4\rho$ and thus $G[N_{G}^t[C]]$ has weak diameter in $G$ at most $f(6k+6\rho) + 4k+4\rho + t \le T$, by the choice of $T$. Moreover, as $\sC$ is hereditary, we have that $G[N_{G}^t[C]]\in \sC$.

    As $G[N^t_G[C]]\subseteq G_C$, vertices in $C$ have the same distance $t$-neighborhood in $G$ and $G_C$.
    In particular, $G[N^t_G[C]] = G_C[N^t_{G_C}[C]]$ and $G[N^t_G[C_i]] = G'[N^t_{G'}[C_i]]$.
    From~\cref{fact:d_local_problem}, any minimum set of vertices of $G$ that satisfies the constraints of $N^k_G[C]$ is contained in $G[N^{k+\rho}_G[C]]$, and similarly if we replace $G$ by $G_C$.
    It follows that $\OPT_\Pi(G,N^k_{G}[C]) = \OPT_\Pi(G_C,N^k_{G_C}[C])$. 
    Recall that any two connected components $C, C' \in \CC_i$ are at distance in $G$ at least $2k+2\rho+1$. So $G[N^{k+\rho}_G[C]]$ and $G[N^{k+\rho}_G[C']]$ are disjoint subgraphs, and, as to satisfy all constraints of vertices in $V(G)$, one needs to satisfy constraints of vertices in $N^{k}_G[C]$ and $N^{k}_G[C']$, by disjunction of these sets and by~\cref{fact:d_local_problem}, we have by summing over all $C\in\CC_i$ that
    \[
        \OPT_\Pi(G,N^k_{G}[C_i]) = \OPT_\Pi(G',N^k_{G'}[C_i]).
    \]
    Moreover, because $\sC$ is stable by disjoint union and that $G[N_{G}^t[C]]\in\sC$ for every $C\in \CC'_i$, we get $G[N_{G}^t[C_i]]\in\sC$.
    As Algorithm~$\sfA$ is a $\frac{1}{d+1}$-weakly $k$-uniform $\alpha$-approximation on $\sC$, that $G'\in \sC$ and that $|C_i|\geq \big\lfloor\frac{|V(G)|}{d+1}\big\rfloor\geq \big\lfloor\frac{|V(G')|}{d+1}\big\rfloor$ (and therefore $G'$ indeed has polynomial identifiers), we get by definition of weak uniformity that
    \[
        \card{\sfA(G')\cap C_i} ~\le~ \alpha \cdot \OPT_\Pi(G',N_{G'}^k[C_i]) = \alpha \cdot \OPT_\Pi(G, N^k_G[C_i]) ~.
    \]
    Since $G[N^r_G[C_i]] = G'[N^r_{G'}[C_i]]$, Algorithm~$\sfA$, which runs in $r$ rounds, returns the same vertex set in $G$ and $G'$ for vertices in $C_i$. Therefore, $\card{\sfA(G)\cap C_i} = \card{\sfA(G')\cap C_i}$. Combining with the previous inequality, we get
    \begin{equation}
    \begin{aligned}\label{eq:nonpaddable_Ci_bound}
        \card{\sfA(G)\cap C_i} ~\le~ \alpha \cdot \OPT_\Pi(G, N_G^k[C_i])~\le~ &\alpha \cdot \OPT_\Pi(G,V(G))\\
        &~=~ \alpha \cdot \OPT_\Pi(G)~.
    \end{aligned}
    \end{equation}
    Putting \cref{eq:nonpaddable_Ci_bound} in \cref{eq:nonpaddable}, we get that Algorithm~$\sfA$ is an $\alpha(d+1)$-approximation on $\sD$.
\end{proof}

\subsection{Meta-theorem for paddable problems with non-empty error set}
\label{appendix:paddable_metatheorem}

\begin{restatable}[Meta-Theorem~III]{theorem}{MetaTheoremPaddable}\label{th:metatheorem_paddable}
    Let $d,k$ be integers $\alpha > 0$ a real, $\sfA$ be a $\eps$-weakly~\footnote{Note that, for the proof, we could also prove as a preliminary that weakly uniform and paddable implies uniform. However, for sake of brevity, we prove everything directly in the proof.} $k$-uniform $\alpha$-approximation \LOCAL algorithm for an additive $\rho$-local problem $\Pi$ in a class of graphs $\sC$ with round complexity $r$, where $\sC$ is hereditary, stable by disjoint union, and $O(1)$-paddable for $\Pi$.
    Let $\sD$ be a graph class of asymptotic dimension $d$ with $d$-dimensional control function $f$ such that $\Pi$ is dense in $\sD$, and such that $\sD$ is $T$-locally $\delta$-nice w.r.t. $\sC$ and $\Pi$, where $T = f(2k+2\rho) + \max\set{k+\rho,r}$. Then there exists a function $g$ such that for any $\eta>0$, there exists a $\max\{k,\rho\}$-uniform $(\alpha (d+1)+1+\eta)$-approximation algorithm $\sfB$ on $\sD$ with round complexity $\max\{T+\delta+2,g(\eta)\}$.
\end{restatable}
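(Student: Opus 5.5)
We will follow the template of the proof of \MetaI, using the same algorithm $\sfB$ (\cref{algo:PiI}): compute the $T$-error set $X$, run $\sfA$ and keep $\sfA(G)\setminus X$, then brute-force a minimum set $S'\subseteq N^{2\rho}[X]$ satisfying the remaining unsatisfied constraints. Additivity gives feasibility, and the round count $T+\delta+2$ is exactly as in \MetaI. The only new issue is that $\sfA$ may now rely on polynomial identifiers. So the subgraph $G'=G[\bigcup_{C\in\CC'_i}N^T[v_C]]$ on which we invoke uniformity must carry identifiers polynomial in its own size, and $S=\bigcup\CC'_i$ must have size at least $\eps|V(G')|$. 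We also add a preliminary brute-force phase costing $g(\eta)$ rounds to handle instances with small optimum. This phase is where the extra $+\eta$ in the ratio comes from.

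\textbf{Step 1 (small-optimum instances).} Let $B=B(\eta)$ be a threshold to be fixed. Each vertex explores radius $g(\eta)$, where $g(\eta)$ is a function of $B$ and the density function of $\Pi$ in $\sD$. Because $\Pi$ is dense in $\sD$, a component with $\OPT_\Pi\le B$ has diameter bounded by a function of $B$. Such a vertex therefore sees its whole component and solves it optimally. Otherwise it knows $\OPT_\Pi(G)>B$ and runs the \MetaI procedure. This is the analogue of the trick of \cref{prop:aMDS+b}. It lets us assume $\OPT_\Pi(G)>B$, so that any additive error $c\,\omega$ with $\omega=O(1)$ the padding constant is at most $\eta\,\OPT_\Pi(G)$ once $B\ge c\,\omega/\eta$.

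\textbf{Step 2 (padding to restore polynomial IDs and weak uniformity).} For each colour $i$, build $G'$ as in \MetaI and check $G'\in\sC$ using heredity and disjoint union. We then form $\widehat G=G'\sqcup P$, where $P\in\sC$ comes from the $\omega$-paddability of $\sC$ for $\Pi$. We choose $P$ large enough that $|V(\widehat G)|$ is polynomially related to $|V(G)|$, so the original identifiers of $G$ remain polynomial, and we give $P$ fresh identifiers. Then $\widehat G\in\sC$ has valid \LOCAL identifiers. The issue is that $S=\bigcup\CC'_i$ may now be a small fraction of $\widehat G$, so weak uniformity cannot be applied to $S$ directly. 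We instead apply it to $S\cup V(P)$ or to $V(\widehat G)$ itself, using the disjoint union so that outputs and optima split across $G'$ and $P$. This yields
\[
\card{\sfA(G)\cap S}\le \alpha\bigl(\OPT_\Pi(G,N^k[S])+\omega\bigr).
\]
Here we also use that $\sfA$ on $\widehat G$ agrees with $\sfA$ on $G$ on $S$, since the $r$-neighbourhoods coincide. Since $P$ has bounded optimum, its contribution is only the additive $\alpha\omega$. Summing over the $d+1$ colours and adding $|S'|\le\OPT_\Pi(G)$ gives
\[
\card{\sfB(G)}\le(\alpha(d+1)+1)\OPT_\Pi(G)+(d+1)\alpha\omega,
\]
and Step 1 absorbs the last term into $\eta\OPT_\Pi(G)$. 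The $\max\{k,\rho\}$-uniformity follows exactly as in \MetaI.

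\textbf{Main obstacle.} The delicate point is Step 2. Weak uniformity must be applied to a set that is a constant fraction of $\widehat G$, while the padding graph $P$ must be large enough for identifier polynomiality yet contribute only $O(\omega)$ to both $\sfA$'s output and the optimum. Bounding $|\sfA(P)|$ is not automatic. My first attempt is to note that $\sfA$ is an $\alpha$-approximation on $P$, so $|\sfA(P)|\le\alpha\omega$. I would then apply weak uniformity with $\eps$-fraction to the whole vertex set, using $\OPT_\Pi(\widehat G)=\OPT_\Pi(G',\cdot)+\OPT_\Pi(P)$. The choice of $\eps$ and $P$'s size must be tuned so that the identifiers $\le|V(G)|^c$ stay polynomial in $|V(\widehat G)|$.
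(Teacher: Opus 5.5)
Your outline is the paper's proof. It uses the same algorithm: a density-based diameter check costing $g(\eta)$ rounds, then error detection, keeping $\sfA(G)\setminus X$, and a brute-force inside $N^{2\rho}[X]$. The diameter check should run in parallel with the other steps, not as a preliminary phase, so that the round complexity is $\max\{T+\delta+2,g(\eta)\}$ rather than a sum. It also uses the same padding $G'\sqcup\widehat G$ with fresh identifiers, applies weak uniformity to $Y=\bigcup\CC'_i\sqcup V(\widehat G)$, and absorbs the additive $(d+1)\alpha\omega$ by density.

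What needs correcting is your ``main obstacle'' paragraph, which points the wrong way. Nothing has to be shown about $\card{\sfA(P)}$. Weak uniformity on $Y=S\sqcup V(P)$ gives $\card{\sfA(G')\cap S}+\card{\sfA(P)}\le\alpha\bigl(\OPT_\Pi(G',N^k[S])+\OPT_\Pi(P)\bigr)$, and the nonnegative term $\card{\sfA(P)}$ is simply dropped from the left-hand side.

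Your fallback of applying weak uniformity to all of $V(\widehat G)$ would break the argument. It only gives $\card{\sfA(G')\cap S}\le\alpha\bigl(\OPT_\Pi(G')+\omega\bigr)$. The full optimum of the cut-out graph $G'$ must also satisfy boundary vertices whose solution vertices in $G$ lie outside $G'$, so it is not bounded by $\OPT_\Pi(G)$. For \textsc{Minimum Dominating Set}, many boundary vertices with private neighbours inside $G'$ and one common neighbour outside it already make this ratio unbounded. Still less is it bounded by the local quantity $\OPT_\Pi(G,N^k[S])$ that the colour-by-colour sum needs.

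Also, $\eps$ is given, so only $|V(P)|$ is to be chosen. Taking $|V(P)|\ge\eps|V(G)|/(1-\eps)$, as in the paper, yields $|Y|\ge|V(P)|\ge\eps|V(\widehat G)|$ and keeps the identifiers of $G$ polynomial in $|V(\widehat G)|$.

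One caveat you inherit from the paper: ``$G'\in\sC$ by heredity and disjoint union'' is not justified as written. Distinct $C,C'\in\CC'_i$ are only guaranteed to be at distance at least $2k+2\rho+1$, whereas the balls $N^T[v_C]$ have radius $T\ge f(2k+2\rho)+k+\rho$. So these balls may overlap or be joined by edges, and $G'$ need not be a disjoint union of graphs of $\sC$. For example, with $\sC$ the forests and $G$ a long cycle, $G'$ can be the whole cycle. A rigorous version uses the disjoint union of the graphs $G[N^t[C]]$ for $t=\max\{k+\rho,r\}+1$. These are guaranteed to be pairwise disjoint once the colouring is taken at scale $2t$ rather than $2k+2\rho$, with $T$ adjusted accordingly.
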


\begin{proof}
    Let $G\in \sD$ and $X$ be the set of $T$-errors of $G$ with respect to $\sC$.
    As $\sD$ has $d$-dimensional control function $f$, then $G^{2k+2\rho}$ admits a $(d+1)$-coloring $f(2k+2\rho)$-bounded in $G$.
    Fix such a coloring, and, for each $i\in\set{0,1,\dots,d}$, let $C_i$ be the set of color-$i$ vertices in $G^{2k+2\rho}$ (and also in $G$). 
    We denote by $\CC_i$ the set of $(2k+2\rho)$-components of $C_i$ (i.e. connected components of $G^{2k+2\rho}[C_i]$). 
    So, by definition, all components of $\CC_i$ have weak diameter in $G$ at most $f(2k+2\rho)$, and are pairwise at distance at least $2k+2\rho+1$ in $G$ (because distinct components of $\CC_i$ cannot be adjacent in $G^{2k+2\rho}$).

    We describe the Algorithm~$\sfB$ we want to use in \cref{algo:PiIII}, which is very similar to \cref{algo:MDS_bdd_genus}.
    \begin{algorithm}
    \caption{Generic algorithm $\sfB$ from \MetaIII.}\label{algo:PiIII}
    \begin{algorithmic}[1]

    \Require An $\eps$-weakly $k$-uniform $\alpha$-approximate \LOCAL algorithm $\sfA$ for $\Pi$ in $\sC$ with round complexity $r$, a graph classes $\sC$ and $\sD$ with properties of the statement of \MetaIII, the $d$-dimensional control function $f$ of $\sD$ and $G\in\sD$ with $T$-errors $X$ where $T = f(2k+2\rho) + \max\set{k+\rho,r}$.

    \Ensure A solution $S$ for the graph $G$ to the problem $\Pi$ with the guarantee that $\card{S}\leq(\alpha (d+1)+1) \cdot \OPT_\Pi(G)$.

    \State  Let $B$ be such that any graph $G$ of $\sD$ with $\diam(G)\geq B$ has $\OPT_\Pi(G)\geq \alpha(d+1)\omega/\eta$.
    If $\diam(G)<B$, brute-force an optimal solution. 
    \Comment{$B$ exists as $\Pi$ is dense in $\sD$.}

    \State $S \gets \emptyset$

    \State Each vertex $u$ computes $G[N^T[u]]$ and checks whether it belongs to $\sC$; as a consequence, it decides whether it belongs to $X$. 

    \State Each vertex $u$ runs $\sfA$ for $r$ rounds, and gets added to $S$ only if $u\in \sfA(G)\setminus X$.

    \State $S \gets S \cup S'$, where $S'$ is a brute-forced minimum set of $G$ that satisfies the constraints of vertices that are non-satisfied.

    \end{algorithmic}
    \end{algorithm}

    We run Algorithm~$\sfB$ on $G$. It is easy to check that $\sfB(G)$ is a valid solution for $G$, because $\Pi$ is additive. By definition, $\card{\sfB(G)} \le |S| + |S'|$, where $S = \sfA(G)\setminus X$ (Step~4) and $S'$ is a brute-forced set of minimum size satisfying the constraints of vertices of $G$ not satisfied (Step~5). Note that every vertex not in $N^{\rho}[X]$ is satisfied by $S$, so $S'\subseteq N^{2\rho}[X]$. Clearly, $\card{S'} \le \OPT_\Pi(G)$, and $\card{S'} = 0$ if there is no $T$-errors ($X=\emptyset$). Moreover, by definition of the coloring, $\card{\sfA(G)\setminus X} = \sum_{i=0}^d \card{(\sfA(G)\cap C_i)\setminus X}$. In other words,
    \begin{equation}\label{eq:B(G)paddable}
    \card{\sfB(G)} ~\le~ \sum_{i=0}^d \card{(\sfA(G)\cap C_i)\setminus X} + \left\{\begin{array}{ll}
        0 &\mbox{if $X=\emptyset$}\\
        \OPT_\Pi(G) &\mbox{otherwise}
        \end{array}
        \right.
    \end{equation}
    
    To upper bound the term $\card{(\sfA(G)\cap C_i)\setminus X}$, consider some color $i\in\set{0,1,\dots,d}$ and some component $C\in \CC_i$ such that $C\not\subseteq X$. Consider some $v_C\in C\setminus X$, and denote $G_C = G[N^T[v_C]]$. Note that, by definition of $X$ and as $v\notin X$, we have $G_C \in\sC$. Moreover, let $\CC'_i = \{C\in\CC_i \mid C\not\subseteq X\}$ and let $G' = G[\bigcup_{C\in\CC'_i} N^T[v_C]]$ be the union of the $G_C$'s for $C\in\CC_i$ such that $C\not\subseteq X$.

    Along the proof, we will use several times the following fact:

    \DLocalProblem*
    
    In the following, to differentiate the neighborhoods in $G$ and in $G_C$ or $G'$, we write them using either $N_G$, $N_{G_C}$, or $N_{G'}$ notations.
    Notice that, for each non-negative integer $t\le \max\set{k+\rho,r}$, we have $G[N_{G}^t[C]]\subseteq G_C$. This is because $C$ has weak diameter in $G$ at most $f(2k+2\rho)$ and thus $G[N_{G}^t[C]]$ has weak diameter in $G$ at most $f(2k+2\rho) + t \le T$, by the choice of $T$. Moreover, as $\sC$ is hereditary, we have that $G[N_{G}^t[C]]\in \sC$.

    As $G[N^t_G[C]]\subseteq G_C$, vertices in $C$ have the same distance $t$-neighborhood in $G$ and $G_C$.
    In particular, $G[N^t_G[C]] = G_C[N^t_{G_C}[C]]$ and $G[N^t_G[\bigcup\CC'_i]] = G'[N^t_{G'}[\bigcup\CC'_i]]$.
    From~\cref{fact:d_local_problem}, any minimum set of vertices of $G$ that satisfies the constraints of $N^k_G[C]$ is contained in $G[N^{k+\rho}_G[C]]$, and similarly if we replace $G$ by $G_C$.
    It follows that $\OPT_\Pi(G,N^k_{G}[C]) = \OPT_\Pi(G_C,N^k_{G_C}[C])$. 
    Recall that any two connected components $C, C' \in \CC_i$ are at distance in $G$ at least $2k+2\rho+1$. So $G[N^{k+\rho}_G[C]]$ and $G[N^{k+\rho}_G[C']]$ are disjoint subgraphs, and, as to satisfy all constraints of vertices in $V(G)$, one needs to satisfy constraints of vertices in $N^{k}_G[C]$ and $N^{k}_G[C']$, by disjunction of these sets and by~\cref{fact:d_local_problem}, we have by summing over all $C\in\CC_i$ that
    \[
        \OPT_\Pi\left(G,N^k_{G}\left[\bigcup \CC'_i\right]\right) = \OPT_\Pi\left(G',N^k_{G'}\left[\bigcup \CC'_i\right]\right).
    \]
    Moreover, because $\sC$ is stable by disjoint union and that $G[N_{G}^t[C]]\in\sC$ for every $C\in \CC'_i$, we get $G[N_{G}^t[\bigcup\CC'_i]]\in\sC$. 
    
    In the following, we will choose a graph $\widehat{G}$ with size linear in $|V(G)|$ so that adding $\widehat{G}$ to any subgraph $H$ of $G'$ will form a graph with size linear in  $|V(\widehat{G}\sqcup G)|$. 
    This serves two purposes: we are now able to apply weak uniformity on $\widehat{G}\sqcup H$ and we can attribute unique identifiers to $\widehat{G}$ so that the identifiers of $\widehat{G}\sqcup H$ are polynomial (we keep the same identifiers for $H$).
    Let $N:= \lceil\eps|V(G)|/(1-\eps)\rceil$. By paddability, there exists a graph $\widehat{G}\in\sC$ on at least $N$ vertices such that $\OPT_\Pi(\widehat{G})\leq \omega$.
    Let $I$ be the largest identifier of $G$. We give identifiers to vertices of $\widehat{G}$ from the set $\{I+1,I+2,\dots,I+|V(G')|\}$.
    Let $G''$ be the disjoint union of $G'$ and $\widehat{G}$ and let $Y=\bigcup\CC'_i \sqcup V(\widehat{G})$. The identifiers of vertices in $G''$ are indeed polynomial in the size of $G''$, as $\widehat{G}$ has size at least linear in $|V(G)|$. 
    Our goal now is to apply weak uniformity in $G''$ to the set $Y$. Note that we only care about $\card{\sfA(G')\cap \bigcup\CC'_i}$ but need $Y$ to apply weak uniformity.
    We first prove that $|Y|\geq \eps|V(G'')|$.
    \begin{align*}
        |Y|&\geq |V(\widehat{G})| \geq \eps|V(\widehat{G})| + (1-\eps)|V(\widehat{G})|\\&\geq \eps|V(\widehat{G})|+(1-\eps)N\geq\eps|V(\widehat{G})|+\eps|V(G)|\geq \eps|V(G'')|~.
    \end{align*}
    Therefore, by $\eps$-weak $k$-uniformity $\alpha$-approximation of $\sfA$ on $\sC$, we get
    \begin{align*}
        \card{\sfA(G'')\cap Y}\leq \alpha\cdot\OPT_\Pi(G'',N^k[Y])~.
    \end{align*}
    We have $|\sfA(G'')\cap Y| = |\sfA(G')\cap \bigcup\CC'_i| + |\sfA(\widehat{G})|$ and $\OPT_\Pi(G'',N_ {G''}^k[Y])=\OPT_\Pi(\widehat{G}) + \OPT_\Pi(G',N_{G'}^k[\bigcup\CC'_i])\leq \omega+\OPT_\Pi(G',N_{G'}^k[\bigcup\CC'_i])$ because $\widehat{G}$ and $G'$ are in disjoint components of $G''$. Forgetting about the term $|\sfA(\widehat{G})|$, it follows that 
    \begin{align}\label{eq:omega}
            \card{\sfA(G')\cap \bigcup\CC'_i}\leq \alpha\cdot \omega +\alpha\cdot\OPT_\Pi\left(G',N_{G'}^k\left[\bigcup\CC'_i\right]\right)~.
    \end{align}

    Since $G[N^r_G[\bigcup\CC'_i]] = G'[N^r_{G'}[\bigcup\CC'_i]]$, Algorithm~$\sfA$, which runs in $r$ rounds, returns the same vertex set in $G$ and $G'$ for vertices in $\bigcup\CC'_i$. Therefore, $\card{\sfA(G)\cap \bigcup\CC'_i} = \card{\sfA(G')\cap \bigcup\CC'_i}$. Combining with the previous inequality, we get
    \def\ccix{\substack{C\in\CC_i\\ C\not\subseteq X}}

    \begin{align*}
        \card{\sfA(G)\cap \bigcup\CC'_i} ~\le~ \alpha\cdot\omega+\alpha \cdot \OPT_\Pi\left(G, N_G^k\left[\bigcup\CC'_i\right]\right)~\le~ &\alpha\cdot\omega+\alpha \cdot \OPT_\Pi(G,V(G))\\
        &~=~ \alpha\cdot\omega+\alpha \cdot \OPT_\Pi(G)~.
    \end{align*}
    We remark that $\card{(\sfA(G)\cap C)\setminus X} = 0$ if $C \subseteq X$. Because all vertices in $C_i$ but not in $X$ are in some $C\in\CC'_i$, we have
    
    \begin{align}\label{eq:paddable_Ci_bound}
        \card{(\sfA(G)\cap C_i)\setminus X} \leq \card{\sfA(G)\cap \bigcup\CC'_i} \leq \alpha\cdot\omega+ \alpha \cdot \OPT_\Pi(G)~.
    \end{align}

    Now, we can apply the same trick as in \cref{prop:aMDS+b}.
    As $\Pi$ is dense in $\sD$, there exists some $B$ such that any graph of $\sD$ of diameter at least $B$ has optimum value at least $\alpha(d+1)\omega/\eta$.
    Algorithm~$\sfB$ checks (Step~1) whether the diameter $D$ of its component is less than $B$. This can be done by collecting the radius-$B$ neighborhood of every vertex in $B$ rounds. If true, the vertex locally brute-forces an optimal solution for its component, which it already knows about. If false, the vertex continues the algorithm with Step~2.
    If $D<B$, then the algorithm is obviously a good approximation (with approximation ratio $1$).
    Otherwise, $D\geq B$ and it follows that $\OPT_\Pi(G)\geq \alpha(d+1)\omega/\eta$. Putting in \cref{eq:paddable_Ci_bound} in \cref{eq:B(G)paddable}, we get that Algorithm~$\sfB$ is an $(\alpha(d+1)+1+\eta)$-approximation, and even an $(\alpha(d+1)+\eta)$-approximation if $X=\emptyset$. 
    
    To see that $\sfB$ is also a $k$-uniform approximation, it is sufficient to see that, for any $W \subseteq V(G)$, the bound obtained is $\card{\sfB(G)\cap W} \le \alpha(d+1)\omega +\alpha (d+1)\cdot \OPT_\Pi(G, N_G^k[W]) + \OPT_\Pi(G, N_G[W])$ where the first two terms are due to running $\sfA$ and the last term is due to running the brute-force.
    Indeed, if the brute-force computed a set whose intersection with $W$ was smaller than $\OPT_\Pi(G, N^\rho_G[W])$, we could replace it by the minimum size set satisfying the constraints of the vertices in $N^\rho_G[W]$ and obtain a smaller set, a contradiction. This proves that $\sfB$ is $\max\{k,\rho\}$-uniform with approximate ratio $\alpha(d+1)+1$ (or $\alpha(d+1)$ if $X=\emptyset$, as the brute-force is not required in that case).
    Now, let us prove that Algorithm~$\sfB$ has the desired round complexity.
    \begin{itemize}
        \item Step~1 takes $B = g(\eta)$ rounds for computing if the diameter is small, and no additional round for the brute-force.

        \item Computing $X$ in Step~3 takes $T+1$ rounds.
    
        \item Running $\sfA$ in Step~4 takes~$r$ rounds.
    
        \item For Step~5, consider the set $S$ computed by $\sfB$ at Step~4, before the brute-force.
        Observe that the vertices in the set $W = V(G)\setminus N^\rho_G[X]$ are satisfied by $S$. This is because vertices of $W$ are at distance at least $\rho+1$ from $X$ and thus vertices of $N^\rho_G[W]$ cannot be in $X$. Thus, $\sfA$ applies to all vertices of $N^\rho_G[W]$, and so $W$ is indeed satisfied by $S$ (\cref{fact:d_local_problem}).
        Therefore, to satisfy $N^\rho_G[X]$ it is enough to select a set $S'$ from $N^{2\rho}_G[X]$ (\cref{fact:d_local_problem}) as done in Step~5.
        By assumption, the connected components of $N_G^{2\rho}[X]$ have weak diameter in $G$ at most $\delta$ (obviously, if $X = \emptyset$ then $\delta = 0$). Therefore, the brute-force (Step~5) will take at most $\delta+1$ rounds. 
    \end{itemize}
    Naively, Steps~3 and~4 together take $(T+1) + r$ rounds. As Algorithm~$\sfA$ is not guaranteed to work when executed on vertices of $X$, Step~4 must be run only after Step~3. However, both steps can be run in parallel as follows. We run $\sfA$ for $r$ rounds exactly and stop it just after (since its running time on a vertex of $X$ could result into more than $r$ rounds). Then, the decision to add $u$ in $S$ (if selected by $\sfA$) is delayed for the next $T+1 - r$ rounds (this is non-negative from the choice of $T$). In parallel, $G[N^T[u]]$ is computed and is checked to be in $\sC$ or not after $T+1$ rounds. So, after $\max\set{r,T+1} = T+1$ rounds, set $S$ in Step~4 has been completed.
    The same trick can be applied for Step~1.
    Step~5 takes $\delta+1$ steps, so that the total round complexity of $\sfB$ is $\max\{T+\delta+2,g(\eta)\}$, completing the proof.
\end{proof}

\section{Miscellaneous propositions}

\begin{proposition}\label{prop:0uniform}
    Every graph class $\sC$ including trees has no $0$-uniform $\alpha$-approximations for \minDS, for every ratio $\alpha \ge 1$.
\end{proposition}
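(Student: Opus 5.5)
The plan is to build one explicit tree on which \emph{every} dominating set violates the $0$-uniformity inequality for some suitable set $S$. The tool is the following observation. If there is a vertex $x$ with $S\subseteq N[x]$, then $\MDS(G,S)\le 1$ because $\{x\}$ dominates $S$. So $0$-uniformity would force $\card{\sfA(G)\cap S}\le\alpha$. It is therefore enough to find a tree in which every dominating set has more than $\alpha$ of its vertices inside the closed neighbourhood of a single vertex.

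Fix $\alpha\ge 1$ and set $m=p=\floor{\alpha}+1>\alpha$. Let $G$ be the tree with:
\begin{itemize}
\item a root $c$ adjacent to $u_1,\dots,u_m$;
\item for each $i$, $p$ pendant leaves $\ell_{i,1},\dots,\ell_{i,p}$ attached to $u_i$.
\end{itemize}
Then $G$ is a tree, so $G\in\sC$. Suppose $\sfA$ is a $0$-uniform $\alpha$-approximation for \minDS on $\sC$, and let $D=\sfA(G)$. Since $\sfA$ is an approximation algorithm, $D$ is a dominating set of $G$.

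Each leaf $\ell_{i,j}$ is dominated only by itself or by $u_i$. Hence for every $i$, either $u_i\in D$ or all $p$ leaves $\ell_{i,1},\dots,\ell_{i,p}$ lie in $D$. The argument splits into two cases.
\begin{itemize}
\item Some $u_i\notin D$. Take $S=\{\ell_{i,1},\dots,\ell_{i,p}\}$. Then $\MDS(G,S)=1$, witnessed by $u_i$, while $\card{D\cap S}=p>\alpha=\alpha\cdot\MDS(G,S)$.
\item All $u_i\in D$. Take $S=\{u_1,\dots,u_m\}$. Then $S\subseteq N(c)$, so $\MDS(G,S)=1$, while $\card{D\cap S}=m>\alpha$.
\end{itemize}
In both cases the defining inequality of $0$-uniformity, $\card{\sfA(G)\cap S}\le\alpha\cdot\MDS(G,N^0[S])$ with $N^0[S]=S$, fails. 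So no such $\sfA$ exists.

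There is no real obstacle. The one point to get right is that a single star is not enough: on a star, the algorithm can output just the centre without violating anything. The two-level tree is needed so that both branches of the case split lead to a contradiction. It also suffices that $\sfA$ outputs a dominating set on this one tree, so the number of rounds and the identifiers used by $\sfA$ play no role.
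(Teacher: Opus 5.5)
Your proof is correct. It uses the same kind of depth-$2$ tree as the paper (root, middle layer, pendant leaves), but the key step is different.

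\textbf{The paper's route.} It takes $\alpha+1$ middle vertices, each with $\alpha^2+1$ leaves. It then uses the \emph{global} guarantee $\card{\sfA(T_\alpha)}\le\alpha\cdot\MDS(T_\alpha)$ to force every middle vertex into the solution. Skipping $t\ge 1$ of them costs $t(\alpha^2+1)$ leaves, which pushes the size to $\alpha(t\alpha+1)+1>\alpha(\alpha+1)$. After that, a single witness set violates $0$-uniformity: the middle layer, which the root dominates.

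\textbf{Your route.} You never invoke the global ratio. You use only feasibility of the output, and you apply the $0$-uniform inequality itself in both branches of a case split. In the first branch, the leaves of an unselected middle vertex serve as the witness set.

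\textbf{Comparison.} With $k=0$, the global ratio is exactly the uniformity inequality for $S=V(G)$. So the paper's argument can be read as the same case split, using the cruder witness $S=V(G)$ in the first case. That is why it needs $\alpha^2+1$ leaves per branch, whereas your finer witness lets $\lfloor\alpha\rfloor+1$ leaves suffice. Your choice also handles non-integer $\alpha$ cleanly, while the paper's construction tacitly assumes $\alpha$ is an integer. In exchange, the paper's version gives a slightly more informative structural fact: every $\alpha$-approximate dominating set of $T_\alpha$ contains the entire middle layer. Once that is established, it needs only one witness set.
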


\begin{proof}
Consider a depth-$2$ tree $T_{\alpha}$, with $\alpha + 1$ vertices at depth-$1$, each having $\alpha^2 + 1$ neighbors at depth-2. See \cref{fig:0_uniform_tree} for an example. Clearly, $\MDS(T_{\alpha}) = \alpha+1$. Moreover, any $\alpha$-approximation algorithm $\sfA$ in $T_{\alpha}$ has to select all depth-$1$ vertices. 
If not, each of the $t$ non-selected vertex at depth~$1$ would force the selection of $\alpha^2+1$ extra vertices at depth~$2$, creating by this way a solution with at least $(\alpha+1-t) + t\cdot (\alpha^2+1) = \alpha\cdot(t\alpha+1) + 1$ vertices. For $t\ge 1$, this is at least $\alpha\cdot(\alpha+1)+1 > \alpha \cdot \MDS(T_{\alpha})$. 
However, for a subset $S$ composed of all depth-$1$ vertices of $T_{\alpha}$ (so with $\card{S} = \alpha+1$), we have on one side $\card{\sfA(T_{\alpha}) \cap S} = \card{S} = \alpha+1$, whereas $\MDS(T_{\alpha},N^0[S]) = \MDS(T_{\alpha},S) = 1$ (considering the root). So $\card{\sfA(T_{\alpha}) \cap S} \not\le \alpha \cdot \MDS(T_{\alpha},N^0[S])$. 
Therefore, $\sfA$ cannot be $0$-uniform in $T_{\alpha}$.
\begin{figure}[htbp!]
    \centering \begin{tikzpicture}[scale=0.7]
  \tikzset{every node/.style={vertex}}
  \tikzset{every path/.style={edge}}
\node (0) at (0,0) {}; \def\N{5} \def\LF{5.7} \draw[fill=mygreen,rounded corners,draw=none,opacity=0.35] ({-\LF-1},-1) rectangle ({\LF+1},-3);
  \foreach \F in {1,2,3} { \def\XF{\LF*(\F-2)} \node[draw=mygreen] (\F) at ({\XF},-2) {};
    \draw (0) -- (\F);
    \foreach \i in {1,...,\N} {
      \def\XI{0.85*(\i-(\N+1)/2)} \node (\F\i) at ({\XF+\XI},-5) {};
      \draw (\F\i) -- (\F);
    }
    }
    \node[draw=none,fill=none] at ({\LF+1.5},-0.75) { \color{mygreen!70!Green}{\LARGE $S$}};
\end{tikzpicture}
 \caption{The tree $T_{\alpha}$ has no $0$-uniform $\alpha$-approximations (here with $\alpha=2$). Indeed, any $\alpha$-approximation $\sfA$ of its minimum dominating set must contain the $\alpha+1$ vertices of $S$ (green), and thus $\card{\sfA(T_{\alpha}) \cap S} = \card{S} \not\le \alpha \cdot \MDS(T_{\alpha},N^0[S])$ since $S$ can be dominated by a single vertex (the root).}\Description{~}
 \label{fig:0_uniform_tree}
 \end{figure}
\end{proof}

\begin{proposition}\label{prop:aMDS+b}
    For every $\eps>0$, any $r$-round algorithm that returns a dominating set of size at most $\alpha \cdot \MDS(G) + \beta$ on $G$, can be converted into an $(\alpha+\eps)$-approximation \LOCAL algorithm with round complexity $r + O(\beta/\eps)$.
\end{proposition}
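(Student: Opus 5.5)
The plan is a small-diameter brute-force switch, the same device the paper already uses in the proofs of \cref{prop:mds_uniformizable} and \MetaIII. Let $\sfA$ be the given $r$-round algorithm with $\card{\sfA(G)}\le \alpha\cdot\MDS(G)+\beta$, and set $B := \ceil{3\beta/\eps}+3$. The new algorithm $\sfB$ works on each connected component of $G$ separately.

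In parallel with running $\sfA$, every vertex $u$ collects its radius-$B$ ball. From this ball $u$ can decide whether its whole component lies inside the ball: it only has to check that no vertex at distance exactly $B$ from $u$ exists, or equivalently that every collected vertex has all of its neighbours inside the ball.
\begin{itemize}
\item If the component lies inside the ball, then all vertices of that component see the same component, hence the same graph. They compute the same minimum dominating set, say the lexicographically smallest one by identifiers, and output it.
\item Otherwise, the component has diameter at least $B-1$ from $u$'s viewpoint. In this case $u$ outputs its decision from $\sfA$.
\end{itemize}
Vertices of one component always make the same choice. The reason is that if some vertex sees the whole component, then every vertex at distance less than $B$ from it sees it too. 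This needs care: it is cleaner to use the test ``component diameter $<B$'', evaluated from balls of radius $B$. Every vertex of a component of diameter less than $B$ sees the whole component and detects that it is complete. Every vertex of a component of diameter at least $B$ either sees it incompletely, or sees it completely and computes the diameter to be at least $B$. The output is a dominating set because each component is dominated, either optimally or by $\sfA$ (whose output restricted to a component still dominates it, since domination is local). The round complexity is $\max\{r,B\}\le r+O(\beta/\eps)$.

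The approximation analysis is done per component, using $\MDS(G)=\sum_H \MDS(H)$. The key step is the bound $\diam(H)\le 3\MDS(H)-1$ for a connected graph $H$. To prove it, take a minimum dominating set $D$ and a shortest path $P$ in $H$. Each $x\in D$ dominates at most $3$ vertices of $P$, since otherwise $P$ could be shortcut through $x$. Hence $\card{V(P)}\le 3\card{D}$. Now split into two cases.
\begin{itemize}
\item Small components, with $\diam(H)<B$, contribute exactly $\MDS(H)$.
\item Large components have $\MDS(H)\ge B/3\ge \beta/\eps$, so $\beta\le\eps\cdot\MDS(H)$.
\end{itemize}

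The main obstacle is that the additive guarantee of $\sfA$ is stated for the whole graph, not per component. I would resolve this by observing that the output of $\sfA$ on a component $H$ depends only on $H$. So $\sfA$ run on $H$ alone (as a graph in the class, with the same identifiers) gives $\card{\sfA(H)}\le\alpha\MDS(H)+\beta\le(\alpha+\eps)\MDS(H)$ for each large $H$. This assumes the class is closed under taking components and that $\sfA$ does not rely on knowledge of $n$ or on polynomial identifiers, consistent with the conventions of the paper. Summing over all components then gives $\card{\sfB(G)}\le(\alpha+\eps)\MDS(G)$.
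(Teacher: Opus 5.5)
Your algorithm is the paper's: brute-force each component of diameter below $B=\Theta(\beta/\eps)$, use $\diam(H)\le 3\MDS(H)$ to get $\beta\le\eps\cdot\MDS$ once a large component exists, and otherwise keep $\sfA$'s output. Your consistent test (``diameter $<B$'' decided from radius-$B$ balls) and your shortest-path bound $\card{V(P)}\le 3\card{D}$ are correct. Both are more careful than the paper's sketch.

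The step that does not go through as stated is how you handle the fact that the additive guarantee is global. You run $\sfA$ on a single component $H$ and use $\card{\sfA(H)}\le\alpha\MDS(H)+\beta$. This needs $H$ to be a legal input: the class must be closed under taking components, and the identifiers must be polynomial in $\card{V(H)}$. Neither is part of the statement. The second fails in general in the \LOCAL model, where identifiers are only polynomial in $\card{V(G)}$. It is also not a harmless convention here. The paper reuses this trick where $\sfA$ may need polynomial identifiers (\cref{prop:mds_uniformizable}, \MetaIII). In \cref{prop:mds_uniformizable} it is applied to $G''=G'\sqcup H'$, a disjoint union built precisely so that identifiers are polynomial in $\card{V(G'')}$, and a single component of $G''$ need not satisfy this.

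Your extra assumptions are unnecessary. For each small component $H$, the set $\sfA(G)\cap V(H)$ dominates $H$, since a vertex of $H$ can only be dominated from inside $H$. So $\card{\sfA(G)\cap V(H)}\ge\MDS(H)$, and replacing it by a minimum dominating set of $H$ never increases the output. Hence $\card{\sfB(G)}\le\card{\sfA(G)}\le\alpha\MDS(G)+\beta$, using only the guarantee of $\sfA$ on $G$ itself. If some component $H_0$ is large, then $\MDS(G)\ge\MDS(H_0)\ge\beta/\eps$ and the bound becomes $(\alpha+\eps)\MDS(G)$. If none is large, $\sfB$ is optimal. This monotonicity is exactly what the paper's one-line global bound implicitly relies on.
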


\begin{proof}
    Observe that the diameter of each connected component of the graph is at most $3\cdot\MDS(G)$. So, each vertex can check whether the diameter $D$ of its component is less than $B$, for $B = 3\beta/\eps$. This can be done by collecting its radius-$B$ neighborhood in $B = O(\beta/\eps)$ rounds. If true, the vertex locally brute forces an optimal solution for its component that, which it already knows about. If false, the vertex applies the approximate algorithm. In that case, $B\le D$ and $D \le 3 \MDS(G)$, which together imply $\beta \le \eps \cdot\MDS(G)$. Thus, the returned set has size at most $\alpha\cdot\MDS(G) + \beta \le (\alpha+\eps)\cdot\MDS(G)$.    
\end{proof}

\begin{proposition}\label{prop:nabla1}
    Every $n$-vertex graph $G$ of Euler genus $g$ satisfies $\nabla_1(G) < \sqrt{3g/2} + 3$. This bound is best possible up to an additive constant, for each $g\ge 0$.
\end{proposition}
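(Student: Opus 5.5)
The plan has two parts: an upper bound via Euler's formula for depth-$1$ minors, and a lower bound from a triangulation of complete graphs.

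\emph{Upper bound.} Recall that $\nabla_1(G)$ is the maximum of $|E(H)|/|V(H)|$ over depth-$1$ minors $H$ of $G$. Every minor of a graph of Euler genus $g$ again has Euler genus at most $g$, so it suffices to bound the edge density of a simple graph $H$ of Euler genus at most $g$ on $h$ vertices. Euler's formula gives $|E(H)| \le 3h - 6 + 3g$ for $h\ge 3$. There is also the trivial bound $|E(H)| \le \binom{h}{2}$. We then split on $h$. If $h \ge \sqrt{6g}$, then
\[
  \frac{|E(H)|}{h} \le 3 + \frac{3g}{h} \le 3 + \frac{3g}{\sqrt{6g}} = 3 + \sqrt{3g/2}.
\]
If $h < \sqrt{6g}$, then
\[
  \frac{|E(H)|}{h} \le \frac{h-1}{2} < \frac{\sqrt{6g}}{2} = \sqrt{3g/2}.
\]
The threshold $\sqrt{6g}$ is where $3g/h$ and $h/2$ balance. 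Getting the strict inequality needs a little care: the Euler bound has the $-6$ term, which makes it strict, and the cases $h\le 2$ and $g=0$ are trivial. This gives $\nabla_1(G) < \sqrt{3g/2}+3$.

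\emph{Tightness.} I would use $K_h$, whose Euler genus is $\lceil (h-3)(h-4)/6\rceil$ (the Map Color Theorem of Ringel–Youngs, with the single exception of $K_7$ in the nonorientable case). Its density is $(h-1)/2 \approx \sqrt{3g/2} + O(1)$ when $g \approx h^2/6$, and $K_h$ is a depth-$0$ minor of itself, so $\nabla_1 \ge \nabla_0 \ge \sqrt{3g/2}-O(1)$.

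For an arbitrary $g$, not just values of the form $\lceil (h-3)(h-4)/6\rceil$, I would take the largest $h$ with $\mathrm{eg}(K_h)\le g$. Consecutive genera differ by $O(h)=O(\sqrt g)$, so $\sqrt{6g}$ and $h$ differ by only $O(1)$. Adding handles or cross-caps, or disjoint small nonplanar components, then raises the Euler genus to exactly $g$ without affecting the density bound.

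The main obstacle is this tightness side. It needs the genus of $K_h$ as an external theorem, which is classical, and a clean argument that every $g$ is achieved within an additive constant. I expect the upper bound to be routine. To be precise, "best possible up to an additive constant" is used in the form: for each $g$ there is a graph $G$ of Euler genus $g$ with $\nabla_1(G)\ge\sqrt{3g/2}-c$ for an absolute constant $c$.
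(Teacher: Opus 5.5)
Your proof is correct. The upper bound is exactly the paper's argument: Euler's formula $|E(H)|<3|V(H)|+3g$ together with $|E(H)|\le\binom{|V(H)|}{2}$, split at $|V(H)|=\sqrt{6g}$.

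The tightness part differs a little.

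\begin{itemize}
\item \textbf{Choice of example.} The paper takes $K_m$ with every edge replaced by a path of length three. $K_m$ is a depth-$1$ minor of this graph (the branch sets are radius-$1$ stars around the original vertices). So the paper's example shows that $\nabla_1$ can be of order $\sqrt g$ even when every subgraph has edge density below $3/2$. Your $K_h$ is simpler and suffices for the statement as written, but it only exhibits a large $\nabla_0$.
\item \textbf{Coverage of $g$.} The paper only treats the values $g=\lceil (m-3)(m-4)/6\rceil$. Your choice of the largest $h$ with $\mathrm{eg}(K_h)\le g$ genuinely covers every $g$. That step is also simpler than you make it: $\mathrm{eg}(K_{h+1})>g$ means $(h-2)(h-3)>6g$. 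Hence $h-5/2>\sqrt{6g}$ and $(h-1)/2>\sqrt{3g/2}+3/4$, with no estimate on consecutive genera needed.
\item \textbf{Reaching Euler genus exactly $g$.} Do not speak of adding handles or cross-caps; those act on surfaces, not graphs. Instead add $g-\mathrm{eg}(K_h)$ disjoint copies of $K_5$ and use additivity of Euler genus over components (Theorem~4.4.3 of Mohar--Thomassen, which the paper also uses). $K_h$ remains a subgraph, so the lower bound on $\nabla_1$ is unaffected.
\item \textbf{The $K_7$ caveat.} $K_7$ is no exception for Euler genus. It embeds in the torus, whose Euler genus is $2=\lceil 4\cdot 3/6\rceil$.
\end{itemize}
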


\begin{proof}
    By definition, we have $\nabla_1(G) = \max_{H} \set{ \card{E(H)}/\card{V(H)} }$, where $H$ is a depth-1 minor. For every $H$, we have $\card{E(H)}/\card{V(H)} < 3g/\card{V(H)} + 3$ from Euler's formula. As $\card{E(H)} \le \binom{\card{V(H)}}{2}$, $\nabla_1(G) \le \max_{m\in [1,n]} \min\set{(m-1)/2, 3g/m + 3} < \sqrt{3g/2} + 3$.

    The latter inequality holds, because either $3g/m + 3 < \sqrt{3g/2} + 3$, and we are done, or $3g/m + 3 \ge \sqrt{3g/2} + 3$, which implies $m \le \sqrt{6g}$. In that case $(m-1)/2 \le (\sqrt{6g}\,)/2 - 1/2 < \sqrt{3g/2} + 3$ as wanted.

    Consider a graph $G$ composed of a clique $K_m$ with $m\ge 4$ and $m\neq 7$ where each edge is replaced by a path of length three. The graph $G$ has same Euler genus as $K_m$, which is~$g = \ceil{(m-3)(m-4)/6}$ (cf.~\cite[Theorem~4.4.5]{MT01}). Moreover, it has $K_m$ as depth-$1$ minor. Thus, $\nabla_1(G) \ge \card{E(K_m)}/m = (m-1)/2 \ge  \sqrt{3g/2 + 3/4}$.
    
    The latter inequality holds because, one can check that $(m-1)^2 \ge (m-3)(m-4) + 9$ for all $m\ge 4$. Since $(m-3)(m-4) + 9 \ge 6\ceil{(m-3)(m-4)/6} + 3 = 6g+3$, we have $(m-1)^2/4 \ge (6g+3)/4$, and thus $(m-1)/2 \ge  \sqrt{3g/2 + 3/4}$ as claimed.
\end{proof}

\end{document}